\newlength{\actualtopmargin}
\newlength{\actualsidemargin}
\theoremstyle{plain}
  \newtheorem{theorem}{Theorem}
  \newtheorem{lemma}[theorem]{Lemma}
  \newtheorem{corollary}[theorem]{Corollary}
  \newtheorem{proposition}[theorem]{Proposition}
  \newtheorem{claim}{Claim}
\theoremstyle{definition}
  \newtheorem{definition}[theorem]{Definition}
\theoremstyle{remark}
  \newtheorem*{remark}{Remark}
\theoremstyle{plain}
  \newtheorem*{theorem*}{Theorem}
  \newtheorem*{lemma*}{Lemma}
  \newtheorem*{corollary*}{Corollary}
  \newtheorem*{proposition*}{Proposition}
  \newtheorem*{claim*}{Claim}
\newenvironment{step}
  {
    \begin{enumerate}

  }
  {\end{enumerate}}
\newenvironment{algorithm*}[1]
  {
    \begin{center}
      \hrulefill\\
      \textbf{#1}
  }
  {
    \vspace{-1\baselineskip}
    \hrulefill
    \end{center}
  }
\newenvironment{protocol*}[1]
  {
    \begin{center}
      \hrulefill\\
      \textbf{#1}
  }
  {
    \vspace{-1\baselineskip}
    \hrulefill
    \end{center}
  }
\newlength{\itemwidth}
\newlength{\descriptionwidth}
\newenvironment{promiseproblem*}[4]
  {
    \begin{center}
      \hrulefill\\
      \textbf{\textsc{#1}}
      \settowidth{\itemwidth}{\textbf{Yes Instances:}}
      \setlength{\descriptionwidth}{\textwidth}
      \addtolength{\descriptionwidth}{-\itemwidth}
      \addtolength{\descriptionwidth}{-\labelsep}
      \begin{description}
        \item[\parbox{\itemwidth}{Input:}]
          \parbox[t]{\descriptionwidth}{#2}
        \item[\parbox{\itemwidth}{Yes Instances:}]
          \parbox[t]{\descriptionwidth}{#3}
        \item[\parbox{\itemwidth}{No Instances:}]
          \parbox[t]{\descriptionwidth}{#4}
      \end{description}
  }
  {
    \vspace{-1\baselineskip}
    \hrulefill
    \end{center}
  }
\newcommand{\bbC}{\mathbb{C}}
\newcommand{\bbN}{\mathbb{N}}
\newcommand{\bbZ}{\mathbb{Z}}
\newcommand{\bfC}{\mathbf{C}}
\newcommand{\bfD}{\mathbf{D}}
\newcommand{\bfL}{\mathbf{L}}
\newcommand{\bfT}{\mathbf{T}}
\newcommand{\bfU}{\mathbf{U}}
\newcommand{\calH}{\mathcal{H}}
\newcommand{\calK}{\mathcal{K}}
\newcommand{\calW}{\mathcal{W}}
\newcommand{\rmc}{\mathrm{c}}
\newcommand{\rmq}{\mathrm{q}}
\newcommand{\sfA}{\mathsf{A}}
\newcommand{\sfM}{\mathsf{M}}
\newcommand{\sfP}{\mathsf{P}}
\newcommand{\sfR}{\mathsf{R}}
\newcommand{\sfS}{\mathsf{S}}
\newcommand{\sfV}{\mathsf{V}}
\newcommand{\sfW}{\mathsf{W}}
\newcommand{\sfY}{\mathsf{Y}}
\newcommand{\sfZ}{\mathsf{Z}}
\newcommand{\classfont}{\mathrm}
\newcommand{\problemfont}{\textsc}
\newcommand{\BPP}{\classfont{BPP}}
\newcommand{\PP}{\classfont{PP}}
\newcommand{\PSPACE}{\classfont{PSPACE}}
\newcommand{\IP}{\classfont{IP}}
\newcommand{\QIP}{\classfont{QIP}}
\newcommand{\AM}{\classfont{AM}}
\newcommand{\QMA}{\classfont{QMA}}
\newcommand{\QCMA}{\classfont{QCMA}}
\newcommand{\MQA}{\classfont{MQA}}
\newcommand{\EPRQMA}[1]{\QMA^{#1\text{-}\EPR}}
\newcommand{\QAM}{\classfont{QAM}}
\newcommand{\GQAM}[1]{#1\textrm{-}\QAM}
\newcommand{\qqQAM}{\GQAM{\rmq\rmq}}
\newcommand{\qcQAM}{\GQAM{\rmq\rmc}}
\newcommand{\cqQAM}{\GQAM{\rmc\rmq}}
\newcommand{\ccQAM}{\GQAM{\rmc\rmc}}
\newcommand{\QMAM}{\classfont{QMAM}}
\newcommand{\QSZK}{\classfont{QSZK}}
\newcommand{\NIQSZK}{\classfont{NIQSZK}}
\newcommand{\BP}{\classfont{BP}}
\newcommand{\textEPRQMA}[1]{\textrm{QMA}^{#1\text{-}\EPR}}
\newcommand{\CITM}{\problemfont{CITM}}
\newcommand{\MaxOutQEA}{\problemfont{MaxOutQEA}}
\newcommand{\QSCTM}{\problemfont{QSCTM}}
\newcommand{\QEA}{\problemfont{QEA}}
\newcommand{\expectation}[1]{\mathbf{E}[ #1 ]}
\newcommand{\SD}{\mathrm{SD}}
\newcommand{\relent}[2]{D( #1 \mathbin{\Vert} #2 )}
\newcommand{\bra}[1]{\langle #1 \rvert}
\newcommand{\ket}[1]{\lvert #1 \rangle}
\newcommand{\ketbra}[1]{\lvert #1 \rangle \langle #1 \rvert}
\newcommand{\braket}[2]{\langle #1 \vert #2 \rangle}
\newcommand{\conjugate}[1]{{#1^\dagger}}
\newcommand{\tr}{\operatorname{tr}}
\newcommand{\tensor}{\otimes}
\newcommand{\norm}[1]{\lVert #1 \rVert}
\newcommand{\bignorm}[1]{\bigl\lVert #1 \bigr\rVert}
\newcommand{\trnorm}[1]{\lVert #1 \Vert_{\tr}}
\newcommand{\bigtrnorm}[1]{\bigl\lVert #1 \bigr\rVert_{\tr}}
\newcommand{\dnorm}[1]{\lVert #1 \rVert_{\diamond}}
\newcommand{\abs}[1]{\lvert #1 \rvert}
\newcommand{\bigabs}[1]{\bigl\lvert #1 \bigr\rvert}
\newcommand{\ceil}[1]{\lceil #1 \rceil}
\newcommand{\bigceil}[1]{\bigl\lceil #1 \bigr\rceil}
\newcommand{\floor}[1]{\lfloor #1 \rfloor}
\newcommand{\function}[3]{{#1 \colon #2 \to #3}}
\newcommand{\set}[2]{{\{ #1 \colon #2 \}}}
\newcommand{\bigset}[2]{{\bigl\{ #1 \colon #2 \bigr\}}}
\newcommand{\Complex}{\bbC}
\newcommand{\Natural}{\bbN}
\newcommand{\Integers}{\bbZ}
\newcommand{\Nonnegative}{{\Integers^+}}
\newcommand{\Binary}{{\{ 0, 1 \}}}
\newcommand{\Linear}{\bfL}
\newcommand{\Density}{\bfD}
\newcommand{\Unitary}{\bfU}
\newcommand{\Transform}{\bfT}
\newcommand{\Channel}{\bfC}
\newcommand{\poly}{\mathrm{poly}}
\newcommand{\const}{\mathrm{const}}
\newcommand{\xor}{\oplus}
\newcommand{\textmax}{\mathrm{max}}
\newcommand{\textmin}{\mathrm{min}}
\newcommand{\legal}{\mathrm{legal}}
\newcommand{\acc}{\mathrm{acc}}
\newcommand{\inp}{\mathrm{in}}
\newcommand{\out}{\mathrm{out}}
\newcommand{\yes}{\mathrm{yes}}
\newcommand{\no}{\mathrm{no}}
\newcommand{\all}{\mathrm{all}}
\newcommand{\EPR}{\mathrm{EPR}}
\newcommand{\ignore}[1]{}
\begin{document}

\sloppy


\title{\Large
  \textbf{
    Generalized Quantum Arthur-Merlin Games
  }
}

\author{
  Hirotada Kobayashi\footnotemark[1]\\
  \and
  Fran\c{c}ois Le Gall\footnotemark[2]\\
  \and
  Harumichi Nishimura\footnotemark[3]
}

\date{}

\maketitle
\thispagestyle{empty}
\pagestyle{plain}
\setcounter{page}{0}

\renewcommand{\thefootnote}{\fnsymbol{footnote}}

\vspace{-5mm}

\begin{center}
{\large
  \footnotemark[1]%
  Principles of Informatics Research Division\\
  National Institute of Informatics\\
  Tokyo, Japan\\
  [2.5mm]
  \footnotemark[2]%
  Department of Computer Science\\
  Graduate School of Information Science and Technology\\
  The University of Tokyo\\
  Tokyo, Japan\\
  [2.5mm]
  \footnotemark[3]%
  Department of Computer Science and Mathematical Informatics\\
  Graduate School of Information Science\\
  Nagoya University \\
  Nagoya, Aichi, Japan
}\\
[5mm]
{\large 12 September 2014}\\
[8mm]
\end{center}

\renewcommand{\thefootnote}{\arabic{footnote}}


\begin{abstract}
This paper investigates the role of interaction and coins in \emph{public-coin quantum interactive proof systems}
(also called \emph{quantum Arthur-Merlin games}).
While prior works focused on classical public coins even in the quantum setting,
the present work introduces a generalized version of quantum Arthur-Merlin games
where the public coins can be quantum as well:
the verifier can send not only random bits, but also halves of EPR pairs.
This generalization turns out to provide several novel characterizations
of constant-turn interactive proof systems.
First, it is proved that the class of two-turn quantum Arthur-Merlin games
with quantum public coins, denoted $\qqQAM$ in this paper,
does not change by adding a constant number of turns of classical interactions
prior to the communications of the $\mathrm{qq}$-QAM proof systems.
This can be viewed as a quantum analogue of the celebrated collapse theorem for $\AM$ due to Babai.
To prove this collapse theorem,
this paper provides a natural complete problem for $\qqQAM$:
deciding whether the output of a given quantum circuit is close to a totally mixed state.
This complete problem is on the very line of the previous studies investigating the hardness of checking the properties related to quantum circuits,
and is of independent interest.
It is further proved that the class~$\qqQAM_1$ of two-turn quantum-public-coin quantum Arthur-Merlin proof systems with perfect completeness
gives new bounds for standard well-studied classes of two-turn interactive proof systems.
Finally, the collapse theorem above is extended to comprehensively classify
the role of interaction and public coins in quantum Arthur-Merlin games:
it is proved that, for any constant~${m \geq 2}$, the class of problems
having an $m$-turn quantum Arthur-Merlin proof system is either equal to $\PSPACE$
or equal to the class of problems having a two-turn quantum Arthur-Merlin game of a specific type,
which provides a complete set of quantum analogues of Babai's collapse theorem.
\end{abstract}

\clearpage



\section{Introduction}
\label{Section: introduction}

\paragraph{Background and motivation.}

Interactive proof systems~\cite{GolMicRac89SIComp, Bab85STOC}
play a central role in computational complexity
and has many applications such as probabilistic checkable proofs and zero-knowledge proofs.
The aim of such a system is the verification of an assertion
(e.g., verifying if an input is in a language)
by a party implementing a polynomial-time probabilistic computation,
called the verifier, interacting with another party with unlimited power,
called the prover, in polynomially many turns.
Two definitions are given on the secrecy of the coin which 
the verifier can flip:
Goldwasser, Micali, and Rackoff~\cite{GolMicRac89SIComp}
defined private-coin proof systems, where the prover cannot see the outcomes of coin flips,
while Babai~\cite{Bab85STOC} defined public-coin proof systems,
where the prover can see all the outcomes of coin flips.
Public-coin interactive proof systems are often called Arthur-Merlin games or Arthur-Merlin proof systems,
since in Ref.~\cite{Bab85STOC} the verifier was called Arthur
and the prover was called Merlin.

It is natural to expect that the power of interactive proof systems
depends on the number of interaction turns.
However, Babai~\cite{Bab85STOC} showed that as long as the number of turns is a constant at least two, 
the number of turns does not affect the power of Arthur-Merlin proof systems,
i.e., ${\AM(m) = \AM(2)}$ for any constant~${m \geq 2}$ (the \emph{collapse theorem}),
where ${\AM(m)}$ is the class of problems having an $m$-turn Arthur-Merlin proof system.
Goldwasser~and~Sipser~\cite{GolSip89ACR} then showed that 
a private-coin interactive proof system can be simulated
by an Arthur-Merlin proof system by adding two turns,
and thus, these two types of interactive proof systems are computationally equivalent.
By the above results, the class of problems having an interactive proof system
of a constant number of turns is equal to ${\AM(2)}$
(regardless of definitions with public coins or private coins),
and this class is nowadays called $\AM$.
The class~$\AM$ is believed to be much smaller than $\PSPACE$,
as it is contained in $\Pi_2^p$ in the second-level polynomial hierarchy~\cite{Lau83IPL, Bab85STOC}.
On the contrary, the class of problems having a more general interactive proof system of polynomially many turns,
called $\IP$, does coincide with $\PSPACE$~\cite{Pap85JCSS, LunForKarNis92JACM, Sha92JACM}
(again regardless of definitions with public coins or private coins~\cite{GolSip89ACR, She92JACM}).

Quantum interactive proof systems were introduced by Watrous~\cite{Wat03TCS},
and the class of problems having a quantum interactive proof system is called $\QIP$.
In the quantum world, the importance of the number of turns in interactive proof systems is drastically changed.
The first paper on quantum interactive proofs~\cite{Wat03TCS}
already proved the surprising power of constant-turn quantum interactive proof systems,
by showing that any problem in $\PSPACE$ has a three-turn quantum interactive proof system.
Kitaev~and~Watrous~\cite{KitWat00STOC} then proved that any quantum interactive proof system 
can be simulated by a three-turn quantum interactive proof system,
namely, ${\QIP = \QIP(3)}$,
where ${\QIP(m)}$ denotes the class of problems having an $m$-turn quantum interactive proof system.
Finally, the recent result~${\QIP = \PSPACE}$ by Jain,~Ji,~Upadhyay,~and~Watrous~\cite{JaiJiUpaWat11JACM}
completely characterized the computational power of
quantum interactive proof systems with three turns or more.
In contrast, despite of a number of intensive studies~\cite{Wat02FOCS, Weh06STACS, JaiUpaWat09FOCS, HayMilWil13CCC},
still very little is known on the class~$\QIP(2)$
corresponding to \emph{two-turn} quantum interactive proof systems,
and characterizing the computational power of two-turn quantum interactive proof systems
is one of the main open problems in this field.

A public-coin version of quantum interactive proof systems
was first introduced by Marriott~and~Watrous~\cite{MarWat05CC},
named quantum Arthur-Merlin proof systems,
where the messages from the verifier are restricted
to classical strings consisting only of outcomes of polynomially many attempts of a fair coin flip.
They then showed that three-turn quantum Arthur-Merlin proof systems
can simulate three-turn standard quantum interactive proof systems,
and hence the corresponding class, denoted $\QMAM$, coincides with ${\QIP = \PSPACE}$.
They also investigated the case of two-turn quantum Arthur-Merlin proof systems
and showed that the corresponding class, denoted $\QAM$, is included in ${\BP \cdot \PP}$,
a subclass of $\PSPACE$ obtained by applying the $\BP$~operator to the class~$\PP$,
which is still the only nontrivial upper bound known for $\QAM$.

\paragraph{Results and their meanings.}

This paper introduces a ``quantum public-coin'' version of quantum interactive proof systems,
which generalizes quantum Arthur-Merlin proof systems in Ref.~\cite{MarWat05CC}.
In this generalized model, the verifier can send quantum messages, but these messages can be only used for sharing EPR pairs with the prover,
i.e., the verifier at his/her turn first generates polynomially many EPR pairs
and then sends one half of each of them to the prover.
The main interest in this model is again on the two-turn case,
as allowing three or more turns in this model obviously hits the $\PSPACE$ ceiling.
Let $\qqQAM$ be the class of problems
having a two-turn ``quantum public-coin'' interactive proof system
in which the first message from the verifier consists only of polynomially many halves of EPR pairs.
Note that the only difference from the existing class~$\QAM$ 
lies in the type of the message from the verifier:
uniform random classical bits are replaced by halves of EPR pairs,
which can be thought as a natural quantum version of classical public coins.
The main goal of this paper is to investigate the computational power of this class~$\qqQAM$
in order to figure out the advantages offered by quantum public-coins,
and more generally, to make a step forward in the understanding of
two-turn quantum interactive proof systems.

While the class~$\qqQAM$ is the main target of investigation,
this paper further studies the power of various models of quantum Arthur-Merlin proofs with quantum/classical public coins.
For any constant~${m \geq 1}$ and any ${t_1, \ldots, t_m}$ in ${\{\rmc, \rmq\}}$,
let ${\GQAM{t_m \cdots t_1}(m)}$ be the class of problems
that have an $m$-turn quantum interactive proof system with the following restrictions:
\begin{itemize}
\item
  For any odd~$j$, ${1 \leq j \leq m}$,
  the ${(m-j+1)}$st message (or the $j$th message counting from the last),
  which is the message from the prover sent at the ${(m-j+1)}$st turn,
  is a quantum message if ${t_j = \rmq}$,
  and is restricted to a classical message if ${t_j = \rmc}$.
\item
  For any even~$j$, ${1 \leq j \leq m}$,
  at the ${(m-j+1)}$st turn, which is a turn for the verifier,
  the verifier first generates polynomially many EPR pairs and then sends halves of them if ${t_j = \rmq}$,
  while the verifier flips a fair coin polynomially many times and then sends their outcomes if ${t_j = \rmc}$.
\end{itemize}
The class~${\GQAM{t_m\cdots t_1}(m)}$ may be simply written as $\GQAM{t_m \cdots t_1}$
when there is no ambiguity in the number of turns: 
for instance, ${\qqQAM(2)}$ may be abbreviated to $\qqQAM$.
Note that the classes~$\QAM$~and~$\QMAM$ defined in Ref.~\cite{MarWat05CC}
are exactly the classes~$\cqQAM$ and $\GQAM{\mathrm{qcq}}$, respectively.
The class~$\ccQAM$ corresponds to two-turn public-coin quantum interactive proofs with classical communications:
the verifier sends a question consisting only of outcomes of polynomially many 
attempts of a fair coin flip,
then the prover responds with polynomially many classical bits,
and the final verification is done by the verifier via polynomial-time quantum computation.
By definition,
${\AM \subseteq \ccQAM \subseteq \cqQAM \subseteq \qqQAM \subseteq \QIP(2)}$.

As mentioned above, the main target in this paper is the class $\qqQAM$.
First, it is proved that the power of $\mathrm{qq}$-QAM proof systems
does not change by adding a constant number of turns of classical interactions
prior to the communications of the $\mathrm{qq}$-QAM proof systems.

\begin{theorem}
For any constant~${m \geq 2}$,
${
\mathrm{c \cdots c}\qqQAM(m) = \qqQAM
}$. 
\label{Theorem: c...cqq-QAM(m) = qq-QAM}
\end{theorem}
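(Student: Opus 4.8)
The two inclusions have very different characters. The containment $\qqQAM \subseteq \mathrm{c\cdots c}\qqQAM(m)$ is immediate: on each of the $m-2$ extra classical turns the verifier can send empty coins and ignore the prover's classical replies, so a bare two-turn $\mathrm{qq}$-protocol is a special case. The entire content is therefore the reverse inclusion $\mathrm{c\cdots c}\qqQAM(m)\subseteq\qqQAM$, and my plan is to imitate Babai's collapse by peeling off the classical turns one interface at a time until only the final $\mathrm{qq}$ round survives. I would argue by induction on the number of leading classical turns, the inductive step being a sequence of local \emph{swap} and \emph{merge} transformations of the protocol, each provably preserving the decided language.

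The local moves I would establish are: (i) a classical swap $\mathrm{MA}\to\mathrm{AM}$, exactly as in the public-coin setting, used to normalise the classical prefix; (ii) a \emph{cross swap} exchanging a classical prover turn with the verifier's EPR turn, turning $\ldots\,[\text{prover sends classical }w]\,[\text{verifier sends EPR halves}]\ldots$ into $\ldots\,[\text{verifier sends EPR halves}]\,[\text{prover sends }w]\ldots$; (iii) the observation that EPR halves can simulate classical public coins, so an adjacent classical verifier turn can be folded into the EPR turn; and (iv) the trivial merge of a classical prover message into the following quantum prover message. For (iii), the verifier measures its retained halves in the computational basis to obtain uniform bits $r$, which the honest prover recovers by measuring the halves it received; a cheating prover can only know $r$ or less, and knowing $r$ is already the worst case for public-coin soundness, so soundness is preserved while completeness is obvious, and the combined verifier message still consists only of EPR halves as the definition of $\qqQAM$ demands. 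Chaining these, the classical prefix collapses by (i) to a single $\mathrm{A}(\mathrm{c})\,\mathrm{M}(\mathrm{c})$ block ending just before the EPR turn; the cross swap (ii) moves that classical prover message past the EPR turn; (iii) folds the leftover classical coins into the EPR message; and (iv) fuses the classical prover message into the final quantum message, leaving exactly $\mathrm{A}(\mathrm{q})\,\mathrm{M}(\mathrm{q})=\qqQAM$.

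The crux is that swaps (i) and (ii) are, as in the classical proof, valid only after the completeness--soundness gap has been amplified below $2^{-\poly}$: their soundness rests on a union bound over the at most $2^{\poly}$ classical prover messages $w$, since after the swap $w$ may depend on the verifier's freshly revealed information, and only negligible per-message error survives the union bound; completeness is safe because the honest $w$ may be chosen independent of the EPR halves, which carry no information about the input. Classical amplification of the prefix is routine parallel repetition, but amplifying the quantum $\mathrm{qq}$ round to exponentially small error is the genuine obstacle, and this is exactly where I would invoke the $\qqQAM$-complete problem of deciding whether a circuit's output is close to the totally mixed state: its clean tensor-power structure (closeness to the totally mixed state is preserved and sharpened under many independent copies) yields the required error reduction for $\qqQAM$, and hence for the whole normalised protocol. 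Assembling this amplification with the four local moves then gives $\mathrm{c\cdots c}\qqQAM(m)\subseteq\qqQAM$ and, with the trivial inclusion, the claimed equality.
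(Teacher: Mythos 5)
Your plan hinges entirely on the cross swap (ii), and that is exactly where it has a genuine gap. You justify its soundness "as in the public-coin setting" by a union bound over the $2^{l}$ classical messages $w$, but the classical reasoning does not transfer: after the swap the prover receives the EPR halves \emph{before} announcing $w$, and by measuring them it can correlate $w$ with the verifier's retained halves, i.e.\ it can \emph{steer} the residual game away from fresh EPR pairs. Concretely, take the $\mathrm{cqq}$ protocol in which the verifier sends halves of $n$ EPR pairs and accepts iff measuring its retained halves in the computational basis yields exactly the string $w$ announced at the first turn: every fixed $w$ has value $2^{-n}$ (no-signaling), yet in the swapped protocol the prover measures the received halves, obtains $b$, announces $w=b$, and wins with probability $1$. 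So exponentially small per-message error with messages of length $n$ is compatible with swapped value $1$; the quantity your union bound needs is not the value of the original fixed-$w$ game but the value of a steered, conditioned strategy, and nothing in your argument relates the two. This is precisely the failure of Babai's argument that the paper flags, and it is why the paper never swaps a classical prover turn past an EPR turn: instead it proves ${\mathrm{c}\qqQAM \subseteq \qqQAM}$ by reducing to the complete problem $\CITM$, feeding $w$ into the $\exists$-quantified input register of the circuit (with a measurement forcing classicality) and replacing the union bound by an entropy argument --- the $l$ bits of $w$ can raise the output entropy by at most $l$ (Lemmas~\ref{Lemma: NC00Thm11.10} and~\ref{Lemma: trace distance and entropy}), which is swamped by the entropy deficit of each fixed-$w$ branch. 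Your swap could in principle be rescued, but only with two ideas absent from the proposal: a genuinely quantum per-message bound (completing the sub-normalized strategy of a prover conditioned on announcing $w$ to a legal strategy of the fixed-$w$ game, so that the probability of announcing $w$ and accepting is at most that game's value), and amplification of the residual $\mathrm{qq}$ round \emph{for each fixed $w$ with the length of $w$ unchanged} --- repeating the whole protocol lengthens $w$ proportionally, so the union bound over $2^{|w|}$ messages is never won.

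The amplification mechanism you do propose is also flawed. Tensoring $\CITM$ instances sharpens only the no-side; by Lemma~\ref{Lemma: polarization of minimum output trace distance} (and Fuchs--van de Graaf) a yes-instance at constant distance $a$ from the totally mixed state has $k$-fold tensor power at distance at least ${1 - (1-a^2)^{k/2}}$, which tends to $1$: far from being "preserved under many independent copies," closeness is destroyed unless $a$ is already $o(1/k)$. Hence $\CITM$ tensor powers cannot reduce the error of a generic $\qqQAM$ protocol from constant; exponentially small error must come from protocol-level repetition (threshold repetition plus the perfect parallel repetition theorem, i.e.\ the paper's Lemma~\ref{Lemma: amplification}), after which tensoring is useful --- exactly the order in which the paper applies these tools. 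Your moves (i), (iii) and (iv) are sound, and (iii) is used verbatim in the paper's proof of Lemma~\ref{Lemma: ccqq-QAM is in qq-QAM}; but as written, the proof of the central inclusion ${\mathrm{c}\qqQAM \subseteq \qqQAM}$ is missing its key step.
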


In stark contrast to this,
as mentioned before
and will be stated clearly in Theorem~\ref{Theorem: complete classifications},
adding one turn of prior quantum interaction gives the $\mathrm{qq}$-QAM proof systems
the full power of quantum interactive proof systems
(i.e., the resulting class is $\PSPACE$).
Hence, Theorem~\ref{Theorem: c...cqq-QAM(m) = qq-QAM} may be viewed as a quantum analogue of
Babai's collapse theorem~\cite{Bab85STOC} for the class~$\qqQAM$.

The proof of Theorem~\ref{Theorem: c...cqq-QAM(m) = qq-QAM} comes in three parts:
The first part proves that,
for any constant~${m \geq 4}$,
${\mathrm{c \cdots c}\qqQAM(m)}$ is necessarily included in ${\mathrm{cc}\qqQAM}$.
The second part proves that
${\mathrm{c}\qqQAM}$ is included in $\qqQAM$.
Finally, the third part proves that
${\mathrm{cc}\qqQAM}$ is included in $\qqQAM$,
by using the containment proved in the second part.

The first part is proved by carefully extending the argument in Babai's collapse theorem.
The core idea of Babai's proof is that,
by a probabilistic argument applied to a parallel repetition of the original proof system,
the order of the verifier and the prover in the first three turns
of the original system can be switched,
which results in another proof system that has fewer number of turns.
When proving the first part
the messages of the first three turns of the original $m$-turn QAM proof system are classical,
and thus, the argument in Babai's collapse theorem still works.

The proof of the second part is one of the highlights in this paper.
The main difficulty in proving this part (and the third part) is that
the argument used in Babai's collapse theorem fails
when any of the first three turns is quantum in the starting proof system.

To overcome this difficulty,
this paper first provides a natural complete promise problem for $\qqQAM$,
namely,
the \problemfont{Close Image to Totally Mixed} ($\CITM$) problem,
which asks to check if the image of a given quantum circuit
can be close to a totally mixed state, formally defined as follows.

\begin{promiseproblem*}
  {Close Image to Totally Mixed Problem: $\boldsymbol{\CITM(a,b)}$}
  {
    A description of a quantum circuit~$Q$ acting on $q_\all$~qubits
    that has $q_\inp$~specified input qubits and $q_\out$~specified output qubits.
  }
  {
    There exists a quantum state~$\rho$ of $q_\inp$~qubits such that
    ${D(Q(\rho), (I/2)^{\tensor q_\out}) \leq a}$.
  }
  {
    For any quantum state~$\rho$ of $q_\inp$~qubits,
    ${D(Q(\rho), (I/2)^{\tensor q_\out}) \geq b}$.
  }
\end{promiseproblem*}
Here, ${D(\cdot, \cdot)}$ denotes the trace distance,
${Q(\rho)}$ is the $q_\out$-qubit output state of $Q$ when the input state was $\rho$
(i.e., the reduced state obtained by tracing out
the space corresponding to the ${(q_\all - q_\out)}$~non-output qubits
after applying $Q$ to ${\rho \tensor (\ketbra{0})^{\tensor (q_\all - q_\inp)}}$),
and $I$ is the identity operator of dimension two
(and thus, ${(I/2)^{\tensor q_\out}}$ corresponds to the totally mixed state of $q_\out$~qubits).
The following completeness result is proved.

\begin{theorem}
For any constants~$a$~and~$b$ in ${(0,1)}$ such that ${(1 - a)^2 > 1 - b^2}$,
${\CITM(a,b)}$ is $\qqQAM$-complete under polynomial-time many-one reduction.
\label{Theorem: qq-QAM-completeness of CITM}
\end{theorem}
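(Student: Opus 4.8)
The plan is to prove both required inclusions: that $\CITM(a,b)$ admits a two-turn quantum-public-coin protocol (membership in $\qqQAM$), and that every problem in $\qqQAM$ reduces to it (hardness). The unifying device is a characterization of the optimal acceptance probability of a $\mathrm{qq}$-QAM game as a \emph{maximum output fidelity with the totally mixed state}: for the circuits we build, the quantity $\max_\rho F\bigl(Q(\rho),(I/2)^{\tensor q_\out}\bigr)^2$ will equal the game value, and this is precisely what converts the Fuchs--van de Graaf inequalities $1-F\le D\le\sqrt{1-F^2}$ into the stated gap condition $(1-a)^2>1-b^2$.

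For $\CITM(a,b)\in\qqQAM$ I would use the following protocol. The verifier creates $q_\out$ EPR pairs $(R,S)$, keeps $R$, and sends the halves $S$ to the prover; the totally mixed marginal on $S$ is the target that the image of $Q$ must match. The prover returns the $(q_\all-q_\out)$-qubit complement register $G_0$; the verifier then places $R$ in the output slot and $G_0$ in the remaining slots of a $q_\all$-qubit register, applies $Q^\dagger$, and accepts iff the $(q_\all-q_\inp)$ would-be-ancilla qubits are all $\ket0$. Since $R$ is an EPR half, its marginal is automatically $(I/2)^{\tensor q_\out}$, while the $Q^\dagger$-and-ancilla check projects onto the subspace of genuine outputs of $Q$; thus the test accepts exactly when the prover can exhibit a valid output of $Q$ whose output-slot marginal is totally mixed. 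In a yes-instance, fixing $\rho$ with $D\le a$ gives $F\ge 1-a$, and Uhlmann's theorem lets the prover prepare $G_0$ from $S$ (exploiting the EPR correlation) so that acceptance is at least $F^2\ge(1-a)^2$; in a no-instance every $Q(\rho)$ satisfies $F\le\sqrt{1-b^2}$, bounding acceptance by $1-b^2$. The hypothesis yields a gap, and standard parallel amplification then places the problem in $\qqQAM$.

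The technical heart, and the step I expect to be the main obstacle, is proving that this acceptance probability equals $\max_\rho F\bigl(Q(\rho),(I/2)^{\tensor q_\out}\bigr)^2$ \emph{exactly}, rather than something measuring the invertibility of $Q$. I would first record that, because the verifier's only message is EPR halves, a prover strategy is equivalent to choosing \emph{any} bipartite state $\xi$ on $(R,G_0)$ with $\tr_{G_0}\xi=(I/2)^{\tensor q_\out}$, so the game value is the semidefinite quantity $\max\{\tr[\Pi_{\mathcal{V}}\xi]:\tr_{G_0}\xi=(I/2)^{\tensor q_\out}\}$, where $\Pi_{\mathcal{V}}$ projects onto the image subspace $\mathcal{V}$ of $Q$. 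The delicate point is that projecting the output slot onto a \emph{fixed} maximally entangled state would instead test invertibility of $Q$; here the maximal mixedness of $R$ is supplied by the EPR pair rather than tested, and one must show the SDP equals $\max_\rho F^2$. I would prove the lower bound by building a prover from the Uhlmann dilation of a near-optimal $\rho$, and the matching upper bound by arguing that any feasible $\xi$ is dominated by states supported on $\mathcal{V}$, whose output-slot marginals lie in the convex closed set $\{Q(\rho)\}$, so that no mixing can push the marginal closer to totally mixed than the best single input permits.

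For $\qqQAM$-hardness I would reverse the construction. Given $L\in\qqQAM$ with verifier unitary $V$ acting on the EPR register $R$ (of $k$ qubits), the prover's message $M$, and private ancilla $Z$, first amplify completeness and soundness to $1-\epsilon$ and $\epsilon$. Then build a circuit $Q$ on $k+|M|+|Z|$ qubits whose designated output slot is $R$ and whose image subspace $\mathcal{V}$ is the coherent acceptance subspace of $V$: dilate the acceptance operator $\langle 0|_Z V^\dagger(\ketbra{1})V\ket{0}_Z$ into a projector by keeping $Z$ as internal $\ket0$-ancilla and treating $M$ together with $Z$ as the complement register, so that a genuine output of $Q$ corresponds exactly to a message making $V$ accept with $R$ maximally mixed. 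The SDP identity above then gives $\max_\rho F\bigl(Q(\rho),(I/2)^{\tensor q_\out}\bigr)^2=p^{\ast}(x)$, the optimal acceptance probability; amplification drives this near $1$ on yes-instances and near $0$ on no-instances, and Fuchs--van de Graaf translates these into $\min_\rho D\le a$ and $\min_\rho D\ge b$, yielding the many-one reduction. The subtlety to watch is that the correlated optimal strategy of the original game must be faithfully recreated by the single existential quantifier of $\CITM$, which is exactly what identifying $R$ with the output slot and $M$ with the garbage, under the maximally-mixed-marginal constraint, accomplishes.
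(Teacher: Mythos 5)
Your membership direction ($\CITM(a,b)\in\qqQAM$) is essentially the paper's own argument: the same protocol (verifier keeps EPR halves, prover returns the complementary register, verifier applies $Q^\dagger$ and checks the ancilla qubits), the same Uhlmann-based completeness bound $(1-a)^2$, and the same Fuchs--van de Graaf soundness bound $1-b^2$; the ``exact SDP identity'' you single out as the technical heart is exactly what those two inequalities jointly establish, so this half is fine. The hardness direction, however, takes a different route from the paper, and it contains a genuine gap: the claimed identity $\max_\rho F\bigl(Q(\rho),(I/2)^{\tensor k}\bigr)^2=p^\ast(x)$ fails for the coherent ``image subspace = acceptance subspace'' circuits you describe. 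Here is a concrete counterexample. Let the verifier accept iff the register~$R$ it kept is in state $\ket{0^k}$ (ignoring the message $M$); every prover is then accepted with probability exactly $2^{-k}$, so $p^\ast(x)=2^{-k}$. Now take your circuit $Q$ with input $(R,M)$, the verification dilated coherently, and $R$ as the output slot, and feed it the witness $\ket{\psi}=\ket{\Phi^+}^{\tensor k}$ (maximally entangled across $R$ and $M$). The verification unitary only dephases $R$ in the $\{\ket{0^k}\}$-versus-rest decomposition while writing the decision into a qubit that is traced out (or, in the Naimark variant, checked projectively), and the witness's $R$-marginal is already diagonal and totally mixed; hence $Q(\ketbra{\psi})=(I/2)^{\tensor k}$ exactly. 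The accept branch contributes $2^{-k}\ketbra{0^k}$ and the reject branch contributes the complementary mass on $z\neq 0^k$, and the two branches conspire to reconstitute the totally mixed state even though the acceptance probability is $2^{-k}$. So $Q$ is a yes-instance of $\CITM$ while $x$ must map to a no-instance, and soundness of the reduction is broken. (If you instead fold the decision qubit into the ancilla-zero check, the same verifier breaks the game-value identity: the garbage registers, including the decision copy, are prover-controlled in the induced two-turn game, and classically correlating that copy with $R$ achieves value $1$ against game value $2^{-k}$.)

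The missing ideas are precisely the two mechanisms in the paper's reduction. First, the constructed circuit must actually \emph{measure} the decision qubit and then output a \emph{freshly prepared} state conditioned on the outcome --- the totally mixed state on accept, and $\ket{0}^{\tensor k}$ on reject --- so that the reject branch pushes the output \emph{away} from the totally mixed state instead of supplying exactly the complementary mass, as it does in your construction. Second, since the $\CITM$ witness is an arbitrary input state, nothing forces the part of it playing the role of $R$ to have totally mixed marginal; the paper enforces this by having the circuit output the witness's $R$ register \emph{unchanged} with probability $1/2$, so that a skewed $R$-marginal is itself detected. Soundness then splits into two cases (marginal far from mixed: caught by the identity branch; marginal close to mixed: the witness is close to a legal protocol state, so the original soundness applies), followed by parallel amplification via Lemma~\ref{Lemma: polarization of minimum output trace distance}. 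Your sentence asserting that ``identifying $R$ with the output slot and $M$ with the garbage, under the maximally-mixed-marginal constraint'' faithfully recreates the game is exactly where the argument breaks: that marginal constraint exists in the two-turn \emph{verification protocol} for $\CITM$, not in the $\CITM$ \emph{problem} whose witness is the circuit input, and transferring it from the former to the latter is what requires the testing branch you omit.
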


Then the core idea for proving the second part is
to use the structure of this complete problem that
yes-instances are witnessed by the existence of a quantum state
(i.e., the $\exists$ quantifier appears in the first place),
while no such witness quantum state exists for no-instances
(i.e., the $\forall$ quantifier appears in the first place).
This makes it possible to incorporate the first turn of the $\mathrm{cqq}$-QAM system
into the input quantum state of the complete problem $\CITM$
(as the quantifier derived from the first turn of the $\mathrm{cqq}$-QAM system
matches the quantifier derived from the complete problem $\CITM$),
and thus, any problem in ${\mathrm{c}\qqQAM}$ can be reduced in polynomial time
to the $\CITM$ problem with appropriate parameters, which is in $\qqQAM$.

Actually, for the proof, 
whether the image of a constructed quantum circuit can be close to a totally mixed state
is partly evaluated by using the maximum output entropy of quantum channels, 
which shows implicitly the $\qqQAM$-completeness of another problem
that asks to check whether the maximum output entropy of a quantum channel is larger than a given value or not.
More formally, the following \problemfont{Maximum Output Quantum Entropy Approximation} ($\MaxOutQEA$) problem is also $\qqQAM$-complete.

\begin{promiseproblem*}
  {Maximum Output Quantum Entropy Approximation Problem: $\boldsymbol{\MaxOutQEA}$}
  {
    A description of a quantum circuit that specifies a quantum channel~$\Phi$,
    and a positive integer~$t$.
  }
  {
    ${S_\textmax(\Phi) \geq t + 1}$.
  }
  {
    ${S_\textmax(\Phi) \leq t - 1}$.
  }
\end{promiseproblem*}
Here, ${S_\textmax(\cdot)}$ denotes the maximum output von~Neumann entropy.
Namely, ${S_\textmax(\Phi) = \max_\rho S(\Phi(\rho))}$,
where ${S(\cdot)}$ denotes the von~Neumann entropy
and ${\Phi(\rho)}$ is the output quantum state of the quantum channel~$\Phi$
when the input quantum state to it was $\rho$.

\begin{theorem}
$\MaxOutQEA$ is $\qqQAM$-complete under polynomial-time many-one reduction.
\label{Theorem: qq-QAM-completeness of MaxOutQEA}
\end{theorem}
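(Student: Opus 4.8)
The plan is to prove the two inclusions separately, leaning on the fact (Theorem~\ref{Theorem: qq-QAM-completeness of CITM}) that $\CITM(a,b)$ is already $\qqQAM$-complete, so that it suffices to exhibit a polynomial-time many-one reduction from $\MaxOutQEA$ to $\CITM$ (for membership) and a reduction establishing $\qqQAM$-hardness of $\MaxOutQEA$ that parallels the hardness direction of Theorem~\ref{Theorem: qq-QAM-completeness of CITM} (for hardness). Two facts will do the heavy lifting throughout. First, maximum output entropy is additive under tensor products, $S_\textmax(\Phi^{\tensor k}) = k\,S_\textmax(\Phi)$: the lower bound is witnessed by product inputs and the upper bound follows from subadditivity of the von~Neumann entropy. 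This lets me amplify the fixed unit-sized promise gap of $\MaxOutQEA$ to any polynomially large gap, which is what will match the constant trace-distance gap of $\CITM$. Second, for a state $\sigma$ on $q_\out$ qubits one has $S(\sigma) = q_\out - \relent{\sigma}{(I/2)^{\tensor q_\out}}$, and quantum Pinsker together with the Fannes--Audenaert continuity bound converts between an entropy deficit and trace distance to the totally mixed state; I will have to be careful that these conversions are dimension-dependent, so that they become useful only after the flattening step described next.

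For membership I will build, from an input $(\Phi,t)$, a circuit $Q$ realizing $Q = (\text{extract}) \circ \Phi^{\tensor k}$ for a polynomially large $k$. The channel $\Phi^{\tensor k}$ is fed the existentially quantified input; by the quantum asymptotic equipartition property its output on the optimal product input is, up to trace distance $o(1)$, flat on a typical subspace of dimension $2^{k S_\textmax(\Phi) \pm O(\sqrt k)}$. The ``extract'' stage then applies a seeded quantum-proof randomness extractor (equivalently, a decoupling map), using auxiliary input qubits of $Q$ to carry the seed, to compress this near-flat output onto $m \approx kt$ output qubits. The point is that the existence of $\rho$ with $D(Q(\rho),(I/2)^{\tensor m}) \leq a$ should hold exactly when $S_\textmax(\Phi) \geq t+1$: in the yes-case the high, near-flat entropy lets the extractor output a state within $a$ of totally mixed, while in the no-case the output entropy is at most $k(t-1)$, so a leftover-hashing / min-entropy converse forbids any input or seed from bringing the compressed state within $b$ of totally mixed. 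Choosing $m$ strictly between $k(t-1)$ and $k(t+1)$ and checking the constraint $(1-a)^2 > 1 - b^2$ then places $\MaxOutQEA$ in $\qqQAM$ via $\CITM \in \qqQAM$.

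For hardness I will observe that the channel $\Phi$ appearing above is exactly the object whose maximum output entropy governs closeness to the totally mixed state, so the same correspondence, run against the circuit produced by the hardness reduction behind Theorem~\ref{Theorem: qq-QAM-completeness of CITM}, already yields a $\MaxOutQEA$ instance: that reduction constructs, from a two-turn quantum-public-coin verifier for a language $L \in \qqQAM$, a circuit of the form $(\text{extract}) \circ \Phi_V$, and its analysis certifies the existence of $\rho$ with $Q(\rho)$ close to totally mixed precisely when $S_\textmax(\Phi_V)$ is near-maximal, which in turn happens precisely for $L$-yes instances (after standard completeness/soundness amplification of the verifier and the EPR/Choi description of optimal prover strategies, under which the maximal acceptance is a maximization over all states with totally mixed verifier-marginal). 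Reading off $\Phi_V$ together with the threshold, and amplifying the resulting constant entropy gap to a unit gap by tensoring and additivity, should exhibit every $L \in \qqQAM$ reducing to $\MaxOutQEA$.

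The main obstacle is the soundness side of the flattening correspondence: I must show that \emph{low} maximum output entropy genuinely obstructs the prover, i.e.\ that no choice of input and no use of the extractor seed can fabricate a state close to totally mixed when the available entropy is below threshold. This calls for a tight one-shot decoupling / smooth-min-entropy converse, and the delicate part is quantitative: the per-copy entropy gap is only a constant, so after the polynomially many tensor copies I must ensure that the amplified gap dominates both the $O(\sqrt k)$ fluctuations from the asymptotic equipartition estimate and the entropy loss of the extractor, while keeping the final parameters $a,b$ inside the regime $(1-a)^2 > 1 - b^2$ required for $\CITM$. Getting the additivity-driven amplification, the typical-subspace flattening, and the extractor's loss to line up simultaneously is where essentially all the work will be.
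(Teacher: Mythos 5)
Your overall architecture coincides with the paper's own proof: for membership the paper likewise reduces $\MaxOutQEA$ to $\CITM$ by tensoring the channel and composing with a quantum extractor (modifying the Chailloux--Ciocan--Kerenidis--Vadhan reduction, with the Ben-Aroya--Schwartz--Ta-Shma analysis playing the role of your QAEP/decoupling step), and for hardness it likewise re-uses the circuit from the $\CITM$-hardness reduction and converts a trace-distance gap into an entropy gap. The genuine gap is in your hardness direction, at the step ``amplifying the resulting constant entropy gap to a unit gap by tensoring and additivity'': no constant entropy gap exists at that point. The conversion is dimension-dependent in exactly the way you flagged for membership but did not carry over to hardness. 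The relevant bound (Lemma~\ref{Lemma: trace distance and entropy}) is ${S(\rho) \leq n - \log\frac{1}{1 - D(\rho, (I/2)^{\tensor n})} + 2}$, and the $\CITM$-hardness circuit only guarantees ${D > 1/20}$ on no-instances, for which this upper bound is ${n - \log(20/19) + 2 > n}$, i.e.\ vacuous; the yes case gives ${S \approx n}$ as well. So before any tensoring there is nothing to amplify, and additivity of $S_\textmax$ cannot create a gap out of none. The correct order is the reverse of yours, and is what the paper does: first polarize the \emph{trace-distance} gap by tensoring (Lemma~\ref{Lemma: polarization of minimum output trace distance}), driving the no-case distance up to ${1 - 2^{-q}}$ and the yes-case down to $2^{-q}$, and only then apply the entropy conversion, which yields ${S > q'_\out - 2}$ versus ${S < q'_\out - 4}$, hence a $\MaxOutQEA$ instance with threshold ${q'_\out - 3}$.

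The same dimension-dependence leaves a smaller hole in your membership direction. With $m$ between ${k(t-1)}$ and ${k(t+1)}$, the no-case entropy deficit is only $\Theta(k)$ out of ${m \approx kt}$ output qubits, so the distance bound you can extract is of order $1/t$ (the paper gets ${1/(4 q m_\out)}$) --- polynomially small, not a constant $b$. Since Lemma~\ref{Lemma: CITM is in qq-QAM} is stated for constants ${(a,b)}$ with ${(1-a)^2 > 1 - b^2}$, you need one further polarization round (the paper's $r$-fold parallel repetition of the circuit $R$) before landing in ${\CITM(1/8, 1/2)}$. Also note that if you really place the extractor seed among the quantified input qubits, the uniform-seed extractor guarantee is unavailable on no-instances (the seed is then adversarial), and the soundness argument must instead be pure entropy counting --- a mixture over $2^d$ unitaries raises entropy by at most $d$ (Lemma~\ref{Lemma: NC00Thm11.10}) --- which is exactly how the paper argues with its internally randomized extractor. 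Both repairs use tools you already invoke, so the plan is salvageable, but as written the hardness step fails and the membership step stops one reduction short.
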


Finally, the third part then can be proved
by first providing a randomized reduction from a problem in ${\mathrm{cc}\qqQAM}$
to a problem in ${\mathrm{c}\qqQAM}$,
and then using the containment proved in the second part for the resulting problem in ${\mathrm{c}\qqQAM}$.

Besides its usefulness in proving Theorem~\ref{Theorem: c...cqq-QAM(m) = qq-QAM},
the complete problem $\CITM$ is of independent interest in the following sense.
Recall that problems with formulations similar to $\CITM$ have already been studied,
and were crucial to understand and characterize the computational power of several classes related to quantum interactive proof systems:
testing closeness between the images of two given quantum circuits is $\QIP$-complete~\cite{RosWat05CCC} (and hence $\PSPACE$-complete),
testing closeness between a state produced by a given circuit and the image of another quantum circuit is ${\QIP(2)}$-complete~\cite{Wat02QIP}
(see also Ref.~\cite{HayMilWil12arXiv}),
testing closeness between two states produced by two given quantum circuits is $\QSZK$-complete~\cite{Wat02FOCS, Wat09SIComp},
and testing closeness between the state produced by a quantum circuit and the totally mixed state is $\NIQSZK$-complete~\cite{Kob03ISAAC, ChaCioKerVad08TCC}.
Theorem~\ref{Theorem: qq-QAM-completeness of CITM} shows that the class~$\qqQAM$,
besides its theoretical interest in the context of interactive proofs,
is a very natural one that actually corresponds to a concrete computational problem
that is on this line of studies investigating the hardness of checking the properties related to quantum circuits.
Since $\CITM$ corresponds to the remaining pattern (image versus totally mixed state),
Theorem~\ref{Theorem: qq-QAM-completeness of CITM}
provides the last piece for characterizing the hardness of these kinds of computational problems.

It is further proved that the class~$\cqQAM$ (i.e., the standard $\QAM$)
is necessarily contained in the one-sided bounded error version of $\qqQAM$ of perfect completeness,
denoted by $\qqQAM_1$
(throughout this paper,
the perfect completeness version of each complexity class
is indicated by adding the subscript~``$1$'').

\begin{theorem}
${\cqQAM \subseteq \qqQAM_1}$.
\label{Theorem: QAM is in qq-QAM_1}
\end{theorem}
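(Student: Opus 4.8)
The plan is to start from an arbitrary problem in $\cqQAM$ and build a two-turn quantum-public-coin protocol for it that has \emph{perfect} completeness. As recorded in the inclusion chain, $\cqQAM\subseteq\qqQAM$ holds already by definition (the prover can measure the received EPR halves in the computational basis to recover uniformly random classical coins), so the only thing that must be gained is the upgrade of the completeness error from $1-c$ down to exactly $0$. First I would amplify the given $\cqQAM$ protocol, using the standard parallel-repetition-and-threshold amplification for quantum Arthur--Merlin games, so that its completeness is exponentially close to $1$ and its soundness is exponentially close to $0$; write $\{\Pi_r\}_r$ for the verifier's accepting projectors indexed by the classical coin string $r$. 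The decisive feature that quantum public coins provide, and that ordinary public coins do not, is a tension for the prover: the EPR halves it receives are perfectly correlated with the verifier's retained register, so any strategy that measures them to learn $r$ necessarily destroys the entanglement with the verifier, whereas a strategy that preserves that entanglement cannot have pinned down $r$.

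The construction I would use exploits this tension by tying acceptance to the maximally mixed structure of the verifier's own halves, which is precisely the structure of the complete problem $\CITM$ with the perfect parameter $a=0$. Concretely, the $\qqQAM$ verifier generates the coins as EPR pairs on registers $(V,M)$, sends $M$, and after receiving the prover's quantum message $W$ applies a coherent version of the amplified $\cqQAM$ verification---reading the coins coherently from $V$ rather than by a computational-basis measurement---followed by a test that the relevant output register is exactly the totally mixed state $(I/2)^{\tensor q}$. Because the totally mixed state is exactly the reduced state of the verifier's EPR halves, this last test can be implemented as an EPR/Bell-type measurement against $V$ that accepts \emph{with certainty} when the output register equals $(I/2)^{\tensor q}$, and only with bounded probability otherwise. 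The honest prover's strategy is to run the honest amplified $\cqQAM$ prover coherently (controlled on the coin register, without measuring it) and to return the halves that restore the maximally entangled configuration.

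Perfect completeness is then the heart of the matter, and the main obstacle. Phrased abstractly, letting $\Pi$ denote the verifier's overall accepting projector on $(V,W)$, the honest prover's state $\ket{\Psi}$ always has $V$-marginal exactly equal to the totally mixed state $I/2^{k}$ (forced by the EPR pairs and no-signaling, since the prover never touches $V$), so acceptance probability $1$ is achievable precisely when the range of $\Pi$ contains a state whose $V$-marginal is exactly totally mixed. The crux is to design the coherent verification and the totally-mixed test so that in the yes-case such a state exists and is reachable by an honest prover; this is where the exact (rather than merely approximate) matching to the maximally entangled structure is essential, and where a naive computational-basis reading of the coins fails, since it would force a separate accepting message for \emph{every} single $r$. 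I would establish existence by exhibiting the honest coherent prover's output explicitly and verifying, block by block over the coins, that the coin-averaged output coincides exactly with $(I/2)^{\tensor q}$ while the acceptance flag is set deterministically.

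Finally, for soundness I would bound the best cheating acceptance by combining the amplified soundness $s$ of the original $\cqQAM$ protocol with the slack of the totally-mixed (EPR) test: a prover that learns $r$ to fool the coherent $\cqQAM$ verification damages the entanglement needed to pass the totally-mixed test, while a prover that passes the totally-mixed test cannot exploit knowledge of $r$. Quantifying this trade-off keeps the no-case acceptance below a constant bounded away from $1$, which after a final round of completeness-preserving (one-sided) amplification yields the desired separation and hence $L\in\qqQAM_1$.
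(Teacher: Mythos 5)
There is a genuine gap, and it sits exactly at the point you yourself identify as ``the heart of the matter.'' Your construction coherently simulates an \emph{amplified} $\mathrm{cq}$-QAM protocol, which still has nonzero (exponentially small) completeness error, and then tries to recover perfect completeness via a test against the verifier's retained EPR halves. This cannot work as described. It is true that the honest prover's state has $V$-marginal exactly $(I/2)^{\tensor k}$ (no-signaling forces this), but that is not what the verifier's accepting projector tests: the accepting projector is the conjunction of the coherent verification outcome and the totally-mixed/Bell-type test, and the honest prover's state is \emph{not} in its range---it only has overlap $1 - 2^{-\poly}$ with it, because the coherent verification branch carries a residual reject amplitude inherited from the completeness error of the underlying protocol. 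Moreover, no measurement ``accepts with certainty when the output register equals $(I/2)^{\tensor q}$'': exact acceptance requires the \emph{joint} state with $V$ to lie in the range of a specific isometry (cf.\ the completeness analysis of Lemma~\ref{Lemma: CITM is in qq-QAM} with $a = 0$), and after projecting onto the accept branch of the coherent verification the joint state is only approximately of that form. The circularity is visible in the paper's own reduction: Lemma~\ref{Lemma: qq-QAM-hardness of CITM} maps a proof system with completeness $c$ to a $\CITM$ instance with closeness parameter $\tfrac{1}{2}(1-c)$, which is $0$ only if $c = 1$. So your protocol ends with completeness $1 - \mathrm{negl}$, which is what you started with; no new mechanism for eliminating the error has been supplied.

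The paper's proof needs two ingredients that your proposal omits, and neither is dispensable. First, the Lautemann/Cai shift trick (two extra \emph{classical} turns): the verifier sends random shifts $r_1, \ldots, r_{l-1}$, the prover replies with $r$, and for yes-instances the fat-set covering lemma (Lemma~\ref{Cai12Lem5.15}) guarantees an $r$ making \emph{all} derived $\QMA$ instances $(x, r \xor r_j)$ yes-instances. Second---and this is the ingredient that actually manufactures perfect completeness---each such $\QMA$ instance is verified with the $\EPRQMA{\const}_1$ protocol of Ref.~\cite{KobLeGNis13ITCS}, an imported nontrivial result stating that $\QMA$ yes-instances can be accepted with probability \emph{exactly} $1$ given a constant number of shared EPR pairs; coherence alone does not yield this. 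The shared EPR pairs cost one quantum verifier turn, giving a $\mathrm{ccqq}$-QAM$_1$ protocol, and the perfect-completeness-preserving collapse theorem (Corollary~\ref{Corollary: c...cqq-QAM_1(m) = qq-QAM_1}) then brings it down to $\qqQAM_1$. Your soundness intuition (measuring the coins destroys entanglement with $V$) is a reasonable heuristic, but without a substitute for these two steps the completeness claim does not go through.
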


One useful property when proving this theorem is that
the proof of Theorem~\ref{Theorem: c...cqq-QAM(m) = qq-QAM}
does not harm the perfect completeness property,
i.e., it also holds that ${\mathrm{c \cdots c}\qqQAM_1(m) = \qqQAM_1}$,
for any constant~${m \geq 2}$.
Especially,
the class~${\mathrm{cc}\qqQAM_1}$ is included in the class~$\qqQAM_1$,
and thus,
one has only to prove that $\cqQAM$ is included in ${\mathrm{cc}\qqQAM_1}$.
This can be proved by combining the classical technique
due to Cai~\cite{Cai12LectureNotes} for proving ${\AM = \AM_1}$
(which itself originates in the proof of ${\BPP \subseteq \Sigma_2^p}$
due to Lautemann~\cite{Lau83IPL}),
and the recent result that
any problem in $\QMA$ has a one-sided bounded error QMA system of perfect completeness
in which Arthur and Merlin initially share a constant number of EPR pairs~\cite{KobLeGNis13ITCS}
(which in particular implies that $\QMA$ is included in $\qqQAM_1$).
Now the point is that, using two classical turns,
the classical technique in Ref.~\cite{Cai12LectureNotes}
can be used to generate polynomially many instances of a (promise) QMA problem,
all of which are QMA yes-instances if the input was a yes-instance,
while at least one of which is a QMA no-instance with high probability
if the input was a no-instance.
Hence, by making use of the proof system in Ref.~\cite{KobLeGNis13ITCS}
for each QMA instance,
which essentially runs polynomially many attempts of
a protocol of $\mathrm{qq}$-QAM type in parallel 
to check that none of them results in rejection,
one obtains a proof system of $\mathrm{ccqq}$-QAM type with perfect completeness.

An immediate corollary of this theorem is the first nontrivial upper bound for $\QAM$
in terms of quantum interactive proofs.

\begin{corollary}
${\QAM \subseteq \QIP_1(2)}$.
\end{corollary}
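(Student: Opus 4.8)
The plan is to obtain this as an immediate consequence of the preceding theorem together with the definitional containments recorded in the introduction. First I would recall that $\QAM$ is, by definition, exactly the class $\cqQAM$ (as noted when relating the present notation to that of Marriott and Watrous), so that Theorem~\ref{Theorem: QAM is in qq-QAM_1} already gives $\QAM \subseteq \qqQAM_1$. It then remains only to verify the inclusion $\qqQAM_1 \subseteq \QIP_1(2)$, after which the corollary follows by transitivity.

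For this last inclusion, the key observation is that a two-turn $\mathrm{qq}$-QAM proof system is nothing but a special case of a general two-turn quantum interactive proof system: the verifier's single message, although restricted to consist of halves of freshly prepared EPR pairs, is a legitimate quantum message, and the verifier's final decision is made by a polynomial-time quantum computation. Hence every $\mathrm{qq}$-QAM protocol is in particular a $\QIP(2)$ protocol, and viewing it in the more permissive model changes neither its completeness nor its soundness acceptance probabilities. In particular, a protocol with perfect completeness as a $\mathrm{qq}$-QAM system retains perfect completeness as a $\QIP(2)$ system, which yields $\qqQAM_1 \subseteq \QIP_1(2)$ and therefore $\QAM \subseteq \QIP_1(2)$.

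The substantive content here lies entirely in Theorem~\ref{Theorem: QAM is in qq-QAM_1}, whose proof upgrades the standard classical-public-coin $\QAM$ model to a quantum-public-coin model with perfect completeness; there is no further obstacle in the corollary itself. The only thing to check is that the definitional containment $\qqQAM \subseteq \QIP(2)$ from the introduction lifts verbatim to the perfect-completeness versions of the two classes, and it does, since restricting the type of the verifier's message affects neither the completeness nor the soundness parameters of any individual protocol.
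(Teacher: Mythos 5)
Your proposal is correct and follows exactly the route the paper intends: the paper presents this corollary as an immediate consequence of Theorem~\ref{Theorem: QAM is in qq-QAM_1} (i.e., ${\QAM = \cqQAM \subseteq \qqQAM_1}$) combined with the definitional containment ${\qqQAM \subseteq \QIP(2)}$, which, as you note, lifts to the perfect-completeness versions since a $\mathrm{qq}$-QAM verifier is a special case of a ${\QIP(2)}$ verifier and the restriction affects neither completeness nor soundness of any given protocol. Nothing further is needed.
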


Here, ${\QIP_1(2)}$ denotes the class of problems having a two-turn quantum interactive proof system of perfect completeness.
This also improves the best known lower bound of ${\QIP_1(2)}$
(from $\QMA$ shown in Ref.~\cite{KobLeGNis13ITCS} to $\QAM$).
By using the fact~${\MQA = \MQA_1}$ (a.k.a., ${\QCMA = \QCMA_1}$)
stating that classical-witness QMA systems can be made perfectly complete~\cite{JorKobNagNis12QIC},
a technique similar to the proof of Theorem~\ref{Theorem: QAM is in qq-QAM_1}
proves that perfect completeness is achievable in $\ccQAM$.

\begin{theorem}
${\ccQAM = \ccQAM_1}$.
\label{Theorem: ccQAM = ccQAM_1}
\end{theorem}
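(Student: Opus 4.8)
The inclusion ${\ccQAM_1 \subseteq \ccQAM}$ is immediate, so the plan is to establish ${\ccQAM \subseteq \ccQAM_1}$. Fix a problem ${L \in \ccQAM}$ with a two-turn classical-message system: Arthur sends a uniformly random string $r$, Merlin replies with a classical string $y$, and Arthur's decision is the outcome of a polynomial-time quantum verification whose acceptance probability I denote $p(x,r,y)$. The whole construction mirrors the proof of Theorem~\ref{Theorem: QAM is in qq-QAM_1}, with the crucial simplification that the inner proof system now has a \emph{classical} witness, so the perfect-completeness gadget can be supplied by ${\MQA = \MQA_1}$ (i.e.\ ${\QCMA = \QCMA_1}$) from~\cite{JorKobNagNis12QIC} rather than by the EPR-based $\QMA_1$ system of~\cite{KobLeGNis13ITCS}. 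Consequently no extra quantum turns are introduced and the resulting system stays of $\mathrm{cc}$-QAM type, so, unlike in Theorem~\ref{Theorem: QAM is in qq-QAM_1}, I will not need to invoke the collapse theorem afterwards.

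The first step is to isolate the inner verification as a promise problem in $\MQA$. Consider the promise problem $\Pi$ whose instances are pairs $(x,r)$, with yes-instances satisfying ${\max_y p(x,r,y) \geq 2/3}$ and no-instances satisfying ${\max_y p(x,r,y) \leq 1/3}$; this is a $\QCMA$ (${=}\,\MQA$) problem, the classical witness being $y$ and the verification being the amplified version of Arthur's circuit. By ${\MQA = \MQA_1}$, $\Pi$ has a one-message classical-witness system with a verifier $V_1$ that accepts every yes-instance with probability exactly $1$ (using a suitable witness) and accepts every no-instance with probability at most ${2^{-n}}$. In parallel I amplify the original $\ccQAM$ system by standard public-coin repetition so that, writing $\ell$ for the new length of Arthur's random string, ${\expect{\max_y p(x,r,y)} \geq 1 - \delta}$ for ${x \in L}$ and ${\expect{\max_y p(x,r,y)} \leq \delta}$ for ${x \notin L}$ (the expectation over $r$), with ${\delta \leq 1/(6\ell)}$. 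A Markov-type estimate then shows that for ${x \in L}$ all but a ${3\delta}$ fraction of the strings $r$ are $\Pi$-yes, while for ${x \notin L}$ all but a ${3\delta}$ fraction of the strings $r$ are $\Pi$-no.

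The new system applies the Lautemann--Cai shift trick on top of $V_1$. Arthur sends $\ell$ independent uniformly random strings ${r_1, \ldots, r_\ell}$; Merlin replies with a single shift $s$ together with $\ell$ classical witnesses ${w_1, \ldots, w_\ell}$; and Arthur runs $\ell$ copies of $V_1$ on the instances ${(x, r_1 \oplus s), \ldots, (x, r_\ell \oplus s)}$ with the respective witnesses in parallel registers, accepting iff all of them accept. All messages are classical and the only quantum computation is Arthur's final verification, so this is a two-turn $\mathrm{cc}$-QAM protocol. For perfect completeness, let ${G = \{ r : (x,r) \text{ is } \Pi\text{-yes} \}}$ when ${x \in L}$; since ${\abs{\overline{G}} < 2^\ell/\ell}$, a union bound over $j$ gives, for \emph{every} tuple ${(r_1,\ldots,r_\ell)}$, some shift $s$ with ${r_j \oplus s \in G}$ for all $j$, whence Merlin can supply witnesses making each $V_1$ accept with probability exactly $1$ and Arthur accepts with certainty. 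For soundness, when ${x \notin L}$ let $H$ collect the (at most ${3\delta\,2^\ell}$) strings that are \emph{not} $\Pi$-no; a union bound over the ${2^\ell}$ candidate shifts shows that with probability ${1 - (6\delta)^\ell}$ over Arthur's strings every shift $s$ leaves some ${r_j \oplus s}$ a genuine $\Pi$-no-instance, on which the corresponding $V_1$ accepts with probability at most ${2^{-n}}$, bounding Arthur's overall acceptance probability by ${(6\delta)^\ell + 2^{-n}}$.

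The step I expect to require the most care is the interface between the per-string promise problem $\Pi$ and the \emph{averaged} guarantee of the original $\ccQAM$ system: the quantity ${\max_y p(x,r,y)}$ need not avoid the gap ${(1/3, 2/3)}$ for every $r$, so some strings are neither $\Pi$-yes nor $\Pi$-no and fall outside the promise of $V_1$. The plan is to absorb these ``ambiguous'' strings into the small exceptional fractions above: for completeness they are simply counted among the non-yes strings, which the covering argument tolerates because ${\abs{\overline{G}} < 2^\ell/\ell}$; and for soundness they are placed in $H$, so that the only thing demanded of them is that covering by shifts fails, which holds with overwhelming probability once $\delta$ is polynomially small. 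Balancing the two requirements forces ${\delta \leq 1/(6\ell)}$, which is attainable because the outer repetition needs only inverse-polynomial error while the inner gap of $V_1$ is driven to ${2^{-n}}$ independently; checking that these two amplifications can be carried out simultaneously within a single two-turn $\mathrm{cc}$-QAM protocol is the remaining technical point.
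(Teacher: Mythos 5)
Your proposal is correct and follows essentially the same route as the paper: the paper also proves this by mirroring the proof of Theorem~\ref{Theorem: QAM is in qq-QAM_1} (the Lautemann--Cai shift trick over per-string promise instances), substituting the ${\MQA = \MQA_1}$ result of Ref.~\cite{JorKobNagNis12QIC} for the EPR-based perfectly complete QMA system, and observing that since the witnesses stay classical no extra turn is needed, so the resulting protocol is directly a $\mathrm{cc}$-QAM system with perfect completeness. Your fleshed-out parameter bookkeeping (the ${3\delta}$ fractions, the ${(6\delta)^\ell + 2^{-n}}$ soundness bound, and the treatment of promise-violating strings) matches the paper's sketch and its handling of the same subtleties in the proof of Theorem~\ref{Theorem: QAM is in qq-QAM_1}.
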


Finally, results similar to Theorem~\ref{Theorem: c...cqq-QAM(m) = qq-QAM} can be derived
for other complexity classes related to the generalized quantum Arthur-Merlin proof systems.
Namely, the following complete characterization is proved
on the power of constant-turn generalized quantum Arthur-Merlin proofs,
which can be viewed as the complete set of quantum analogues of Babai's collapse theorem.

\begin{theorem}
The following four properties hold:
\begin{itemize}
\item[(i)]
  For any constant~${m \geq 3}$ and any ${t_1, \ldots, t_m}$ in ${\{\rmc, \rmq\}}$,
  if there exists an index~${j \geq 3}$ such that ${t_j = \rmq}$, then
  ${\GQAM{t_m \cdots t_1}(m) = \PSPACE}$.
\item[(ii)]
  For any constant~${m \geq 2}$ and any $t_1$ in ${\{\rmc, \rmq\}}$,
  ${\GQAM{\mathrm{c \cdots cq} \:\! t_1}(m) = \qqQAM}$.
\item[(iii)]
  For any constant~${m \geq 2}$,
  ${\GQAM{\mathrm{c \cdots cq}}(m) = \cqQAM}$ (${= \QAM}$).
\item[(iv)]
  For any constant~${m \geq 2}$,
  ${\GQAM{\mathrm{c \cdots c}}(m) = \ccQAM}$.
\end{itemize}
\label{Theorem: complete classifications}
\end{theorem}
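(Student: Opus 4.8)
The overall plan is to prove each of the four items by establishing matching containments in both directions, reusing the collapse theorem (Theorem~\ref{Theorem: c...cqq-QAM(m) = qq-QAM}), the identities ${\QIP = \PSPACE}$ and ${\QMAM = \PSPACE}$ from \cite{JaiJiUpaWat11JACM, MarWat05CC}, and Babai's probabilistic round-switching, with quantum teleportation as the one new gadget that lets a classical message carry quantum information whenever shared entanglement has been established in an earlier turn. The easy ``padding'' containments ${\cqQAM \subseteq \GQAM{\mathrm{c \cdots cq}}(m)}$, ${\ccQAM \subseteq \GQAM{\mathrm{c \cdots c}}(m)}$, and ${\qqQAM \subseteq \GQAM{\mathrm{c \cdots cqq}}(m)}$ hold for every ${m \geq 2}$ by prepending dummy classical rounds, so in each item the real work is in the reverse direction.

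Items (iii) and (iv) are the most classical: every message before the final prover turn is classical, so the first part of the proof of Theorem~\ref{Theorem: c...cqq-QAM(m) = qq-QAM}---Babai's switching applied to a parallel repetition---carries over verbatim to the classical prefix. Neither the verifier's final quantum measurement nor, in (iii), the single trailing quantum prover message takes part in the switching, so the collapse of the classical rounds down to two turns proceeds unchanged and yields $\cqQAM$ and $\ccQAM$ respectively. For item (ii) with ${t_1 = \rmq}$ there is nothing to do beyond Theorem~\ref{Theorem: c...cqq-QAM(m) = qq-QAM} itself.

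The remaining case of item (ii), ${t_1 = \rmc}$, amounts to ${\GQAM{\mathrm{c \cdots cqc}}(m) = \qqQAM}$. One inclusion is ${\GQAM{\mathrm{c \cdots cqc}}(m) \subseteq \GQAM{\mathrm{c \cdots cqq}}(m) = \qqQAM}$, where the first step holds because a classical final message is the special case of a quantum one in which the verifier first measures the incoming register (completeness is preserved by sending the classical string as a computational-basis state, soundness by measuring an adversarial quantum message before the check), and the equality is Theorem~\ref{Theorem: c...cqq-QAM(m) = qq-QAM}. The reverse inclusion ${\qqQAM \subseteq \GQAM{\mathrm{c \cdots cqc}}(m)}$ I would obtain by simulating an arbitrary $\qqQAM$ system inside ${\GQAM{\mathrm{qc}}(2)}$ and then padding: the verifier sends the EPR halves its $\qqQAM$ protocol needs together with an extra block of EPR halves, the prover teleports its quantum $\qqQAM$ response into the verifier's retained halves and reports only the classical Bell-correction bits, and the verifier applies the Pauli corrections before running the original check. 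Soundness holds because a prover controlling both its teleported register and the correction bits can realize precisely the arbitrary quantum responses available to a cheating $\qqQAM$ prover, and nothing more.

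The heart of the theorem is item (i). The containment ${\GQAM{t_m \cdots t_1}(m) \subseteq \PSPACE}$ is immediate, since a constant-turn system lies in ${\QIP = \PSPACE}$. For the reverse containment I would embed the $\QMAM$ protocol for a $\PSPACE$-complete problem. The hypothesis provides a quantum turn at some position ${j_0 \geq 3}$ counted from the end; this turn is used to share a polynomial supply of EPR pairs between the two parties---directly when it is a verifier turn that distributes EPR halves, or by having the prover transmit EPR halves when it is a prover turn. The three messages of $\QMAM$ are then realized as follows: its first quantum prover message is sent at, or (if ${j_0}$ is a verifier turn) teleported immediately after, turn ${j_0}$; its public coins are delivered at the penultimate turn, which is always a verifier turn (and if that turn is constrained to distribute EPR halves, shared public coins are obtained by both parties measuring in the computational basis); and its second quantum prover message is teleported through the shared pairs at the final turn via classical correction bits, with all other turns carrying dummy messages. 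The main obstacle is the soundness of this simulation: one must show that a cheating prover who controls both the entanglement prepared at turn ${j_0}$ and every classical correction string induces no strategy stronger than an arbitrary cheating $\QMAM$ prover. This follows from the fact that teleportation with an adversarial resource state and adversarial outcomes can place an arbitrary, possibly entangled, message in the verifier's register but confers no additional power. Assembling the four cases gives the stated classification.
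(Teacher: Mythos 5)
Your decomposition and three of the four items essentially reproduce the paper's argument: for item (i) you embed the ${\QMAM = \PSPACE}$ protocol using entanglement established at the quantum turn and teleportation through the later classical turns (the paper packages the same idea as the two anchor facts ${\mathrm{q}\cqQAM \subseteq \mathrm{q}\ccQAM}$, Lemma~\ref{Lemma: qcc-QAM = QMAM}, and ${\mathrm{q}\cqQAM \subseteq \mathrm{qc}\ccQAM}$, followed by padding and Proposition~\ref{Proposition: c->q}); for item (ii) you use exactly the paper's combination of a classical-to-quantum replacement, Theorem~\ref{Theorem: c...cqq-QAM(m) = qq-QAM}, and a teleportation-based simulation of a quantum prover answer by classical correction bits; and item (iv) is purely classical Babai switching, as in the paper. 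Your soundness justifications for the teleportation gadgets (an adversarial resource state plus adversarial correction bits can realize exactly the messages an arbitrary cheating prover could send, and nothing more) are the same arguments the paper makes rigorous in Lemma~\ref{Lemma: qcc-QAM = QMAM}.

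The genuine gap is in item (iii). Your claim that the trailing quantum prover message ``does not take part in the switching, so the collapse of the classical rounds down to two turns proceeds unchanged'' fails at the bottom of the induction. Babai's step operates on the \emph{first three} turns of the current protocol, so iterating it on ${\GQAM{\mathrm{c \cdots cq}}(m)}$ necessarily terminates at ${\GQAM{\mathrm{ccq}}(3)}$, whose first three turns are a classical prover message~$y$, the classical coins~$r$, and the \emph{quantum} prover message; to reach $\cqQAM$ this quantum message must be switched past the coins, since in the target two-turn protocol the only prover message is the quantum one and the classical~$y$ has nowhere else to go (and the $m=4$ case is in turn derived from the $m=3$ case by conditioning on the first coin string). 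At this step the construction genuinely changes: the prover's single message in the resulting $\mathrm{cq}$ protocol becomes a quantum register consisting of a $\sfY$ part and quantum parts for the parallel attempts, the verifier must measure the $\sfY$ part in the computational basis to force it classical (otherwise the union bound over~$y$ in the soundness analysis is unavailable), and one must check that the soundness bound survives a cheating prover entangling the quantum parts across the $k$ parallel attempts, which is where the perfect parallel repetition theorem for quantum interactive proofs is invoked. The paper devotes an explicit modification of the protocol~$\Pi'$ from Lemma~\ref{Lemma: c...cqq-QAM(m) is in ccqq-QAM} to exactly this case, whereas your write-up asserts no modification is needed; as written, your induction for (iii) either stops at ${\GQAM{\mathrm{ccq}}(3)}$ or implicitly reuses a single quantum message across $k$ parallel verifications, which no-cloning forbids.
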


\paragraph{Further related work.}

There are several works in which relevant subclasses of $\qqQAM$ were treated.
In Ref.~\cite{KobLeGNis13ITCS}, the class~$\EPRQMA{\const}$ was introduced
to give an upper bound of $\QMA$ by its one-sided bounded error subclass~$\EPRQMA{\const}_1$
with perfect completeness.
This $\EPRQMA{\const}$ is an obvious subclass of $\qqQAM$
with a restriction that the first message from the verifier consists of
not polynomially many but a constant number of halves of EPR pairs.
The class~$\qqQAM$ may be called $\EPRQMA{\poly}$,
following the notation in Ref.~\cite{KobLeGNis13ITCS}.
Another subclass of $\qqQAM$ is the class~$\NIQSZK$ studied in Refs.~\cite{Kob03ISAAC, ChaCioKerVad08TCC}
that corresponds to non-interactive quantum statistical zero-knowledge proof systems,
where the zero-knowledge property must also be satisfied.

\paragraph{Organization of the paper.}

Section~\ref{Section: preliminaries} summarizes the notions and properties
that are used throughout this paper,
and gives formal definitions of generalized quantum Arthur-Merlin proof systems.
Section~\ref{Section: qq-QAM-completeness of CITM}
proves the $\qqQAM$-completeness of the $\CITM$ problem.
Section~\ref{Section: c...cqq-QAM(m) = qq-QAM}
then gives a proof of Theorem~\ref{Theorem: c...cqq-QAM(m) = qq-QAM},
the collapse theorem for $\qqQAM$.
This essentially proves the $\qqQAM$-completeness of the $\MaxOutQEA$ problem also.
Section~\ref{Section: QAM is in qq-QAM_1}
treats the result that the standard $\QAM$ is contained in $\qqQAM_1$,
the perfect-completeness version of $\qqQAM$.
Section~\ref{Section: collapse theorem for general QAM}
presents the complete classification of the complexity classes
derived from generalized quantum Arthur-Merlin proof systems.
Finally, Section~\ref{Section: conclusion}
concludes the paper with some open problems.
For completeness,
a rigorous proof of the $\qqQAM$-completeness of the $\MaxOutQEA$ problem (Theorem~\ref{Theorem: qq-QAM-completeness of MaxOutQEA}) is given in the Appendix.


\section{Preliminaries}
\label{Section: preliminaries}

Throughout this paper,
let $\Natural$ and $\Nonnegative$ denote
the sets of positive and nonnegative integers, respectively,
and let ${\Sigma = \Binary}$ denote the binary alphabet set. 
A function~$\function{f}{\Nonnegative}{\Natural}$ is \emph{polynomially bounded}
if there exists a polynomial-time deterministic Turing machine
that outputs ${1^{f(n)}}$ on input~$1^n$.
A function~$\function{f}{\Nonnegative}{[0,1]}$ is \emph{negligible}
if, for every polynomially bounded function~$\function{g}{\Nonnegative}{\Natural}$,
it holds that ${f(n) < 1/g(n)}$ for all but finitely many values of~$n$.


\subsection{Quantum Fundamentals}
\label{Subsection: quantum fundamentals}

We assume the reader is familiar with the quantum formalism,
including pure and mixed quantum states, density operators,
measurements, trace norm, fidelity, as well as the quantum circuit model
(see Refs.~\cite{NieChu00Book, KitSheVya02Book}, for instance).
This subsection summarizes some notations and properties that are used in this paper.

For each $k$ in $\Natural$,
let ${\Complex(\Sigma^k)}$ denote the $2^k$-dimensional complex Hilbert space
whose standard basis vectors are indexed by the elements in $\Sigma^k$.
In this paper, all Hilbert spaces are complex and have dimension a power of two.
For a Hilbert space~$\calH$,
let $I_\calH$ denote the identity operator over $\calH$,
and let ${\Density(\calH)}$ and ${\Unitary(\calH)}$ be the sets of density and unitary operators over $\calH$, respectively.
For a quantum register~$\sfR$,
let $\ket{0}_\sfR$ denote the state
in which all the qubits in $\sfR$ are in state~$\ket{0}$.
As usual,
denote the four two-qubit states in ${\Complex(\Sigma^2)}$
that form the \emph{Bell basis} by
\begin{alignat*}{4}
&
\ket{\Phi^+}
&&
=
\frac{1}{\sqrt{2}} (\ket{00} + \ket{11}),
&
\quad
&
\ket{\Phi^-}
&&
=
\frac{1}{\sqrt{2}} (\ket{00} - \ket{11}),
\\
&
\ket{\Psi^+}
&&
=
\frac{1}{\sqrt{2}} (\ket{01} + \ket{10}),
&
\quad
&
\ket{\Psi^-}
&&
=
\frac{1}{\sqrt{2}} (\ket{01} - \ket{10}),
\end{alignat*}
respectively.
Let
\[
  X
  =
  \begin{pmatrix}
    0 & 1\\
    1 & 0
  \end{pmatrix},
  \quad
  Z
  =
  \begin{pmatrix}
    1 & 0\\
    0 & -1
  \end{pmatrix}
\]
denote the Pauli operators.
For convenience, we may identify a unitary operator with the unitary transformation it induces.
In particular, for a unitary operator~$U$,
the induced unitary transformation is also denoted by $U$.

For two Hilbert spaces~$\calH$~and~$\calK$
and a quantum state~$\rho$ in ${\Density(\calH \tensor \calK)}$,
the state obtained from $\rho$ by \emph{tracing out} $\calK$
(i.e., discarding the qubits in the \emph{reference system} corresponding to $\calK$)
is the \emph{reduced state} in ${\Density(\calH)}$ of $\rho$
denoted by ${\tr_\calK \rho}$.
For two Hilbert spaces~$\calH$~and~$\calK$,
a pure quantum state~$\ket{\psi}$ in ${\calH \tensor \calK}$
is a \emph{purification} of a mixed quantum state~$\rho$ in ${\Density(\calH)}$
iff ${\tr_\calK \ketbra{\psi} = \rho}$.

For a linear operator~$A$, the \emph{trace norm} of $A$ is defined by
\[
\trnorm{A} = \tr \sqrt{\conjugate{A}A}.
\]
For two quantum states~$\rho$~and~$\sigma$,
the \emph{trace distance} between them is defined by
\[
D(\rho, \sigma) = \frac{1}{2} \trnorm{\rho - \sigma}.
\]
A special case of the trace distance is the \emph{statistical difference}
between two probability distributions~$\mu$~and~$\nu$,
which is defined by
\[
\SD(\mu, \nu) = D(\mu, \nu)
\]
by viewing probability distributions as special cases of quantum states
with diagonal density operators.
The following important property is well known
on probability distributions derived from quantum states.

\begin{lemma}
Let $\mu_\rho$ and $\mu_\sigma$ be the probability distributions
derived from two quantum states~$\rho$~and~$\sigma$, respectively,
by performing an arbitrary identical measurement.
Then,
\[
  \SD(\mu_\rho, \mu_\sigma) \leq D(\rho, \sigma).
\]
\label{Lemma: trace distance and probability}
\end{lemma}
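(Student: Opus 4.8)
The plan is to recognize this lemma as the easy (upper-bound) direction of the variational characterization of the trace distance, namely that $D(\rho, \sigma)$ equals the maximum over all measurements of the statistical difference between the resulting outcome distributions. First I would model the arbitrary measurement by a POVM $\{E_m\}$ with each $E_m \geq 0$ and $\sum_m E_m = I$, so that the outcome distributions are given by $\mu_\rho(m) = \tr(E_m \rho)$ and $\mu_\sigma(m) = \tr(E_m \sigma)$. The statistical difference then takes the form
\[
\SD(\mu_\rho, \mu_\sigma) = \frac{1}{2} \sum_m \bigabs{\tr(E_m (\rho - \sigma))}.
\]

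The key step is to decompose the Hermitian operator $\Delta = \rho - \sigma$ into its positive and negative parts via the spectral theorem, writing $\Delta = \Delta_+ - \Delta_-$ with $\Delta_+, \Delta_- \geq 0$ having orthogonal supports, so that $\trnorm{\Delta} = \tr(\Delta_+) + \tr(\Delta_-)$. Since each measurement operator $E_m$ is positive, both $\tr(E_m \Delta_+)$ and $\tr(E_m \Delta_-)$ are nonnegative, and hence the elementary inequality $\abs{a - b} \leq a + b$ (valid for all $a, b \geq 0$) yields $\bigabs{\tr(E_m \Delta)} \leq \tr(E_m \Delta_+) + \tr(E_m \Delta_-)$ for each outcome~$m$. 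Summing over $m$ and invoking the completeness relation $\sum_m E_m = I$ collapses the right-hand side to $\tr(\Delta_+) + \tr(\Delta_-) = \trnorm{\Delta}$, which gives precisely
\[
\SD(\mu_\rho, \mu_\sigma) \leq \frac{1}{2} \trnorm{\Delta} = D(\rho, \sigma).
\]

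This result is entirely standard, so I do not anticipate a genuine obstacle. The only points requiring care are to represent the measurement in its most general POVM form (rather than merely a projective measurement), so that the bound indeed covers the ``arbitrary identical measurement'' appearing in the statement, and to track the nonnegativity of $\tr(E_m \Delta_\pm)$ carefully when applying the triangle-inequality step, since it is this nonnegativity that makes the coarse bound $\abs{a-b}\le a+b$ tight enough to sum to the trace norm.
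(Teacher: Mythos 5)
Your proof is correct and complete: modeling the measurement as a general POVM, splitting ${\rho - \sigma}$ into positive and negative parts with orthogonal supports, and summing the pointwise bound ${\abs{\tr(E_m(\rho-\sigma))} \leq \tr(E_m \Delta_+) + \tr(E_m \Delta_-)}$ against ${\sum_m E_m = I}$ is the standard argument, and you correctly identify the two points that need care (covering POVMs rather than just projective measurements, and the nonnegativity of ${\tr(E_m \Delta_\pm)}$). The paper itself gives no proof of this lemma, stating it only as a well-known property, so there is no alternative argument to compare against; your write-up supplies exactly the proof the paper implicitly relies on.
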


For two quantum states~$\rho$~and~$\sigma$,
the \emph{fidelity} between them is defined by
\[
F(\rho, \sigma) = \tr \sqrt{\sqrt{\rho} \sigma \sqrt{\rho}}.
\]
In particular, for two pure states~$\ket{\phi}$~and~$\ket{\psi}$,
the fidelity between them is given by
${
F(\ketbra{\phi}, \ketbra{\psi}) = \abs{\braket{\phi}{\psi}}
}$.  
The fidelity can also be represented as follows~\cite{Uhl76RoMP}.

\begin{lemma}[Uhlmann's theorem]
For any Hilbert spaces~$\calH$~and~$\calK$ satisfying ${\dim \calK \geq \dim \calH}$
and any quantum states~$\rho$~and~$\sigma$ in ${\Density(\calH)}$,
let $\ket{\phi_\rho}$ and $\ket{\phi_\sigma}$ in ${\calH \tensor \calK}$
be any purifications of $\rho$ and $\sigma$.
Then,
\[
F(\rho, \sigma) = \max \bigset{\bigabs{\bra{\phi_\rho} (I_\calH \tensor U) \ket{\phi_\sigma}}}{U \in \Unitary(\calK)}.
\]
\label{Lemma: Uhlmann's theorem}
\end{lemma}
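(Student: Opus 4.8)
The plan is to prove the identity by turning the overlap of two arbitrary purifications into a trace expression and then invoking the variational (dual-norm) characterization of the trace norm. First I would fix an orthonormal basis $\{\ket{i}\}$ of $\calH$ together with an orthonormal family $\{\ket{i'}\}$ in $\calK$ (available because ${\dim \calK \geq \dim \calH}$), and identify the span of $\{\ket{i'}\}$ with $\calH$ via ${\ket{i} \leftrightarrow \ket{i'}}$. To each linear operator~$A$ on $\calH$ I associate the vector ${\ket{A} = \sum_i (A \ket{i}) \tensor \ket{i'}}$ in ${\calH \tensor \calK}$. A direct computation yields the two identities
\[
  \tr_\calK \ketbra{A} = A \conjugate{A}, \qquad \bra{A} (I_\calH \tensor W) \ket{B} = \tr\bigl(\conjugate{A} B \transpose{W_0}\bigr),
\]
where $W_0$ denotes the compression of ${W \in \Unitary(\calK)}$ to the span of $\{\ket{i'}\}$ (the case ${W = I}$ giving ${\braket{A}{B} = \tr(\conjugate{A}B)}$). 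In particular, taking ${A = \sqrt{\rho}}$, the first identity shows that ${\ket{\sqrt{\rho}} = \sum_i (\sqrt{\rho}\ket{i}) \tensor \ket{i'}}$ is a purification of $\rho$.

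Second, since any two purifications of a fixed state in ${\calH \tensor \calK}$ differ by a unitary on $\calK$, the given purifications can be written as ${\ket{\phi_\rho} = (I_\calH \tensor V_\rho)\ket{\sqrt{\rho}}}$ and ${\ket{\phi_\sigma} = (I_\calH \tensor V_\sigma)\ket{\sqrt{\sigma}}}$ for some ${V_\rho, V_\sigma \in \Unitary(\calK)}$. Applying the second identity with ${W = \conjugate{V_\rho} U V_\sigma}$ gives
\[
  \bra{\phi_\rho} (I_\calH \tensor U) \ket{\phi_\sigma} = \bra{\sqrt{\rho}} (I_\calH \tensor W) \ket{\sqrt{\sigma}} = \tr\bigl(\sqrt{\rho}\sqrt{\sigma}\, \transpose{W_0}\bigr).
\]
As $U$ ranges over ${\Unitary(\calK)}$ so does $W$, and its compression $W_0$---hence also ${M := \transpose{W_0}}$---ranges over all contractions on $\calH$ (every contraction on the finite-dimensional $\calH$ dilates to a unitary on the larger space $\calK$, and transposition preserves the operator norm). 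Therefore
\[
  \max_{U \in \Unitary(\calK)} \bigabs{\bra{\phi_\rho} (I_\calH \tensor U) \ket{\phi_\sigma}} = \max_{\norm{M} \leq 1} \bigabs{\tr\bigl(\sqrt{\rho}\sqrt{\sigma}\, M\bigr)}.
\]

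Third, I would apply the dual characterization of the trace norm, ${\trnorm{B} = \max \bigset{\abs{\tr(BM)}}{\norm{M} \leq 1}}$, which follows from the singular value decomposition ${B = \sum_k s_k \ket{a_k}\bra{b_k}}$ (the maximum being attained at the partial isometry ${M = \sum_k \ket{b_k}\bra{a_k}}$). Taking ${B = \sqrt{\rho}\sqrt{\sigma}}$ gives
\[
  \max_{\norm{M} \leq 1} \bigabs{\tr\bigl(\sqrt{\rho}\sqrt{\sigma}\, M\bigr)} = \trnorm{\sqrt{\rho}\sqrt{\sigma}} = \tr \sqrt{\sqrt{\sigma}\, \rho\, \sqrt{\sigma}},
\]
and since ${\trnorm{C} = \trnorm{\conjugate{C}}}$ for ${C = \sqrt{\rho}\sqrt{\sigma}}$ this equals ${\tr \sqrt{\sqrt{\rho}\, \sigma\, \sqrt{\rho}} = F(\rho, \sigma)}$, completing the argument.

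The step I expect to demand the most care is the second one---the dimension bookkeeping forced by ${\dim \calK \geq \dim \calH}$. Because the reference space may be strictly larger than $\calH$, the object appearing in the trace is the compression $W_0$ rather than a unitary on $\calH$, and the crux is to verify that, as $W$ runs over ${\Unitary(\calK)}$, these compressions sweep out \emph{exactly} the unit ball of operators on $\calH$, so that the resulting maximum coincides with the trace-norm dual and neither over- nor under-shoots; the ``${\subseteq}$'' inclusion here is immediate (compressions of unitaries are contractions), while the ``${\supseteq}$'' direction is the unitary-dilation statement for contractions. I would also isolate at the outset the enabling fact that any two purifications of a fixed state in ${\calH \tensor \calK}$ are related by a unitary on $\calK$, and track the transpose convention in ${\bra{A}(I_\calH \tensor W)\ket{B} = \tr(\conjugate{A}B\transpose{W_0})}$ carefully, since it is exactly this convention that makes the final collapse ${\trnorm{\sqrt{\rho}\sqrt{\sigma}} = F(\rho,\sigma)}$ come out correctly.
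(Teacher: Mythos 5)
The paper itself gives no proof of this lemma; it is stated with a citation to Uhlmann's original article, so your proposal has to be judged on its own merits. Your architecture is the standard one (relative-state vectors plus trace-norm duality), and most of it checks out: both identities for $\ket{A}$, the reduction of arbitrary purifications to $\ket{\sqrt{\rho}}$ and $\ket{\sqrt{\sigma}}$ via unitaries on $\calK$, and the final collapse ${\trnorm{\sqrt{\rho}\sqrt{\sigma}} = \tr\sqrt{\sqrt{\sigma}\rho\sqrt{\sigma}} = \tr\sqrt{\sqrt{\rho}\sigma\sqrt{\rho}} = F(\rho,\sigma)}$ are all correct.

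However, the step you yourself flagged as the crux contains a genuine error: it is \emph{not} true that, as $W$ ranges over $\Unitary(\calK)$, the compressions $W_0$ sweep out the entire unit ball of operators on $\calH$. The unitary-dilation statement you invoke requires room: writing ${\calK = \calH_0 \oplus \calH_0^\perp}$ with $\calH_0$ the span of ${\{\ket{i'}\}}$, a unitary ${W = \bigl(\begin{smallmatrix} C & X \\ Y & Z\end{smallmatrix}\bigr)}$ with compression ${W_0 = C}$ forces ${\conjugate{Y}Y = I - \conjugate{C}C}$, so one needs ${\dim\calH_0^\perp \geq \rank(I - \conjugate{C}C)}$; for the contraction ${C = 0}$ this demands ${\dim\calK \geq 2\dim\calH}$, whereas the lemma's hypothesis allows ${\dim\calK = \dim\calH}$ (in which case the compressions are exactly the unitaries on $\calH$, not the whole ball). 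Fortunately the conclusion of the step survives, because you never need the whole ball: in your dual-norm argument the maximum of ${\abs{\tr(\sqrt{\rho}\sqrt{\sigma} M)}}$ over contractions is attained at ${M = \sum_k \ket{b_k}\bra{a_k}}$, and since ${\sqrt{\rho}\sqrt{\sigma}}$ is a \emph{square} operator on $\calH$, summing over a complete index set (including zero singular values) makes this maximizer a unitary on $\calH$. Any unitary $U_0$ on $\calH_0$ is the compression of the unitary ${U_0 \oplus I_{\calH_0^\perp}}$ on $\calK$, and ${\transpose{U_0}}$ is again unitary, so the maximizing $M$ \emph{is} realized as ${\transpose{W_0}}$ for some ${W \in \Unitary(\calK)}$. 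Combining this with the trivial inclusion (compressions of unitaries are contractions) gives the desired equality of the two maxima, and with that repair your proof is complete.
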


The following inequalities relate the trace distance and fidelity~\cite{FucGra99IEEEIT}.

\begin{lemma}[Fuchs-van-de-Graaf inequalities]
For any Hilbert space~$\calH$ and any quantum states~$\rho$~and~$\sigma$ in ${\Density(\calH)}$, 
\[
  1 - F(\rho, \sigma) \leq D(\rho, \sigma) \leq \sqrt{1 - (F(\rho, \sigma))^2}.
\]
\label{Lemma: trace distance and fidelity}
\end{lemma}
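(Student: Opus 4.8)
The plan is to prove the two inequalities separately, reducing each to an explicit pure-state computation combined with one of the monotonicity facts already recorded above. The common starting point is the pure-state case: for unit vectors $\ket{\phi}$ and $\ket{\psi}$ the operator ${\ketbra{\phi} - \ketbra{\psi}}$ is traceless and supported on the (at most) two-dimensional space spanned by $\ket{\phi}$ and $\ket{\psi}$, so its nonzero eigenvalues are ${\pm\sqrt{1 - \abs{\braket{\phi}{\psi}}^2}}$. Hence ${D(\ketbra{\phi}, \ketbra{\psi}) = \sqrt{1 - \abs{\braket{\phi}{\psi}}^2}}$, while ${F(\ketbra{\phi}, \ketbra{\psi}) = \abs{\braket{\phi}{\psi}}}$ as already noted. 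In particular the upper bound holds with equality for pure states, and the lower bound reduces to the elementary estimate ${1 - c \leq \sqrt{1 - c^2}}$ valid for ${c \in [0,1]}$.

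For the upper bound in general, I would pass upward to purifications via Uhlmann's theorem (Lemma~\ref{Lemma: Uhlmann's theorem}). Choosing a space $\calK$ with ${\dim\calK \geq \dim\calH}$ and purifications ${\ket{\phi_\rho}, \ket{\phi_\sigma}}$ of $\rho$ and $\sigma$ that attain the fidelity, i.e.\ with ${\abs{\braket{\phi_\rho}{\phi_\sigma}} = F(\rho,\sigma)}$, the pure-state identity gives ${D(\ketbra{\phi_\rho}, \ketbra{\phi_\sigma}) = \sqrt{1 - F(\rho,\sigma)^2}}$. Since tracing out $\calK$ is trace-preserving and completely positive, the trace distance is nonincreasing under it (the standard contractivity property), so ${D(\rho,\sigma) \leq D(\ketbra{\phi_\rho}, \ketbra{\phi_\sigma}) = \sqrt{1 - F(\rho,\sigma)^2}}$, which is the right-hand inequality.

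For the lower bound I would instead descend through a measurement rather than ascend through a purification, since contractivity runs the wrong way for bounding ${D(\rho,\sigma)}$ from below. The key input is the measurement characterization of fidelity: there exists a POVM ${\{M_i\}}$ whose outcome distributions ${p_i = \tr(M_i\rho)}$ and ${q_i = \tr(M_i\sigma)}$ satisfy ${\sum_i \sqrt{p_i q_i} = F(\rho,\sigma)}$, i.e.\ the minimum of the classical Bhattacharyya overlap over all measurements equals the quantum fidelity. Granting this, the purely classical computation ${1 - \sum_i \sqrt{p_iq_i} = \tfrac{1}{2}\sum_i (\sqrt{p_i} - \sqrt{q_i})^2 \leq \tfrac{1}{2}\sum_i \abs{\sqrt{p_i} - \sqrt{q_i}}(\sqrt{p_i} + \sqrt{q_i}) = \SD(p,q)}$ yields ${1 - F(\rho,\sigma) \leq \SD(p,q)}$, and Lemma~\ref{Lemma: trace distance and probability} gives ${\SD(p,q) \leq D(\rho,\sigma)}$, completing the chain.

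The main obstacle is precisely this measurement characterization of fidelity used in the lower bound. Unlike the upper bound, it cannot be sidestepped by passing to purifications, because a measurement on a purifying system only controls the trace distance of the purifications rather than of $\rho$ and $\sigma$ themselves, and data processing forces every measurement to satisfy ${\sum_i\sqrt{p_iq_i} \geq F(\rho,\sigma)}$, so only an optimal, fidelity-attaining measurement will do. I would either invoke this as the standard Fuchs--Caves result or include its short proof, exhibiting the optimal measurement in the eigenbasis of the operator ${M = \sigma^{-1/2}\bigl(\sigma^{1/2}\rho\,\sigma^{1/2}\bigr)^{1/2}\sigma^{-1/2}}$ (with the usual care on supports) and verifying directly that it attains ${\sum_i\sqrt{p_iq_i} = F(\rho,\sigma)}$.
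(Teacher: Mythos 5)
The paper does not prove this lemma at all: it is imported as a known result with a citation to Fuchs and van de Graaf~\cite{FucGra99IEEEIT}, so there is no internal argument to compare yours against. What you have written is essentially the standard proof from the literature, and it is correct. The pure-state computation is right: ${\ketbra{\phi} - \ketbra{\psi}}$ is traceless with ${\tr \bigl[ (\ketbra{\phi} - \ketbra{\psi})^2 \bigr] = 2 \bigl( 1 - \abs{\braket{\phi}{\psi}}^2 \bigr)}$, giving eigenvalues ${\pm \sqrt{1 - \abs{\braket{\phi}{\psi}}^2}}$, and the upper bound then follows from Lemma~\ref{Lemma: Uhlmann's theorem} once one notes that the maximum there is attained; if you want the contractivity step to rest only on facts recorded in the paper, it follows from Lemma~\ref{Lemma: trace distance and probability} together with the variational formula ${D(\rho, \sigma) = \max_{0 \leq \Pi \leq I_\calH} \tr \bigl( \Pi (\rho - \sigma) \bigr)}$, since the optimal two-outcome measurement on $\calH$ is in particular a measurement on the purifications. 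For the lower bound you correctly identify the Fuchs--Caves measurement characterization of fidelity as the one input that cannot be bypassed, and your classical chain ${1 - \sum_i \sqrt{p_i q_i} = \frac{1}{2} \sum_i (\sqrt{p_i} - \sqrt{q_i})^2 \leq \SD(p,q) \leq D(\rho, \sigma)}$ is valid. The operator you exhibit does the job: ${M = \sigma^{-1/2} \bigl( \sigma^{1/2} \rho \, \sigma^{1/2} \bigr)^{1/2} \sigma^{-1/2}}$ satisfies ${M \sigma M = \rho}$ and ${\tr(M \sigma) = \tr \sqrt{\sigma^{1/2} \rho \, \sigma^{1/2}} = F(\rho, \sigma)}$ by the symmetry of fidelity, so measuring in its eigenbasis ${M = \sum_i m_i \ketbra{i}}$ yields ${p_i = \bra{i} M \sigma M \ket{i} = m_i^2 q_i}$ and hence ${\sum_i \sqrt{p_i q_i} = \tr(M \sigma) = F(\rho, \sigma)}$; the support caveat you flag is most cleanly handled not by restricting supports but by a perturbation ${\sigma \mapsto (1 - \epsilon) \sigma + \epsilon I_\calH / \dim \calH}$ followed by continuity of both sides of the inequality as ${\epsilon \to 0}$. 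So the proposal is a complete and sound proof of a statement the paper itself only cites.
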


This paper also uses the following property.

\begin{lemma}
For any Hilbert space~$\calH$,
any quantum states~$\rho$,~$\sigma$,~and~$\tau$ in ${\Density(\calH)}$,
and any constant~$p$ in ${[0,1]}$, 
\[
D \bigl( (1-p) \rho + p \sigma, \tau) \geq D(\rho, \tau) - p.
\]
\label{Lemma: mixture of states and trace distance}
\end{lemma}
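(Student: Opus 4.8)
The plan is to derive the bound from the triangle inequality for the trace distance, combined with the elementary observation that mixing a small amount of $\sigma$ into $\rho$ moves the state by at most $p$ in trace distance. Since ${D(\cdot, \cdot) = \frac{1}{2}\trnorm{\cdot - \cdot}}$ is induced by the trace norm, it is a genuine metric on ${\Density(\calH)}$ and in particular obeys the triangle inequality; this is all the structure the argument needs.

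First I would apply the triangle inequality using $\rho$ as the intermediate point:
\[
D(\rho, \tau) \leq D\bigl(\rho, (1-p)\rho + p\sigma\bigr) + D\bigl((1-p)\rho + p\sigma, \tau\bigr),
\]
which rearranges to
\[
D\bigl((1-p)\rho + p\sigma, \tau\bigr) \geq D(\rho, \tau) - D\bigl(\rho, (1-p)\rho + p\sigma\bigr).
\]
It therefore suffices to establish that ${D\bigl(\rho, (1-p)\rho + p\sigma\bigr) \leq p}$.

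The remaining step is a direct computation. The difference of the two arguments is
\[
\rho - \bigl((1-p)\rho + p\sigma\bigr) = p(\rho - \sigma),
\]
so by homogeneity of the trace norm,
\[
D\bigl(\rho, (1-p)\rho + p\sigma\bigr) = \frac{1}{2} \trnorm{p(\rho - \sigma)} = p\, D(\rho, \sigma).
\]
Since $\rho$ and $\sigma$ are density operators we have ${D(\rho, \sigma) \leq 1}$, whence ${D\bigl(\rho, (1-p)\rho + p\sigma\bigr) \leq p}$, and the claim follows. There is no substantial obstacle here: the entire argument reduces to the triangle inequality plus the fact that the trace distance between two quantum states never exceeds $1$. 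The only point I would state with any care is the norm-homogeneity step that converts the convex perturbation into the factor $p$; everything else is immediate.
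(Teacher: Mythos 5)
Your proof is correct and is essentially the paper's own argument: both rest on the triangle inequality applied to the decomposition ${(1-p)\rho + p\sigma - \tau = (\rho - \tau) - p(\rho - \sigma)}$, yielding the intermediate bound ${D((1-p)\rho + p\sigma, \tau) \geq D(\rho,\tau) - p\,D(\rho,\sigma)}$, followed by ${D(\rho,\sigma) \leq 1}$. Phrasing it as the metric triangle inequality through the point $\rho$ rather than the reverse triangle inequality for the trace norm is only a cosmetic difference.
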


\begin{proof}
By the triangle inequality,
${
\trnorm{(1-p) \rho + p \sigma - \tau}
\geq
\trnorm{\rho - \tau}
-
p \trnorm{\rho - \sigma}
}$,
and thus,
\[
D \bigl( (1-p) \rho + p \sigma, \tau \bigr)
\geq
D(\rho, \tau) - p D(\rho, \sigma)
\geq
D(\rho, \tau) - p,
\]
as desired.
\end{proof}

For Hilbert spaces~$\calH$~and~$\calK$,
let ${\Linear(\calH)}$ denote the set of linear mappings from $\calH$ to itself,
let ${\Transform(\calH, \calK)}$ denote
the set of linear mappings from ${\Linear(\calH)}$ to ${\Linear(\calK)}$,
and let ${\Channel(\calH, \calK)}$ denote
the set of quantum channels from ${\Density(\calH)}$ to ${\Density(\calK)}$
(i.e., the set of linear mappings from ${\Linear(\calH)}$ to ${\Linear(\calK)}$
that are completely positive and trace-preserving).

For a linear mapping~$\Phi$ in ${\Transform(\calH, \calK)}$,
the \emph{diamond norm} of $\Phi$ is defined by
\[
\dnorm{\Phi} = \max \bigset{\trnorm{(\Phi \tensor I_{\Linear(\calH)}) (\rho)}}{\rho \in \Density(\calH^{\tensor 2})},
\]
where $I_{\Linear(\calH)}$ is the identity mapping over ${\Linear(\calH)}$.

For Hilbert spaces~$\calH$~and~$\calK$
and quantum channels~$\Phi$~and~$\Psi$ in ${\Channel(\calH, \calK)}$,
the \emph{minimum output trace distance} between $\Phi$ and $\Psi$ is defined by
\[
D_\textmin (\Phi, \Psi)
=
\min \set{D(\Phi(\rho), \Psi(\sigma))}{\rho, \sigma \in \Density(\calH)},
\]
and the \emph{maximum output fidelity} between $\Phi$ and $\Psi$ is defined by
\[
F_\textmax (\Phi, \Psi)
=
\max \set{F(\Phi(\rho), \Psi(\sigma))}{\rho, \sigma \in \Density(\calH)}.
\]

The Fuchs-van-de-Graaf inequalities relate
the minimum output trace distance and the maximum output fidelity as follows.

\begin{lemma}
For any Hilbert spaces~$\calH$~and~$\calK$
and any quantum channels~$\Phi$~and~$\Psi$ in ${\Channel(\calH, \calK)}$,
\[
  1 - F_\textmax (\Phi, \Psi) \leq D_\textmin (\Phi, \Psi) \leq \sqrt{1 - (F_\textmax (\Phi, \Psi))^2}.
\]
\label{Lemma: minimum output trace distance and maximum output fidelity}
\end{lemma}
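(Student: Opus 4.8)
The plan is to reduce both inequalities to the state-level Fuchs--van-de-Graaf inequalities of Lemma~\ref{Lemma: trace distance and fidelity} by exhibiting, in each direction, a single pair of input states that witnesses the desired bound. Since $\Density(\calH)$ is compact and both $D$ and $F$ depend continuously on their arguments while $\Phi$ and $\Psi$ are fixed (hence continuous) maps, the functions $(\rho,\sigma) \mapsto D(\Phi(\rho),\Psi(\sigma))$ and $(\rho,\sigma) \mapsto F(\Phi(\rho),\Psi(\sigma))$ attain their extrema on $\Density(\calH) \times \Density(\calH)$. I would record this at the outset so that ``the optimal pair'' is well defined in each half of the argument.

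For the upper bound $D_\textmin(\Phi,\Psi) \leq \sqrt{1 - (F_\textmax(\Phi,\Psi))^2}$, first choose states $\rho^\ast, \sigma^\ast \in \Density(\calH)$ achieving the maximum output fidelity, so that $F(\Phi(\rho^\ast), \Psi(\sigma^\ast)) = F_\textmax(\Phi,\Psi)$. Applying the right inequality of Lemma~\ref{Lemma: trace distance and fidelity} to the output states $\Phi(\rho^\ast)$ and $\Psi(\sigma^\ast)$ gives $D(\Phi(\rho^\ast), \Psi(\sigma^\ast)) \leq \sqrt{1 - (F_\textmax(\Phi,\Psi))^2}$. Since $D_\textmin$ is a minimum over all input pairs, it is bounded above by this particular value, which is exactly the claimed right-hand bound.

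For the lower bound $1 - F_\textmax(\Phi,\Psi) \leq D_\textmin(\Phi,\Psi)$, instead choose $\rho^\ast, \sigma^\ast$ achieving the minimum output trace distance, so $D(\Phi(\rho^\ast), \Psi(\sigma^\ast)) = D_\textmin(\Phi,\Psi)$. The left inequality of Lemma~\ref{Lemma: trace distance and fidelity} applied to $\Phi(\rho^\ast)$ and $\Psi(\sigma^\ast)$ yields $1 - F(\Phi(\rho^\ast), \Psi(\sigma^\ast)) \leq D_\textmin(\Phi,\Psi)$, and since $F_\textmax(\Phi,\Psi) \geq F(\Phi(\rho^\ast), \Psi(\sigma^\ast))$ by maximality, we conclude $1 - F_\textmax(\Phi,\Psi) \leq 1 - F(\Phi(\rho^\ast), \Psi(\sigma^\ast)) \leq D_\textmin(\Phi,\Psi)$.

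I do not expect any genuine obstacle: the whole argument is a direct lifting of the state-level inequalities through the optimizing pairs. The only point worth a word of care is that the two halves use \emph{different} optimizers---the fidelity-maximizing pair for the upper bound and the distance-minimizing pair for the lower bound---which is harmless precisely because the $\min$ and $\max$ in the definitions of $D_\textmin$ and $F_\textmax$ each supply the inequality in the convenient direction for the half in which it is invoked.
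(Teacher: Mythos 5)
Your proposal is correct and is essentially identical to the paper's own proof: both halves are obtained by applying the state-level Fuchs--van-de-Graaf inequalities (Lemma~\ref{Lemma: trace distance and fidelity}) to the distance-minimizing pair for the lower bound and to the fidelity-maximizing pair for the upper bound, exactly as the paper does. Your extra remark on compactness guaranteeing that the optima are attained is a harmless refinement that the paper leaves implicit.
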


\begin{proof}
Let $\rho_\ast$ and $\sigma_\ast$ be the quantum states in ${\Density(\calH)}$ 
that minimize the expression~${D(\Phi(\rho), \Psi(\sigma))}$.
Then,
\[
1 - F_\textmax (\Phi, \Psi)
\leq
1 - F(\Phi(\rho_\ast), \Psi(\sigma_\ast))
\leq
D(\Phi(\rho_\ast), \Psi(\sigma_\ast))
=
D_\textmin (\Phi, \Psi),
\]
and thus, the first inequality holds.
Similarly, let $\rho_\star$ and $\sigma_\star$ be the quantum states in ${\Density(\calH)}$
that maximize the expression~${F(\Phi(\rho), \Psi(\sigma))}$.
Then,
\[
D_\textmin (\Phi, \Psi)
\leq
D(\Phi(\rho_\star), \Psi(\sigma_\star))
\leq
\sqrt{1 - \bigl( F(\Phi(\rho_\star), \Psi(\sigma_\star)) \bigr)^2}
=
\sqrt{1 - (F_\textmax (\Phi, \Psi))^2},
\]
and the second inequality holds.
\end{proof}

The following property is implicit in Ref.~\cite{KitWat00STOC},
which can be proved by using the multiplicativity of the diamond norm
(see Problem~11.10 of Ref.~\cite{KitSheVya02Book} as well as Theorem~3.24 of Ref.~\cite{Ros09PhD}, for instance).

\begin{lemma}
For any Hilbert spaces~$\calH_1$,~$\calK_1$,~$\calH_2$,~and~$\calK_2$,
and any quantum channels~$\Phi_1$~and~$\Psi_1$ in ${\Channel(\calH_1, \calK_1)}$
and $\Phi_2$~and~$\Psi_2$ in ${\Channel(\calH_2, \calK_2)}$,
\[
F_\textmax (\Phi_1 \tensor \Phi_2, \Psi_1 \tensor \Psi_2)
=
F_\textmax (\Phi_1, \Psi_1) F_\textmax(\Phi_2, \Psi_2).
\] 
\label{Lemma: multiplicativity of maximum output fidelity}
\end{lemma}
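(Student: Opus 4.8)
The plan is to prove the two inequalities separately; only one of them is substantial. For the lower bound $F_\textmax(\Phi_1\tensor\Phi_2,\Psi_1\tensor\Psi_2)\ge F_\textmax(\Phi_1,\Psi_1)\,F_\textmax(\Phi_2,\Psi_2)$ I would simply feed product inputs: choosing $\rho_i,\sigma_i\in\Density(\calH_i)$ that attain $F_\textmax(\Phi_i,\Psi_i)$ and evaluating the product channels on $\rho_1\tensor\rho_2$ and $\sigma_1\tensor\sigma_2$ produces the output states $\Phi_1(\rho_1)\tensor\Phi_2(\rho_2)$ and $\Psi_1(\sigma_1)\tensor\Psi_2(\sigma_2)$, and the multiplicativity of the fidelity under tensor products, $F(\alpha_1\tensor\alpha_2,\beta_1\tensor\beta_2)=F(\alpha_1,\beta_1)\,F(\alpha_2,\beta_2)$, gives the bound. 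The content is entirely in the reverse inequality, where one must rule out that an input on $\calH_1\tensor\calH_2$ that is \emph{entangled} across the two factors beats the product of the factorwise optima. Product, and even pure, inputs genuinely fail here: already for a single pair, mixed inputs are essential (for the identity channel versus the totally depolarizing channel, $F_\textmax=1$ is attained only at the totally mixed input, whereas every pure input gives $1/\sqrt2$), which is the analogue of the fact that the ordinary operator norm is not multiplicative while its completely bounded version is.

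To control entangled inputs I would first recast $F_\textmax$ in a stabilized form. Fix Stinespring dilations $\Phi(\cdot)=\tr_\calG(A\,\cdot\,\conjugate{A})$ and $\Psi(\cdot)=\tr_\calG(B\,\cdot\,\conjugate{B})$ with isometries $A,B\colon\calH\to\calK\tensor\calG$ into a common environment $\calG$. Writing an input $\rho$ as $\tr_\calR\ketbra{r}$ for a purification $\ket r\in\calH\tensor\calR$ (and similarly $\sigma=\tr_\calR\ketbra{s}$), the vectors $(A\tensor I_\calR)\ket r$ and $(B\tensor I_\calR)\ket s$ purify the two output states, so Uhlmann's theorem (Lemma~\ref{Lemma: Uhlmann's theorem}) rewrites the output fidelity as $\max_W|\bra r(\conjugate{A}\tensor I_\calR)(I_\calK\tensor W)(B\tensor I_\calR)\ket s|$ over unitaries $W$ on $\calG\tensor\calR$. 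Maximizing also over $\calR$ and over $\ket r,\ket s$ then yields
\[
F_\textmax(\Phi,\Psi)=\max\bigl|\bra r(\conjugate{A}\tensor I_\calR)(I_\calK\tensor W)(B\tensor I_\calR)\ket s\bigr|,
\]
the maximum being over all references $\calR$, unit vectors $\ket r,\ket s\in\calH\tensor\calR$, and $W\in\Unitary(\calG\tensor\calR)$. The key observation is that the right-hand side is exactly the completely bounded (reference-stabilized) norm of the super-operator $\Theta\colon\Linear(\calG)\to\Linear(\calH)$ given by $\Theta(X)=\conjugate{A}(I_\calK\tensor X)B$, which by the standard duality is a diamond norm of a super-operator built from the dilations, as in Problem~11.10 of Ref.~\cite{KitSheVya02Book} and Theorem~3.24 of Ref.~\cite{Ros09PhD}.

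With this reformulation the result follows. Since $A_1\tensor A_2$ and $B_1\tensor B_2$ are Stinespring isometries for $\Phi_1\tensor\Phi_2$ and $\Psi_1\tensor\Psi_2$ into the environment $\calG_1\tensor\calG_2$, the super-operator attached to the product pair is, after the canonical reordering of tensor factors, $\Theta_1\tensor\Theta_2$. The diamond norm (equivalently, the completely bounded norm) is multiplicative under tensor products, so the stabilized quantity for the product channels equals the product of those for the two factors, that is, $F_\textmax(\Phi_1,\Psi_1)\,F_\textmax(\Phi_2,\Psi_2)$; together with the first paragraph this gives equality. I expect the reformulation step to be the main obstacle: one must check that enlarging the input by a reference register and a global unitary $W$ does not overcount, so that the stabilized expression is genuinely equal to $F_\textmax$ and not merely an upper bound, and one must pin down the precise correspondence with the diamond norm (and the compatibility of the associated super-operator with tensor products) in order to legitimately import its multiplicativity. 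Once that correspondence is established, the factorization and the appeal to multiplicativity are routine.
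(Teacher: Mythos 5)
Your proposal is correct and takes essentially the same route the paper intends: the paper states this lemma without proof, citing precisely this argument, namely the Kitaev--Watrous reformulation of $F_\textmax$ (via Stinespring dilations and Uhlmann's theorem) as a stabilized/diamond-type norm of a super-operator built from the dilations, combined with the multiplicativity of the diamond norm (Problem~11.10 of Kitaev--Shen--Vyalyi and Theorem~3.24 of Rosgen's thesis). Your write-up just fills in the details the paper delegates to those references, and the steps you give --- the easy product-input lower bound, the stabilized form, the identification of the product pair's super-operator with $\Theta_1 \tensor \Theta_2$, and the appeal to cb/diamond-norm multiplicativity --- are all sound.
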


From Lemmas~\ref{Lemma: minimum output trace distance and maximum output fidelity}~and~\ref{Lemma: multiplicativity of maximum output fidelity},
one can show the following.

\begin{lemma}
For any Hilbert spaces~$\calH$~and~$\calK$,
any quantum channels~$\Phi$~and~$\Psi$ in ${\Channel(\calH, \calK)}$,
and any $k$ in $\Natural$,
\[
1 - \bigl[ 1 - (D_\textmin (\Phi, \Psi))^2 \bigr]^{\frac{k}{2}}
\leq
D_\textmin (\Phi^{\tensor k}, \Psi^{\tensor k})
\leq
k D_\textmin (\Phi, \Psi).
\]
\label{Lemma: polarization of minimum output trace distance}
\end{lemma}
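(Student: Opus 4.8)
The plan is to reduce everything to the single-copy relations already recorded in Lemma~\ref{Lemma: minimum output trace distance and maximum output fidelity} together with the tensor multiplicativity of Lemma~\ref{Lemma: multiplicativity of maximum output fidelity}, handling the two inequalities by separate routes. Throughout, write $d = D_\textmin(\Phi, \Psi)$ and $F = F_\textmax(\Phi, \Psi)$ for brevity.

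For the lower bound, I would first note that iterating Lemma~\ref{Lemma: multiplicativity of maximum output fidelity} gives $F_\textmax(\Phi^{\tensor k}, \Psi^{\tensor k}) = F^k$. Applying the first inequality of Lemma~\ref{Lemma: minimum output trace distance and maximum output fidelity} to the channels $\Phi^{\tensor k}$ and $\Psi^{\tensor k}$ then yields $D_\textmin(\Phi^{\tensor k}, \Psi^{\tensor k}) \geq 1 - F^k$. It remains to bound $F^k$ in terms of $d$: the second (single-copy) inequality of Lemma~\ref{Lemma: minimum output trace distance and maximum output fidelity} says $d \leq \sqrt{1 - F^2}$, which rearranges to $F \leq \sqrt{1 - d^2}$, and hence $F^k \leq (1 - d^2)^{k/2}$. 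Combining the two gives $D_\textmin(\Phi^{\tensor k}, \Psi^{\tensor k}) \geq 1 - (1 - d^2)^{k/2}$, as required.

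For the upper bound I would argue directly with input states rather than through fidelity. Let $\rho_\ast, \sigma_\ast \in \Density(\calH)$ be states achieving $D(\Phi(\rho_\ast), \Psi(\sigma_\ast)) = d$. Feeding the product inputs $\rho_\ast^{\tensor k}$ and $\sigma_\ast^{\tensor k}$ into $\Phi^{\tensor k}$ and $\Psi^{\tensor k}$ is a valid (if not necessarily optimal) choice in the minimization defining $D_\textmin$, so $D_\textmin(\Phi^{\tensor k}, \Psi^{\tensor k}) \leq D\bigl((\Phi(\rho_\ast))^{\tensor k}, (\Psi(\sigma_\ast))^{\tensor k}\bigr)$. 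The claim then follows from subadditivity of the trace distance on product states, $D(\alpha^{\tensor k}, \beta^{\tensor k}) \leq k\, D(\alpha, \beta)$, applied with $\alpha = \Phi(\rho_\ast)$ and $\beta = \Psi(\sigma_\ast)$.

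This subadditivity is the only ingredient not already stated above, and it is the one small step I would justify carefully; it follows from a standard hybrid argument, bounding $D(\alpha^{\tensor k}, \beta^{\tensor k})$ along the chain of states that replace one tensor factor at a time and using both the triangle inequality and the invariance $D(\alpha \tensor \tau, \beta \tensor \tau) = D(\alpha, \beta)$ of the trace distance under tensoring with a common state $\tau$. Neither inequality presents a genuine obstacle; the only point demanding attention is keeping the exponents straight—in particular using the fidelity-side bound $F \leq \sqrt{1 - d^2}$ (rather than $F \geq 1 - d$) so that the $k/2$ power, and not a linear power, appears in the lower bound.
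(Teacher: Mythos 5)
Your proposal is correct and follows essentially the same route as the paper's proof: the lower bound via the channel Fuchs--van-de-Graaf inequalities (Lemma~\ref{Lemma: minimum output trace distance and maximum output fidelity}) combined with multiplicativity of the maximum output fidelity (Lemma~\ref{Lemma: multiplicativity of maximum output fidelity}), and the upper bound by feeding the single-copy minimizers as product inputs and telescoping with the triangle inequality one tensor factor at a time. Your ``hybrid argument'' for $D(\alpha^{\tensor k}, \beta^{\tensor k}) \leq k\, D(\alpha, \beta)$ is exactly the paper's repeated application of the triangle inequality together with invariance of the trace distance under tensoring with a common state.
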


\begin{proof}
From Lemmas~\ref{Lemma: minimum output trace distance and maximum output fidelity}~and~\ref{Lemma: multiplicativity of maximum output fidelity},
it holds that
\[
1 - \bigl[ 1 - (D_\textmin (\Phi, \Psi))^2 \bigr]^{\frac{k}{2}}
\leq
1 - (F_\textmax (\Phi, \Psi))^k
=
1 - F_\textmax \bigl( \Phi^{\tensor k}, \Psi^{\tensor k} \bigr)
\leq
D_\textmin (\Phi^{\tensor k}, \Psi^{\tensor k}),
\]
and the first inequality of the claim follows.

On the other hand, by the triangle inequality,
for any quantum states~$\rho$~and~$\sigma$ in ${\Density(\calH)}$,
\[
\begin{split}
D \bigl( (\Phi(\rho))^{\tensor k}, (\Psi(\sigma))^{\tensor k} \bigr)
&
\leq
D \bigl( (\Phi(\rho))^{\tensor k}, \Psi(\sigma) \tensor (\Phi(\rho))^{\tensor (k-1)} \bigr)
+
D \bigl( \Psi(\sigma) \tensor (\Phi(\rho))^{\tensor (k-1)}, (\Psi(\sigma))^{\tensor k} \bigr)
\\
&
=
D \bigl( \Phi(\rho), \Psi(\sigma) \bigr)
+
D \bigl( (\Phi(\rho))^{\tensor (k-1)}, (\Psi(\sigma))^{\tensor (k-1)} \bigr).
\end{split}
\]
By repeatedly applying this bound with $\rho_\ast$ and $\sigma_\ast$ in ${\Density(\calH)}$
that minimize the expression~${D(\Phi(\rho), \Psi(\sigma))}$,
it holds that
\[
D_\textmin (\Phi^{\tensor k}, \Psi^{\tensor k})
\leq
D \bigl( (\Phi(\rho_\ast))^{\tensor k}, (\Psi(\sigma_\ast))^{\tensor k} \bigr)
\leq
k D \bigl( \Phi(\rho_\ast), \Psi(\sigma_\ast) \bigr)
=
k D_\textmin (\Phi, \Psi),
\]
and the second inequality of the claim follows.
\end{proof}

Finally,
for any quantum state~$\rho$,
the \emph{von~Neumann entropy} of $\rho$ is defined by
\[
S(\rho) = - \tr(\rho \log \rho).
\]
A special case of the von~Neumann entropy is the \emph{Shannon entropy}
of a probability distribution~$\mu$,
which is defined by
\[
H(\mu) = S(\mu)
\]
by viewing probability distributions as special cases of quantum states
with diagonal density operators.

For Hilbert spaces~$\calH$~and~$\calK$
and a quantum channel~$\Phi$ in ${\Channel(\calH, \calK)}$,
the \emph{maximum output von~Neumann entropy} of $\Phi$ is defined by
\[
S_\textmax(\Phi) = \max \set{S(\Phi(\rho))}{\rho \in \Density(\calH)}.
\]

This paper uses the following two properties on von~Neumann entropy.

The first lemma provides an upper bound on the von~Neumann entropy of a mixture 
of quantum states~\cite[Theorem 11.10]{NieChu00Book}.

\begin{lemma}
For any Hilbert space~$\calH$ and any quantum state~$\rho$ in ${\Density(\calH)}$
such that
${\rho = \sum_i \mu_i \rho_i}$
for some probability distribution~${\mu = \{ \mu_i \}}$
and quantum states~$\rho_i$ in ${\Density(\calH)}$,
\[
S(\rho) \leq H(\mu) + \sum_i \mu_i S(\rho_i).
\]
\label{Lemma: NC00Thm11.10}
\end{lemma}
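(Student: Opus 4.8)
The plan is to reduce the inequality to a per-index statement about quantum relative entropy. Writing the relative entropy as $S(\rho_i \,\|\, \rho) = \tr(\rho_i \log \rho_i) - \tr(\rho_i \log \rho)$, I would first record the identity
\[
S(\rho) - \sum_i \mu_i S(\rho_i) = \sum_i \mu_i\, S(\rho_i \,\|\, \rho).
\]
This follows by substituting ${\rho = \sum_i \mu_i \rho_i}$ into only the left factor of ${S(\rho) = -\tr(\rho \log \rho)}$, which gives ${S(\rho) = -\sum_i \mu_i \tr(\rho_i \log \rho)}$, and combining this with ${\sum_i \mu_i S(\rho_i) = -\sum_i \mu_i \tr(\rho_i \log \rho_i)}$. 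The claimed bound is therefore equivalent to ${\sum_i \mu_i S(\rho_i \,\|\, \rho) \leq H(\mu)}$.

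The heart of the argument is the per-index estimate ${S(\rho_i \,\|\, \rho) \leq -\log \mu_i}$, valid for every $i$ with ${\mu_i > 0}$. To obtain it, I would use that each term of the mixture dominates no more than the whole: since every ${\mu_j \rho_j}$ is positive semidefinite, ${\rho = \sum_j \mu_j \rho_j \succeq \mu_i \rho_i}$ as operators. Operator monotonicity of the logarithm then yields ${\log \rho \succeq \log(\mu_i \rho_i) = (\log \mu_i) I + \log \rho_i}$, and taking the expectation in the state $\rho_i$ (which preserves the order since ${\rho_i \succeq 0}$) gives ${\tr(\rho_i \log \rho) \geq \log \mu_i + \tr(\rho_i \log \rho_i)}$, i.e. ${S(\rho_i \,\|\, \rho) \leq -\log \mu_i}$. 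Multiplying by $\mu_i$ and summing produces ${\sum_i \mu_i S(\rho_i \,\|\, \rho) \leq -\sum_i \mu_i \log \mu_i = H(\mu)}$, which is exactly the inequality needed.

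The step I expect to require the most care is this operator inequality for the logarithm: one must invoke operator monotonicity of $\log$ on positive operators and verify a support condition so that the relative entropies are finite. Fortunately the domination ${\rho \succeq \mu_i \rho_i}$ already forces the support of $\rho_i$ to lie inside that of $\rho$ whenever ${\mu_i > 0}$, so ${S(\rho_i \,\|\, \rho)}$ is finite and the manipulations are legitimate; indices with ${\mu_i = 0}$ contribute nothing and can be discarded at the outset.

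As a self-contained alternative that sidesteps operator monotonicity, I would instead diagonalize each ${\rho_i = \sum_j q_{ij} \ketbra{e_{ij}}}$ and regard ${\rho = \sum_{i,j} \mu_i q_{ij} \ketbra{e_{ij}}}$ as a (generally non-orthogonal) pure-state ensemble with weights ${\mu_i q_{ij}}$. The majorization relation between the eigenvalue vector of $\rho$ and the weight vector of any such ensemble, together with Schur concavity of the Shannon entropy, gives ${S(\rho) \leq H(\{\mu_i q_{ij}\})}$; expanding the right-hand side and using ${\sum_j q_{ij} = 1}$ and ${H(\{q_{ij}\}_j) = S(\rho_i)}$ recovers ${H(\mu) + \sum_i \mu_i S(\rho_i)}$, completing the proof by an independent route.
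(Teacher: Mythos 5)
Your proposal is correct; note, however, that the paper does not prove this lemma at all --- it imports it as Theorem~11.10 of Nielsen and Chuang~\cite{NieChu00Book} --- so what you have produced is a self-contained proof of a fact the paper only cites. Your main route is the standard relative-entropy (Holevo quantity) argument: the rearrangement $S(\rho) - \sum_i \mu_i S(\rho_i) = \sum_i \mu_i \bigl( \tr(\rho_i \log \rho_i) - \tr(\rho_i \log \rho) \bigr)$ is a correct use of linearity of the trace, and the per-index bound $S(\rho_i \,\|\, \rho) \leq \log(1/\mu_i)$ does follow from $\rho \succeq \mu_i \rho_i$ together with operator monotonicity of the logarithm. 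The support technicality you flag is genuine but you resolve it correctly: $\rho \succeq \mu_i \rho_i$ with $\mu_i > 0$ forces $\ker \rho \subseteq \ker \rho_i$, and the operator inequality can be made fully rigorous by applying monotonicity to $\rho + \epsilon I \succeq \mu_i (\rho_i + \epsilon I)$ (valid since $\mu_i \leq 1$) and letting $\epsilon \to 0$, the trace against $\rho_i$ converging because $\rho_i$ places no weight on $\ker \rho$. Your alternative via majorization is also correct --- the weight vector of any pure-state ensemble of $\rho$ is majorized by the spectrum of $\rho$, and Schur concavity of $H$ together with the splitting $H(\{\mu_i q_{ij}\}) = H(\mu) + \sum_i \mu_i H(\{q_{ij}\}_j)$ gives the claim --- and it is closer in spirit to textbook treatments, though Nielsen--Chuang's own proof is yet a third route (attach a classical register recording $i$ and compare $S(\rho)$ with the entropy of the resulting classical--quantum extension). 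Either of your arguments would serve as a valid replacement for the citation: the relative-entropy route has the advantage of exhibiting the gap in the inequality explicitly as $H(\mu) - \sum_i \mu_i S(\rho_i \,\|\, \rho)$, while the majorization route avoids operator monotonicity entirely and uses only finite-dimensional linear algebra and Schur concavity.
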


The second lemma describes relations
between the von~Neumann entropy of a quantum state
and the trace distance between the state and the totally mixed state
(a similar but slightly stronger statement appeared in Ref.~\cite{ChaCioKerVad07CePrint} without a proof).

\begin{lemma}
For any quantum state~$\rho$ of $n$~qubits,
it holds that
\[
\bigl( 1 - D(\rho, (I/2)^{\tensor n}) - 2^{-n} \bigr) n
\leq
S(\rho)
\leq
n - \log \frac{1}{1 - D(\rho, (I/2)^{\tensor n})} + 2.
\]
\label{Lemma: trace distance and entropy}
\end{lemma}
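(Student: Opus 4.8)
The plan is to reduce the statement to a purely classical inequality about the eigenvalue distribution of $\rho$, and then prove the two bounds separately by elementary means. First I would diagonalize $\rho$. Since $(I/2)^{\tensor n}$ is a scalar multiple of the identity, it is diagonal in the eigenbasis of $\rho$; so, writing $N = 2^n$ and letting $p = (p_1, \dots, p_N)$ be the list of eigenvalues of $\rho$, one has $S(\rho) = H(p)$ and $D(\rho, (I/2)^{\tensor n}) = \frac12 \sum_i \abs{p_i - 1/N} = \SD(p, u)$, where $u$ is the uniform distribution on $N$ points. Writing $d = \SD(p, u)$, the whole lemma becomes a statement relating $H(p)$ and $d$, and I would measure the gap to the maximum through the relative-entropy identity $n - H(p) = \relent{p}{u}$.

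For the lower bound (equivalently $\relent{p}{u} \le (d + 2^{-n}) n$), let $A = \set{i}{p_i > 1/N}$. The terms of $\relent{p}{u} = \sum_i p_i \log(N p_i)$ coming from indices outside $A$ are non-positive and can be dropped, leaving $\relent{p}{u} \le \sum_{i \in A} p_i \log(N p_i)$. Parametrizing $p_i = (1 + t_i)/N$ with $t_i > 0$ for $i \in A$, so that $\sum_{i \in A} t_i = dN$, this sum equals $\frac1N \sum_{i \in A} \phi(t_i)$ with $\phi(t) = (1+t)\log(1+t)$. The key observation is that $\phi$ is convex with $\phi(0) = 0$, hence \emph{superadditive} on the nonnegative reals: $\sum_{i \in A} \phi(t_i) \le \phi\bigl(\sum_{i \in A} t_i\bigr) = \phi(dN)$. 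This yields $\relent{p}{u} \le (d + 2^{-n}) \log(1 + dN) = (d + 2^{-n})\bigl(n + \log(d + 2^{-n})\bigr)$, and a one-line case analysis (according to whether $d + 2^{-n} \le 1$, using the trivial bound $\relent{p}{u} \le n$ in the opposite case) gives $\relent{p}{u} \le (d + 2^{-n}) n$, as required.

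For the upper bound I would apply the grouping (chain rule) property of Shannon entropy to the partition $\{A, A^c\}$: with $q = p(A)$ and $h$ the binary entropy, $H(p) = h(q) + q\, H(p|_A) + (1-q)\, H(p|_{A^c}) \le 1 + q \log \abs{A} + (1-q) n$. Since every index in $A$ satisfies $p_i > 1/N$, one has $q = p(A) = d + \abs{A}/N$, and $q \le 1$ forces $\abs{A}/N \le 1 - d$. Substituting $\alpha = \abs{A}/N$ collapses the bound to $H(p) \le 1 + n + (d + \alpha) \log \alpha$, so it remains to verify the elementary inequality $(d + \alpha) \log \alpha \le \log(1-d) + 1$ for $\alpha \in (0, 1-d]$. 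This follows by writing $(d + \alpha) \log \alpha = d \log \alpha + \alpha \log \alpha$, bounding $\alpha \log \alpha \le 0$ and $d \log \alpha \le d \log(1-d) \le \log(1-d) + \frac{\log e}{e}$, the last step because, setting $x = 1-d$, one has $-(1-d)\log(1-d) = -x \log x \le \frac{\log e}{e} < 1$.

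The only genuinely delicate point is the lower bound. The naive estimate $\relent{p}{u} \le n\, p(A)$ is far too weak, since $p(A)$ can be close to $1$ when many eigenvalues sit just barely above the average $1/N$; that would only produce an additive error that is a \emph{constant} rather than the required $2^{-n} n$. The superadditivity/concentration step is exactly what converts the estimate into the sharp $(d + 2^{-n})$ factor, and securing this dependence (instead of losing an additive constant, as a direct appeal to Fannes' inequality or Pinsker's inequality would) is the crux of the argument. The upper bound, by contrast, is a routine application of the entropy chain rule together with the boundedness of $-x \log x$.
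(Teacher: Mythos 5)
Your proof is correct, and on both inequalities it takes a genuinely different route from the paper's, even though the opening move — diagonalizing $\rho$ and reducing to a purely classical statement relating $H(p)$ and $\SD(p,u)$ — is the same. For the lower bound, the paper argues via concavity of the Shannon entropy that the entropy-minimizing distribution at fixed statistical distance $\gamma$ from uniform has an explicit extremal form (one atom carrying the full excess $\gamma$, the remaining mass spread at level $2^{-n}$), and then computes the entropy of that distribution; this structural claim about minimizers is asserted rather than proved in detail. Your argument reaches the same extremal configuration implicitly — the superadditivity of $\phi(t) = (1+t)\log(1+t)$ is exactly the statement that concentrating the excess $dN$ on one atom is worst — but it does so as a short, fully rigorous estimate on $\relent{p}{u}$, with no need to characterize minimizers. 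For the upper bound, the paper simply cites the Vajda inequality $\relent{\mu}{\iota} \geq \frac{1}{\ln 2}\bigl(\ln\frac{1}{1-\gamma} - 1\bigr)$ and converts it through $\relent{\mu}{\iota} = n - H(\mu)$, whereas your chain-rule/grouping argument replaces the citation by an elementary self-contained computation, at the cost of a marginally worse constant ($1 + \frac{\log e}{e} \approx 1.53$ versus $\frac{1}{\ln 2} \approx 1.44$), both comfortably within the lemma's $+2$. Two minor points you should record: the case $d + 2^{-n} > 1$ in your lower bound is actually vacuous (since $d \leq 1 - 2^{-n}$ always), though your case analysis harmlessly covers it; and in the upper bound the degenerate case $A = \emptyset$ (i.e., $\alpha = 0$) should be split off explicitly — there $\rho$ is the maximally mixed state, $d = 0$, and the bound is trivial — since the conditional entropy $H(p|_A)$ is not defined on an empty set.
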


\begin{proof}
First we show the first inequality. 
By considering the spectral decomposition of $\rho$,
one can write
${\rho = \sum_{x \in \Binary^n} \mu_x \ketbra{\psi_x}}$
for some probability distribution~${\mu = \{ \mu_x \}_{x \in \Binary^n}}$ over $\Binary^n$
and orthonormal basis~${\{ \ket{\psi_x} \}_{x \in \Binary^n}}$.
Note that
${D(\rho, (I/2)^{\tensor n}) = \SD(\mu, \iota)}$ and ${S(\rho) = H(\mu)}$,
where $\iota$ is the uniform distribution over $\Binary^n$.
Hence, it suffices to show that
the inequality
${H(\mu) \geq (1 - \SD(\mu, \iota)) n - \frac{n}{2^n}}$
holds for any probability distribution~$\mu$.

Let ${\gamma = \SD(\mu, \iota)}$.
By the concavity of the Shannon entropy,
any probability distribution~${\nu = \{ \nu_x \}_{x \in \Binary^n}}$ over $\Binary^n$
that minimizes ${H(\nu)}$ under the condition~${\SD(\nu, \iota) = \gamma}$
can be expressed as follows:
there exist ${x_0, x_1, \ldots, x_k, x_{k+1}}$ in $\Binary^n$ such that
\[
\nu_x =
\begin{cases}
\frac{1}{2^n} + \gamma & \text{if ${x = x_0}$},\\
\frac{1}{2^n} & \text{if ${x \in \{x_1, \ldots, x_k\}}$},\\
\frac{\varepsilon}{2^n} & \text{if ${x = x_{k+1}}$},\\
0 & \text{otherwise},
\end{cases}
\]
where
${k = \floor{2^n(1 - \gamma)} - 1}$
and
${\varepsilon = 2^n(1 - \gamma) - \floor{2^n(1 - \gamma)}}$
(in fact, any probability distribution with statistical distance~$\gamma$ from the uniform distribution~$\iota$
is necessarily a mixture of probability distributions of this type).
It follows that
\[
\begin{split}
H(\nu) 
&
=
\Bigl( \frac{1}{2^n} + \gamma \Bigr) \log \biggl( \frac{1}{\frac{1}{2^n} + \gamma} \biggr)
+
k \frac{n}{2^n}
+
\frac{\varepsilon}{2^n} \log \frac{2^n}{\varepsilon}
\\
&
\geq
\bigl( \floor{2^n(1 - \gamma)} - 1 \bigr) \frac{n}{2^n}
+
\frac{\varepsilon}{2^n} \Bigl( n + \log \frac{1}{\varepsilon} \Bigr)
\\
&
=
(1 - \gamma) n - \frac{n}{2^n} + \frac{\varepsilon}{2^n} \log \frac{1}{\varepsilon}
\\
&
\geq
(1 - \gamma) n - \frac{n}{2^n},
\end{split}
\]
and thus,
the inequality
${H(\mu) \geq (1 - \SD(\mu, \iota)) n - \frac{n}{2^n}}$ holds.

Now we show the second inequality.
Similarly to the first inequality case, 
it suffices to show that
the inequality
${H(\mu) \leq n - \log \frac{1}{1 - \SD(\mu, \iota)} + 2}$
holds for any probability distribution~$\mu$.

Again let ${\gamma = \SD(\mu, \iota)}$.
From the Vajda inequality~\cite{Vaj70IEEEIT}
(see Theorem~4.8 of Ref.~\cite{Dru12ECCC} also),
it holds that
\[
\relent{\mu}{\iota}
\geq 
\frac{1}{\ln 2}
\Bigl( \ln \frac{1}{1 - \gamma} - 1 \Bigr),
\]
where $\relent{\cdot}{\cdot}$ denotes the relative entropy between two probability distributions. 
Since
${\relent{\mu}{\iota} = n - H(\mu)}$,
it follows that
\[
H(\mu)
\leq
n - \frac{1}{\ln 2} \Bigl( \ln \frac{1}{1 - \gamma} - 1 \Bigr)
=
n - \log \frac{1}{1 - \gamma} + \frac{1}{\ln 2}
\leq
n - \log \frac{1}{1 - \gamma} + 2,
\]
as desired.
\end{proof}


\subsection{Polynomial-Time Uniformly Generated Families of Quantum Circuits}
\label{Subsection: Uniform QC}

Following conventions,
this paper defines quantum Arthur-Merlin proof systems
in terms of quantum circuits.
In particular, this paper uses the following notion of
polynomial-time uniformly generated families of quantum circuits.

A family~${\{ Q_x \}}$ of quantum circuits is
\emph{polynomial-time uniformly generated}
if there exists a deterministic procedure
that, on every input~$x$, outputs a description of $Q_x$
and runs in time polynomial in $\abs{x}$.
It is assumed that the circuits in such a family are composed of gates
in some reasonable, universal, finite set of quantum gates.
Furthermore, it is assumed that the number of gates in any circuit
is not more than the length of the description of that circuit.
Therefore $Q_x$ must have size polynomial in $\abs{x}$.
For convenience,
we may identify a circuit~$Q_x$ with the unitary operator it induces.

For the results in which perfect completeness is concerned,
this paper assumes a gate set
with which the Hadamard and any classical reversible transformations
can be exactly implemented.
Note that this assumption is satisfied by many standard gate sets
such as the Shor basis~\cite{Sho96FOCS}
consisting of the Hadamard, controlled-$i$-phase-shift, and Toffoli gates,
and the gate set consisting of the Hadamard, Toffoli, and NOT gates~\cite{Shi02QIC, Aha03arXiv}.
Moreover, as the Hadamard transformation in some sense can be viewed as
a quantum analogue of the classical operation of flipping a fair coin,
our assumption would be the most natural quantum correspondence 
to the tacit classical assumption in randomized complexity theory
that fair coins and perfect logical gates are available.
Hence we believe that our condition is very reasonable and not restrictive.

Since non-unitary and unitary quantum circuits
are equivalent in computational power~\cite{AhaKitNis98STOC},
it is sufficient to treat only unitary quantum circuits,
which justifies the above definition.
Nevertheless, for readability,
most procedures in this paper will be described
using intermediate projective measurements
and unitary operations conditioned on the outcome of the measurements.
All of these intermediate measurements can be deferred
to the end of the procedure by a standard technique
so that the procedure becomes implementable with a unitary circuit.


\subsection{Generalized Quantum Arthur-Merlin Proof Systems}
\label{Subsection: Generalized QAM}

A generalized quantum Arthur-Merlin (QAM) proof system consists of
a polynomial-time quantum verifier and an all-powerful quantum prover.

For any constant~${m \geq 1}$ and any $t_j$ in ${\{\rmc, \rmq\}}$ 
for each $j$ in $\{1, \ldots, m\}$,
a generalized QAM proof system is of ${t_m \cdots t_1}$-QAM type
if the message at the ${(m-j+1)}$st turn is quantum (resp. is restricted to classical)
for each $j$ such that ${t_j = \rmq}$ (resp. ${t_j = \rmc}$).

Formally, an $m$-turn quantum verifier~$V$ for generalized quantum Arthur-Merlin proof systems
is a polynomial-time computable mapping of the form~$\function{V}{\Binary^\ast}{\Binary^\ast}$.
For each $x$ in $\Binary^\ast$,
${V(x)}$ is interpreted as a description of a quantum circuit
acting on ${(q_\sfV(\abs{x}) + m q_\sfM(\abs{x}))}$ qubits
with a specification of a ${q_\sfV(\abs{x})}$-qubit quantum register~$\sfV$
and a ${q_\sfM(\abs{x})}$-qubit quantum register~$\sfM_j$ for each $j$ in ${\{1, \ldots, m\}}$,
for some polynomially bounded functions~$\function{q_\sfV, q_\sfM}{\Nonnegative}{\Natural}$.
One of the qubits in $\sfV$ is designated as an output qubit.
At the ${(m-j+1)}$st turn for any even~$j$ such that ${2 \leq j \leq m-1}$,
$V$ receives a message from a prover, either classical or quantum,
which is stored in the quantum register~$\sfM_{m-j}$. 
If the system is of ${t_m \cdots t_1}$-QAM type,
at the ${(m-j+1)}$st turn for any even~$j$ such that ${2 \leq j \leq m}$,
if ${t_j = \rmc}$, $V$ flips a fair coin ${q_\sfM(\abs{x})}$ times
to obtain a binary string~$r$ of length~${q_\sfM(\abs{x})}$,
then sends $r$ to a prover,
and stores $r$ in the quantum register~$\sfM_{m-j+1}$,
while if ${t_j = \rmq}$, $V$ generates ${q_\sfM(\abs{x})}$~EPR pairs~${\ket{\Phi^+}^{\tensor q_\sfM(\abs{x})}}$,
then sends the second halves of them to a prover,
and stores the first halves of them in ~$\sfM_{m-j+1}$.
Upon receiving a message at the $m$th turn from a prover,
either classical or quantum,
which is stored in the quantum register~$\sfM_m$,
$V$ prepares the ${q_\sfV(\abs{x})}$-qubit quantum register~$\sfV$,
all the qubits of which are initialized to the $\ket{0}$~state.
$V$ then performs the final verification procedure
by applying the circuit~${V(x)}$ to ${(\sfV, \sfM_1, \ldots, \sfM_m)}$
and then measuring the designated output qubit in the computational basis,
where the outcome~$\ket{1}$ is interpreted as ``accept'',
and the outcome~$\ket{0}$ is interpreted as ``reject''.

Similarly, an $m$-turn quantum prover~$P$ for generalized quantum Arthur-Merlin proof systems
is a mapping from $\Binary^\ast$ to a sequence of $\ceil{m/2}$~unitary transformations
with a specification of quantum registers they acts on.
No restrictions are placed on the complexity of $P$.
For each $x$ in $\Binary^\ast$, ${P(x)}$ is interpreted as a sequence of
$\ceil{m/2}$~unitary transformations~${P(x)_{2\ceil{m/2}-1}, \ldots, P(x)_3, P(x)_1}$
acting on ${(q_\sfM(\abs{x}) + q_\sfP(\abs{x}))}$ qubits
with a specification of a ${q_\sfP(\abs{x})}$-qubit quantum register~$\sfP$,
for some polynomially bounded function~$\function{q_\sfM}{\Nonnegative}{\Natural}$
and some function~$\function{q_\sfP}{\Nonnegative}{\Natural}$.
At the beginning of the protocol,
$P$ prepares the ${q_\sfP(\abs{x})}$-qubit quantum register~$\sfP$
(and a ${q_\sfM(\abs{x})}$-qubit quantum register~$\sfM_1$ also, if $m$ is odd).
Without loss of generality, one can assume that
all the qubits in $\sfP$ (and in $\sfM_1$ when $P$ prepares it)
are initialized to the $\ket{0}$~state at the beginning of the protocol.
At the ${(m-j+1)}$st turn for any odd~$j$ such that ${1 \leq j \leq m-1}$,
$P$ receives a message from the verifier, either classical or quantum,
which is stored in the quantum register~$\sfM_{m-j+1}$.
If a system is of ${t_m \cdots t_1}$-QAM type,
at the ${(m-j+1)}$st turn for any odd~$j$ such that ${1 \leq j \leq m}$,
$P$ applies ${P(x)_j}$ to ${(\sfM_{m-j+1}, \sfP)}$.
If ${t_j = \rmc}$, $P$ further measures each qubit in $\sfM_{m-j+1}$ 
in the computational basis.
$P$ then sends $\sfM_{m-j+1}$ to a verifier.

The complexity class~${\GQAM{t_m \cdots t_1}(m,c,s)}$
derived from generalized quantum Arthur-Merlin proof systems of ${t_m \cdots t_1}$-QAM type,
with completeness~$c$ and soundness~$s$,
is defined as follows.

\begin{definition}
Given a constant~${m \in \Natural}$,
functions~$\function{c,s}{\Nonnegative}{[0,1]}$ satisfying ${c > s}$,
and ${t_j \in \{\rmc, \rmq\}}$ for each ${j \in \{1, \ldots, m\}}$,
a promise problem~${A = (A_\yes, A_\no)}$ is in ${\GQAM{t_m \cdots t_1}(m, c, s)}$
if there exists an $m$-turn quantum verifier~$V$ for ${t_m \cdots t_1}$-QAM type systems,
such that, for every input~${x \in \Binary^\ast}$,
\begin{description}
\item[(Completeness)]
  if ${x \in A_\yes}$, then there exists an $m$-turn quantum prover~$P$ for ${t_m \cdots t_1}$-QAM type systems
  that makes $V$ accept $x$ with probability at least ${c(\abs{x})}$,
  and
\item[(Soundness)]
  if ${x \in A_\no}$, then for any $m$-turn quantum prover~$P'$ for ${t_m \cdots t_1}$-QAM type systems,
  $V$ accepts $x$ with probability at most ${s(\abs{x})}$.
\end{description}
\end{definition}

Using this definition,
the classes~${\GQAM{t_m \cdots t_1}(m)}$~and~${\GQAM{t_m \cdots t_1}_1(m)}$
of problems having 
a two-sided bounded error generalized quantum Arthur-Merlin proof system of ${t_m \cdots t_1}$-QAM type,
and that of one-sided bounded error of perfect completeness, respectively,
are defined as follows.

\begin{definition}
Given a constant~${m \in \Natural}$
and ${t_j \in \{\rmc, \rmq\}}$ for each ${j \in \{1, \ldots, m\}}$,
a promise problem~${A = (A_\yes, A_\no)}$ is in ${\GQAM{t_m \cdots t_1}(m)}$
iff $A$ is in ${\GQAM{t_m \cdots t_1}(m, 1- \varepsilon, \varepsilon)}$
for some negligible function~$\function{\varepsilon}{\Nonnegative}{[0,1]}$.
\label{Definition: t_m ... t_1-QAM(m)}
\end{definition}

\begin{definition}
Given a constant~${m \in \Natural}$
and ${t_j \in \{\rmc, \rmq\}}$ for each ${j \in \{1, \ldots, m\}}$,
a promise problem~${A = (A_\yes, A_\no)}$ is in ${\GQAM{t_m \cdots t_1}_1(m)}$
iff $A$ is in ${\GQAM{t_m \cdots t_1}(m, 1, \varepsilon)}$
for some negligible function~$\function{\varepsilon}{\Nonnegative}{[0,1]}$.
\label{Definition: t_m ... t_1-QAM_1(m)}
\end{definition}

In the case where the number of turns is clear,
the parameter~$m$ may be omitted,
e.g., ${\mathrm{cc}\qqQAM(4)}$ may be abbreviated as ${\mathrm{cc}\qqQAM}$.
The following lemmas ensure that Definitions~\ref{Definition: t_m ... t_1-QAM(m)}~and~\ref{Definition: t_m ... t_1-QAM_1(m)}
give a robust definition in terms of completeness and soundness parameters.

\begin{lemma}
For any constant~${m \in \Natural}$, any ${t_1, \ldots, t_m \in \{\rmc, \rmq\}}$,
any polynomially bounded function~$p$,
and any functions~$\function{c, s}{\Nonnegative}{[0,1]}$
satisfying ${c - s \geq \frac{1}{q}}$ for some polynomially bounded function~$q$,
\[
\GQAM{t_m \cdots t_1}(m, c, s)
\subseteq
\GQAM{t_m \cdots t_1}(m, 1 - 2^{-p}, 2^{-p}).
\]
\label{Lemma: amplification}
\end{lemma}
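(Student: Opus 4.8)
The plan is to use parallel repetition followed by a threshold (counting) decision, keeping the number of turns fixed at $m$. Given a $t_m \cdots t_1$-QAM verifier $V$ with completeness $c$ and soundness $s$ where $c - s \geq 1/q$, set $\theta = (c+s)/2$ and let $k = k(\abs{x})$ be a polynomially bounded function of order $q^2(p+1)$, to be fixed at the end. The new verifier $V'$ simulates $k$ independent copies of $V$ in parallel: at each of its turns it prepares the coins or EPR halves for all $k$ copies simultaneously and sends them together, and upon receiving the last message it applies the original verification circuit to each copy, counts the number of copies that accept, and accepts iff this count is at least $\theta k$. Since the message registers of distinct copies are disjoint, this is again a legal $m$-turn $t_m \cdots t_1$-QAM protocol, and all intermediate measurements can be deferred as usual so that $V'$ remains unitary.

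Completeness is routine. If $x \in A_\yes$, let the new prover run the optimal single-copy prover $P$ independently on each copy. Then the joint final state is a tensor product over copies, so the $k$ acceptance events are mutually independent, each occurring with probability at least $c$; by a Hoeffding bound the number of accepting copies is at least $\theta k$ except with probability at most $e^{-2(c-\theta)^2 k} = e^{-(c-s)^2 k/2} \le e^{-k/(2q^2)}$, which falls below $2^{-p}$ once $k = \Theta(q^2(p+1))$.

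The soundness analysis is the main obstacle, since a dishonest prover need not treat the copies independently: it may entangle its responses (and, when a prover turn sees quantum public coins, correlate its strategy through the shared EPR halves) across all $k$ copies, so the per-copy acceptance events are not independent and a naive Chernoff bound does not apply. I would reduce the required concentration to a multiplicativity (parallel-repetition) bound on the value of the underlying protocol. The final acceptance predicates $\Pi_1, \ldots, \Pi_k$ act on disjoint registers and hence commute, so measuring all of them yields a genuine classical joint distribution of outcomes $X_1, \ldots, X_k \in \{0,1\}$. The key estimate to establish is that for \emph{every} subset $S \subseteq \{1, \ldots, k\}$ and every prover, $\Pr[\text{all copies in } S \text{ accept}] = \expect{\prod_{i \in S} X_i} \leq s^{\abs{S}}$: a prover forcing every copy in $S$ to accept induces, after tracing out the remaining copies (which serve only as additional private workspace), a legitimate single-copy strategy against an independent instance of $V$ for each $i \in S$, so multiplicativity of the acceptance probability gives the product bound. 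Granting this subset bound, the exponential-moment expansion $\expect{\prod_i e^{\lambda X_i}} = \sum_{S} (e^\lambda - 1)^{\abs{S}} \expect{\prod_{i \in S} X_i} \le (1 + (e^\lambda - 1)s)^k$ shows that the moment generating function is dominated by that of $k$ independent $\mathrm{Bernoulli}(s)$ variables, and the usual Chernoff argument then yields $\Pr[\sum_i X_i \ge \theta k] \le e^{-2(\theta - s)^2 k} \le 2^{-p}$ for the same $k$.

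The crux is therefore the multiplicativity claim $\Pr[\text{all of } S \text{ accept}] \le s^{\abs{S}}$, and I expect this to be the hard step. I would obtain it from a characterization of the maximum acceptance probability of the constant-turn quantum interactive proof in terms of a maximum output fidelity between two channels, combined with the multiplicativity of that fidelity under tensor products (Lemma~\ref{Lemma: multiplicativity of maximum output fidelity}); this is exactly the behaviour that makes single-prover quantum interactive proofs amplify cleanly, as in the amplification of $\QIP$ by Kitaev and Watrous. The only additional care is to check that such a characterization applies uniformly across all $t_m \cdots t_1$-QAM types, including the cases where the leading turns carry classical coins and where the first prover message is quantum; once this is verified, the product bound, and hence the stated inclusion, follows.
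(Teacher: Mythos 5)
Your probabilistic superstructure is sound and genuinely different from the paper's route. The paper argues in two stages: it first applies a known threshold-repetition lemma (Lemma~\ref{Lemma: exp-small completeness error}), whose soundness analysis needs only the marginal bound ${\expect{X_i} \leq s}$ and a Markov argument (no repetition theorem at all), yielding completeness exponentially close to~$1$ but soundness only ${1 - \frac{c-s}{2}}$; it then runs a $k$-fold repetition in which the verifier accepts only if \emph{all} copies accept, and invokes the perfect parallel repetition theorem for quantum interactive proofs~\cite{Gut09PhD} exactly once to push the soundness down to $2^{-p}$. Your single-stage threshold protocol instead derives a Chernoff-type soundness bound against entangled provers directly: the expansion ${\prod_i e^{\lambda X_i} = \sum_{S} (e^\lambda - 1)^{\abs{S}} \prod_{i \in S} X_i}$ is exact for $\{0,1\}$-valued variables, the domination for ${\lambda \geq 0}$ follows from the subset bound, and your reduction of ${\expect{\prod_{i \in S} X_i}}$ to the value of the $\abs{S}$-fold AND-repetition (with the prover privately simulating the verifier's coins and EPR halves for the copies outside $S$) is a correct tracing-out argument. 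What your route buys is a strictly stronger statement about threshold repetition than Lemma~\ref{Lemma: exp-small completeness error}; what it costs is that the repetition theorem must be applied to every subset rather than once.

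There is, however, a genuine gap in how you propose to prove the crucial multiplicativity. The characterization of the maximum acceptance probability as a maximum output fidelity between two channels, combined with Lemma~\ref{Lemma: multiplicativity of maximum output fidelity}, is the Kitaev--Watrous argument and applies to \emph{three-turn} protocols only; the present lemma concerns every constant~$m$, and for ${m \geq 4}$ no such fidelity characterization of the value is available. What is actually needed (and what the paper cites) is Gutoski's perfect parallel repetition theorem for general quantum interactive proof systems~\cite{Gut09PhD}. Relatedly, the ``additional care'' you defer is a real step, not a formality: the repetition theorem is stated for unrestricted QIP provers, whereas a ${t_m \cdots t_1}$-QAM prover is restricted (classical-turn messages are measured). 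The paper bridges this by having the verifier CNOT-copy every classical message into a register that is never touched again, producing an unrestricted protocol~${Q(\Pi)}$ with the same value and satisfying ${(Q(\Pi))^{\tensor k} = Q(\Pi^{\tensor k})}$, so that the restricted repetition also has value exactly~$a^k$. With Gutoski's theorem plus this translation substituted for your fidelity argument, your subset bound holds and the rest of your proof goes through.
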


\begin{lemma}
For any constant~${m \in \Natural}$, any ${t_1, \ldots, t_m \in \{\rmc, \rmq\}}$,
any polynomially bounded function~$p$,
and any function~$\function{s}{\Nonnegative}{[0,1]}$
satisfying ${1 - s \geq \frac{1}{q}}$ for some polynomially bounded function~$q$,
\[
\GQAM{t_m \cdots t_1}(m, 1, s)
\subseteq
\GQAM{t_m \cdots t_1}(m, 1, 2^{-p}).
\]
\label{Lemma: amplification for perfect completeness case}
\end{lemma}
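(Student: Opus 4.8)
The plan is to amplify soundness by \emph{parallel repetition} with a \emph{conjunctive} acceptance rule, i.e.\ to run $k$ independent copies of the given $t_m\cdots t_1$-QAM system simultaneously and accept only if \emph{every} copy accepts. First I would construct the $k$-fold verifier $V^{(k)}$: at each verifier turn it produces, for each of the $k$ copies, its own batch of $q_{\sfM}$ fresh coins (when the corresponding $t_j=\rmc$) or $q_{\sfM}$ fresh EPR halves (when $t_j=\rmq$), sends all of them, and at the final turn applies the original circuit $V(x)$ to each copy's registers separately, accepting iff all $k$ output qubits read $1$. Crucially, this keeps the number of turns equal to $m$ and leaves the type of every message unchanged, so the repeated system is still of $t_m\cdots t_1$-QAM type; only the message lengths grow by a factor $k$. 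A \emph{conjunctive} rule (rather than the threshold/majority vote that the two-sided Lemma~\ref{Lemma: amplification} must use) is the right device here because it preserves \emph{perfect} completeness exactly: on a yes-instance, letting the prover run the honest single-copy strategy independently in each copy produces a product final state whose $k$ output qubits are each in state $\ket{1}$ with certainty, so all copies accept with probability~$1$.

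The content of the lemma is therefore entirely in the soundness bound, where I would show that the maximum acceptance probability of $V^{(k)}$ over all (possibly cross-copy entangled) provers is at most $s^{k}$. In the purely public-coin reading---e.g.\ $\cqQAM$, where the only verifier messages are classical coins---this is immediate by conditioning: fixing all coin outcomes of all copies makes the accept operator a tensor product over the copies acting on disjoint registers, so the optimum for each fixing is the product of the per-copy optima, and averaging over the independent coins factorizes the expectation. For the general type, where the verifier sends EPR halves and the prover may answer with quantum messages, I would phrase the maximum acceptance probability against the \emph{fixed} verifier as a semidefinite program over the cone of prover strategies in the Gutoski--Watrous quantum-strategy formalism. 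In that formulation $V^{(k)}$ is represented by the $k$-fold tensor power of the single-copy verifier operator, and the conjunctive measurement is the tensor product of the single-copy accept operators; multiplicativity of the optimum then follows by taking an optimal \emph{dual} (co-strategy) certificate $Y$ for one copy and using $Y^{\tensor k}$ as a dual-feasible certificate for the $k$-fold program, which witnesses the upper bound $s^{k}$. Finally I would set $k=\ceil{pq}$: since $1-s\geq \tfrac1q$ gives $s\leq 1-\tfrac1q$ and hence $s^{k}\leq e^{-k/q}\leq 2^{-p}$, and since $p$ and $q$ are polynomially bounded so is $k$, placing the repeated system in $\GQAM{t_m\cdots t_1}(m,1,2^{-p})$.

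The main obstacle is precisely this soundness multiplicativity in the presence of a prover that entangles its strategies \emph{across} the $k$ parallel copies through the quantum messages; unlike the classical-coin case, one cannot simply condition and factorize. The resolution I would rely on is the tensor structure of the strategy/co-strategy cones: the dual of the single-copy SDP produces a certificate that tensorizes, giving the clean upper bound $s^{k}$ even against entangled provers. Should establishing exact equality for the fully general $m$-turn mixed-type system prove delicate, it is worth noting that the lemma does not actually require multiplicativity: any bound of the form $\Pr[V^{(k)}\text{ accepts}]\leq\bigl(1-\Omega(1/q)\bigr)^{\Omega(k)}$ already decays exponentially and suffices after adjusting $k$ by a constant factor, so a sub-multiplicative product theorem against a fixed verifier is all that is genuinely needed.
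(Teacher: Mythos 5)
Your overall strategy coincides with the paper's: the paper also proves this lemma by $k$-fold parallel repetition with the conjunctive (all-copies-accept) rule, notes that this trivially preserves perfect completeness, and gets the soundness bound $s^k$ from Gutoski's perfect parallel repetition theorem for general quantum interactive proofs, whose proof is exactly the strategy/co-strategy dual-certificate tensorization you describe. Your choice ${k = \ceil{pq}}$ and the estimate ${s^k \leq e^{-p} \leq 2^{-p}}$ are fine.

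There is, however, a genuine gap in your soundness argument for the mixed types, namely whenever ${t_j = \rmc}$ for some odd~$j$, so that some \emph{prover} message is required to be classical (e.g., $\ccQAM$ or $\mathrm{c \cdots c}\qqQAM$, which the lemma must cover and for which the paper later uses it). The hypothesis that the single-copy system has soundness~$s$ only bounds provers of ${t_m \cdots t_1}$-QAM type, i.e., provers that measure those message registers in the computational basis before sending; the verifier itself never measures or dephases what it receives. The Gutoski--Watrous cone you optimize over contains provers that send coherent superpositions in those rounds, and against the bare verifier such a prover can achieve acceptance probability strictly larger than $s$ (trivial example: a verifier that applies a Hadamard to a supposedly classical bit and accepts on outcome~$0$ accepts any classical message with probability exactly $1/2$, but accepts the state~$\ket{+}$ with certainty). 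Consequently no single-copy dual certificate at value~$s$ exists over the full cone, so $Y^{\tensor k}$ bounds the wrong quantity; and if you instead shrink the feasible set to classical-message strategies, the dual program changes and the feasibility of $Y^{\tensor k}$ no longer follows. Your fallback remark about settling for sub-multiplicativity does not help, since the issue is not exactness but that the single-copy value over your cone need not be~$s$ at all. The paper closes precisely this hole before invoking parallel repetition: it replaces the protocol~$\Pi$ by a general quantum interactive proof~${Q(\Pi)}$ whose verifier CNOT-copies every classical message into a register that is never touched again (equivalently, dephases it). This makes deviating to quantum messages useless, so the value of ${Q(\Pi)}$ over \emph{unrestricted} provers equals the value of $\Pi$ over type-respecting provers; since ${(Q(\Pi))^{\tensor k} = Q(\Pi^{\tensor k})}$, the perfect parallel repetition theorem applied to ${Q(\Pi)}$ yields the bound~$s^k$ for the repeated QAM-type system. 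Adding this dephasing step to your construction repairs the proof.
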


The proof of Lemma~\ref{Lemma: amplification} uses the following lemma
(the claim was proved in this form in Ref.~\cite{KobMatYam09CJTCS},
but similar statements are also found in Refs.~\cite{AarBeiDruFefSho09ToC, JaiUpaWat09FOCS}).

\begin{lemma}
Let $\function{c, s}{\Nonnegative}{[0,1]}$ be any functions
that satisfy ${c - s \geq \frac{1}{q}}$ for some polynomially bounded function~$q$,
and let $\Pi$ be any proof system with completeness~$c$ and soundness~$s$.
Fix any polynomially bounded function~$q'$,
and consider another proof system~$\Pi'$ such that,
for every input of length~$n$,
$\Pi'$ carries out ${N = 2 q'(n) (q(n))^2}$ attempts of $\Pi$ in parallel,
and accepts if and only if at least ${\frac{c(n)+s(n)}{2}}$-fraction of these $N$ attempts
results in acceptance in $\Pi$.
Then $\Pi'$ has completeness~${1 - 2^{-q'}}$
and soundness~${\frac{2s}{c+s} \leq 1 - \frac{c-s}{2} \leq 1 - \frac{1}{2q}}$.
\label{Lemma: exp-small completeness error}
\end{lemma}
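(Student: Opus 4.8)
The plan is to analyze the two parameters by entirely different means: completeness via a Chernoff/Hoeffding tail bound applied to \emph{independent} honest runs, and soundness via a first-moment (Markov) argument that requires no independence whatsoever. Throughout, write $Y = \frac{1}{N}\sum_{i=1}^N X_i$ for the accepting fraction among the $N$ parallel attempts, where $X_i$ is the indicator that the $i$th attempt accepts, and set $\delta := \frac{c-s}{2} \geq \frac{1}{2q}$, so that the threshold is $\frac{c+s}{2} = c - \delta$.

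For the completeness bound, fix a yes-instance $x$ of length $n$ and let the honest prover of $\Pi'$ run the honest single-copy strategy independently in each of the $N$ attempts. Then the $X_i$ are independent with $\mathbf{E}[X_i] \geq c(n)$, so $\mathbf{E}[Y] \geq c$. The verifier $\Pi'$ rejects only when $Y < \frac{c+s}{2} = c - \delta$, i.e.\ when $Y$ falls below its mean by at least $\delta$. I would invoke Hoeffding's inequality, $\Pr[Y \leq \mathbf{E}[Y] - \delta] \leq e^{-2N\delta^2}$, and substitute $\delta \geq \frac{1}{2q}$ and $N = 2 q'(n)(q(n))^2$ to obtain $2N\delta^2 \geq 2\cdot 2q'q^2 \cdot \frac{1}{4q^2} = q'$. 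Hence the rejection probability is at most $e^{-q'} \leq 2^{-q'}$ (using $e^{-1} < \frac{1}{2}$), giving completeness $1 - 2^{-q'}$.

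For the soundness bound, fix a no-instance $x$ and an arbitrary, possibly jointly entangled, cheating prover for $\Pi'$. The point is that for each fixed index $i$ the restriction of this joint strategy to the $i$th coordinate is itself a legal single-copy prover strategy (a single-copy prover's private register may be taken large enough to absorb the state of all other coordinates), while the $i$th verifier uses fresh, independent coins or EPR halves. By soundness of $\Pi$ it follows that $\mathbf{E}[X_i] \leq s(n)$ for every $i$, \emph{even though} the $X_i$ need not be independent. Linearity of expectation then gives $\mathbf{E}[Y] \leq s$, and since $\Pi'$ accepts precisely when $Y \geq \frac{c+s}{2}$, Markov's inequality yields $\Pr[Y \geq \frac{c+s}{2}] \leq \frac{\mathbf{E}[Y]}{(c+s)/2} \leq \frac{2s}{c+s}$. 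Rewriting $\frac{2s}{c+s} = 1 - \frac{c-s}{c+s}$ and using $c+s \leq 2$ together with $c - s \geq \frac{1}{q}$ gives $\frac{2s}{c+s} \leq 1 - \frac{c-s}{2} \leq 1 - \frac{1}{2q}$, exactly the claimed soundness.

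I expect the only genuinely delicate point to be the justification that $\mathbf{E}[X_i] \leq s$ survives a joint cheating prover; the Hoeffding and Markov steps are routine. The clean resolution is that single-copy soundness is a statement quantifying over \emph{all} provers with arbitrary private workspace and entanglement, so marginalizing a joint $\Pi'$-prover onto a single coordinate can never produce a strategy beating the single-copy soundness $s$. It is worth emphasizing that this asymmetry is inherent: independence is available (and needed for the exponentially small error) only on the completeness side, where we design the honest prover, whereas on the soundness side the weaker Markov bound is all one can expect, which is precisely why the stated soundness is only a constant-factor improvement rather than exponentially small.
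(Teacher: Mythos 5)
Your proof is correct, and in substance it is \emph{the} proof of this lemma: note that the paper itself never proves Lemma~\ref{Lemma: exp-small completeness error}, but states it and defers to Ref.~\cite{KobMatYam09CJTCS} (with similar statements in Refs.~\cite{AarBeiDruFefSho09ToC, JaiUpaWat09FOCS}); the paper's own parallel-repetition machinery (Gutoski's perfect parallel repetition theorem) is reserved for the stronger multiplicative soundness bound needed in the proof of Lemma~\ref{Lemma: amplification}, not for this lemma. Your split — Hoeffding on the completeness side, where the honest prover plays an independent product strategy so that the $X_i$ are genuinely independent with mean at least $c$, and a first-moment/Markov bound on the soundness side, where no independence can be assumed — is exactly the standard argument the citation refers to, and your arithmetic checks out: ${2N\delta^2 \geq 4q'q^2 \cdot \tfrac{1}{4q^2} = q'}$ gives rejection probability at most ${e^{-q'} \leq 2^{-q'}}$, and ${\tfrac{2s}{c+s} = 1 - \tfrac{c-s}{c+s} \leq 1 - \tfrac{c-s}{2} \leq 1 - \tfrac{1}{2q}}$. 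The one genuinely delicate point, which you correctly isolate, is that ${\mathbf{E}[X_i] \leq s}$ must survive a jointly entangled cheating prover; to make your parenthetical remark fully rigorous you should say not only that the single-copy prover's private register absorbs the other coordinates, but that this prover internally \emph{simulates the other ${N-1}$ verifiers} (generating their coin flips or EPR halves itself), which is legitimate precisely because provers are computationally unbounded. With that simulation in place, the coordinate-$i$ transcript is distributed exactly as in $\Pi'$, so the single-copy soundness bound applies to each marginal, and the rest of your argument goes through unchanged.
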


Now the amplification result for generalized quantum Arthur-Merlin proof systems follows from 
Lemma~\ref{Lemma: exp-small completeness error}
and the perfect parallel repetition theorem for general quantum interactive proof systems~\cite{Gut09PhD}.

\begin{proof}[Proof of Lemma~\ref{Lemma: amplification}]
First, the inclusion~%
${
  \GQAM{t_m \cdots t_1}(m, c, s)
  \subseteq
  \GQAM{t_m \cdots t_1} \bigl( m, 1 - \frac{2^{-(p+1)}}{\ceil{p/(c-s)}}, 1 - \frac{c-s}{2} \bigr)
}$
follows from Lemma~\ref{Lemma: exp-small completeness error}
by taking $q'$ in the statement of Lemma~\ref{Lemma: exp-small completeness error}
as $q'={p + \ceil{\log_2\big(\ceil{\frac{p}{c-s}}\big)} + 1}$.

We show the inclusion~%
${
  \GQAM{t_m \cdots t_1} \bigl( m, 1 - \frac{2^{-(p+1)}}{\ceil{p/(c-s)}}, 1 - \frac{c-s}{2} \bigr)
  \subseteq
  \GQAM{t_m \cdots t_1}(m, 1 - 2^{-p}, 2^{-p})
}$
to complete the proof.

Fix any protocol~$\Pi$ of ${t_m \cdots t_1\textrm{-QAM}(m)}$ proof systems,
and consider the $k$-fold repetition~$\Pi^{\tensor k}$ of $\Pi$,
where Arthur runs $k$~attempts of $\Pi$ in parallel,
and accepts if and only if all of the $k$~attempts result in acceptance in the original~$\Pi$.
We claim that the maximum acceptance probability in $\Pi^{\tensor k}$ is exactly $a^k$
if the maximum acceptance probability in $\Pi$ was $a$.
To show this claim, consider another protocol~${Q(\Pi)}$
of $m$-turn (general) quantum interactive proof systems
that exactly simulates $\Pi$ as follows:
the verifier in ${Q(\Pi)}$ behaves exactly the same manner as Arthur in $\Pi$
except that, upon receiving the $j$th message from a prover
(resp. sending the $j$th message to a prover), if ${t_j = \rmc}$ in $\Pi$,
the verifier of ${Q(\Pi)}$ first makes sure that the received message 
(resp. the sent message) 
is indeed classical
by taking a copy of the message by CNOT operations
(and the copied message will never be touched in the rest of the protocol).
This clearly makes it useless for a malicious prover to send a quantum message,
deviating the original protocol~$\Pi$,
and thus, the maximum acceptance probability in ${Q(\Pi)}$ obviously remains $a$.
Now from the perfect parallel repetition theorem for general quantum interactive proofs~\cite{Gut09PhD},
the $k$-fold parallel repetition~${(Q(\Pi))^{\tensor k}}$ of ${Q(\Pi)}$ has
its maximum acceptance probability exactly~$a^k$.
As the protocol~${(Q(\Pi))^{\tensor k}}$ is identical to
the protocol~${Q(\Pi^{\tensor k})}$ of the $m$-turn (general) quantum interactive proof system
that exactly simulates $\Pi^{\tensor k}$,
the maximum acceptance probability in $\Pi^{\tensor k}$ is also $a^k$.
Hence,
letting ${k = 2\ceil{\frac{p}{c-s}}}$,
the desired inclusion~%
${
  \GQAM{t_m \cdots t_1} \bigl( m, 1 - \frac{2^{-(p+1)}}{\ceil{p/(c-s)}}, 1 - \frac{c-s}{2} \bigr)
  \subseteq
  \GQAM{t_m \cdots t_1}(m, 1 - 2^{-p}, 2^{-p})
}$
follows from the $k$-fold repetition.
\end{proof}

Lemma~\ref{Lemma: amplification for perfect completeness case} is proved
in essentially the same manner as in Lemma~\ref{Lemma: amplification}
(Lemma~\ref{Lemma: exp-small completeness error} is not necessary in this case,
which makes the proof slightly simpler).


\section{$\boldsymbol{\qqQAM}$-Completeness of \CITM} 
\label{Section: qq-QAM-completeness of CITM}

This section proves Theorem \ref{Theorem: qq-QAM-completeness of CITM}, which states 
that the \problemfont{Close Image to Totally Mixed} ($\CITM$) problem
is complete for the class~$\qqQAM$.

First, it is proved that ${\CITM(a,b)}$ is in $\qqQAM$
for appropriately chosen parameters~$a$~and~$b$.
The proof is a special case of the proof of
the \problemfont{Close Image} problem being in ${\QIP(2)}$~\cite{Wat02QIP, HayMilWil12arXiv}.

\begin{lemma}
${\CITM(a,b)}$ is in $\qqQAM$
for any constants~${a, b \in [0,1]}$ satisfying ${(1 - a)^2 > 1 - b^2}$.
\label{Lemma: CITM is in qq-QAM}
\end{lemma}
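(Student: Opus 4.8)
The plan is to exhibit a $\mathrm{qq}$-QAM proof system for $\CITM(a,b)$. The structure of the problem is perfectly suited to a two-turn quantum-public-coin protocol: a yes-instance is witnessed by the \emph{existence} of an input state $\rho$ whose image $Q(\rho)$ is close to the totally mixed state, so we want the prover to supply (a purification of) this $\rho$ through the EPR pairs the verifier sends, and then the verifier checks closeness to $(I/2)^{\tensor q_\out}$. Recall that a maximally mixed state is exactly the reduced state of one half of a collection of EPR pairs; so the key design idea is to have the verifier's first message ($q_\inp$ halves of EPR pairs) \emph{serve simultaneously as the source of randomness and as the register holding a purification of the prover's chosen input}. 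The prover applies an arbitrary unitary to his halves together with ancillas to steer the joint state, and returns a register; the verifier then runs $Q$ and measures how close the output is to maximally mixed.

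\textbf{First} I would set up the protocol concretely. The verifier generates $q_\inp$ EPR pairs, keeping registers $\sfV$ and sending the partner halves to the prover. The prover, holding these halves plus private ancillas, prepares a purification: by Uhlmann's theorem (Lemma~\ref{Lemma: Uhlmann's theorem}) the prover can, by a local unitary on his side, arrange that the $q_\inp$-qubit register to be fed into $Q$ carries any desired state $\rho$ of his choosing, with the EPR halves the verifier kept acting as the reference/purifying system. The prover sends back the register to be used as the input to $Q$. The verifier applies $Q$, obtains the $q_\out$-qubit output, and accepts based on a test that distinguishes $Q(\rho)$ being close to versus far from $(I/2)^{\tensor q_\out}$. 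The natural test is a swap-test-style or fidelity-estimation measurement against the totally mixed state, which is precisely what appears in the $\QIP(2)$ protocol for the \problemfont{Close Image} problem; here one compares against the totally mixed state rather than the image of a second circuit, which is the special case that simplifies matters.

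\textbf{The analysis} proceeds by translating completeness and soundness into statements about trace distance and fidelity, using the Fuchs--van-de-Graaf inequalities (Lemma~\ref{Lemma: trace distance and fidelity}). In the yes-case, there is a $\rho$ with $D(Q(\rho),(I/2)^{\tensor q_\out}) \leq a$, hence fidelity at least $1-a$; the honest prover prepares this $\rho$ and the verifier accepts with probability related to $(1-a)^2$. In the no-case, \emph{every} input state gives $D(Q(\rho),(I/2)^{\tensor q_\out}) \geq b$, so the fidelity is at most $\sqrt{1-b^2}$ for all prover strategies, bounding the acceptance probability. The hypothesis $(1-a)^2 > 1-b^2$ is exactly what guarantees a constant gap between these two acceptance probabilities, which by the amplification result (Lemma~\ref{Lemma: amplification}) suffices to place the problem in $\qqQAM$.

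\textbf{The main obstacle} I anticipate is verifying that the verifier's acceptance test can be implemented as a genuine two-turn protocol whose acceptance probability is \emph{exactly} (or tightly) controlled by $F(Q(\rho),(I/2)^{\tensor q_\out})$, and in particular that the prover cannot cheat by correlating his returned register with the EPR halves the verifier retained. The subtle point is that the verifier keeps purifying halves, so a dishonest prover might try to exploit entanglement between the input fed to $Q$ and those retained halves; one must argue (again via Uhlmann) that the optimal prover strategy is equivalent to choosing \emph{some} genuine input state $\rho$ to $Q$, so that the soundness condition ``for all $\rho$'' really does upper-bound the cheating probability. Getting the test to convert fidelity into acceptance probability cleanly—so that the inequality $(1-a)^2 > 1-b^2$ yields the required separation—is where the careful bookkeeping lies, but since this is a direct specialization of the known \problemfont{Close Image} $\in \QIP(2)$ protocol with the second circuit's image replaced by the fixed totally mixed state, the argument should go through by reusing that construction.
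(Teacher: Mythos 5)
Your high-level targets are correct (acceptance probabilities $(1-a)^2$ versus $1-b^2$, with Uhlmann and Fuchs--van de Graaf doing the work), but the protocol you describe is not the one that achieves them, and its crucial step does not exist as stated. You have the verifier send $q_\inp$ halves of EPR pairs, receive back the \emph{input} register, apply $Q$ forward, and then perform ``a swap-test-style or fidelity-estimation measurement against the totally mixed state'' on a single copy of the output. No such measurement exists: the swap test between a state $\rho$ and ${(I/2)^{\tensor q_\out}}$ accepts with probability ${\frac{1}{2} + 2^{-q_\out - 1}}$ \emph{independently} of $\rho$, and fidelity to a fixed state cannot be estimated from one copy of an unknown state. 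Moreover, in your design the verifier's retained EPR halves never enter the final test, so a prover may as well discard the halves he receives and send any input state he likes; since the acceptance probability then depends only on the reduced state of the message register, your protocol collapses to a one-message ($\QMA$-type) protocol, and exhibiting a verifier measurement that works there would place the $\qqQAM$-complete problem $\CITM$ in $\QMA$, i.e., collapse $\qqQAM$ to $\QMA$. So the route is almost certainly unfixable as described.

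The missing idea is a reversal of direction, which is exactly how the \problemfont{Close Image} protocol (and the paper) works when specialized to the totally mixed state. The verifier's EPR halves must play the role of the \emph{output} register: the verifier creates $q_\out$ EPR pairs in ${(\sfS_1, \sfS_2)}$, so that $\sfS_1$ holds the canonical purification of ${(I/2)^{\tensor q_\out}}$, and sends $\sfS_2$ to the prover. The prover's reply is not an input to $Q_x$ but the remaining ${(q_\all - q_\out)}$ qubits, a register $\sfR$ completing a claimed purification of some ${Q_x(\rho)}$; the verifier then applies the \emph{inverse} unitary~$\conjugate{U_{Q_x}}$ to ${(\sfR, \sfS_1)}$ and accepts iff all non-input qubits are observed in state $\ket{0}$. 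This projective test is implementable, and its acceptance probability is exactly a squared overlap between two purifications: by Uhlmann's theorem it is maximized, over prover unitaries, at ${\max_\rho F(Q_x(\rho), (I/2)^{\tensor q_\out})^2}$, which gives completeness at least $(1-a)^2$ and soundness at most $1 - b^2$ via Fuchs--van de Graaf. The soundness direction is also where the entanglement worry you raise gets handled: whatever the prover does, he never touches $\sfS_1$, so its reduced state remains exactly totally mixed, and projecting onto the accepting subspace forces the post-measurement state to encode a genuine input state to $Q_x$, to which the ``for all $\rho$'' hypothesis applies.
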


\begin{proof}
Let $Q_x$ be a quantum circuit of an instance~$x$ of ${\CITM(a,b)}$ acting on $q_\all$~qubits
with $q_\inp$~specified input qubits and $q_\out$~specified output qubits.
Without loss of generality, one can assume that the first $q_\inp$~qubits correspond to the input qubits,
and the last $q_\out$~qubits correspond to the output qubits.
Let $U_{Q_x}$ denote the unitary operator induced by $Q_x$.
We construct a verifier~$V$ of the $\mathrm{qq}$-QAM proof system with completeness ${(1-a)^2}$ and soundness~${1 - b^2}$ as follows
(recall that $a$ and $b$ are constants in the interval~${[0,1]}$
such that ${(1-a)^2 > 1 - b^2}$, and thus this $\mathrm{qq}$-QAM proof system is sufficient
for the claim).

Let $\sfS_1$ and $\sfS_2$ be quantum registers of $q_\out$~qubits.
The verifier~$V$ first generates $q_\out$~EPR pairs~$\ket{\Phi^+}^{\tensor q_\out}$ in ${(\sfS_1, \sfS_2)}$
so that the $j$th qubit of $\sfS_1$ and that of $\sfS_2$ form an EPR pair,
for every ${j \in \{1, \ldots, q_\out\}}$. 
Then $V$ sends $\sfS_2$ to the prover.
Upon receiving a quantum register~$\sfR$ of ${(q_\all - q_\out)}$~qubits,
$V$ applies the unitary transformation~$\conjugate{U_{Q_x}}$ to ${(\sfR, \sfS_1)}$.
Letting $\sfA$ be the quantum register consisting of the last ${(q_\all - q_\inp)}$~qubits of the register~${(\sfR, \sfS_1)}$
(i.e., corresponding to the \emph{non-input} qubits of $Q_x$),
$V$ accepts $x$ if and only if all the qubits in $\sfA$ are in the $\ket{0}$~state.
Figure~\ref{Figure: verifier's qq-QAM protocol for CITM} summarizes the protocol
of the verifier~$V$.

\begin{figure}[t!]
\begin{protocol*}{Verifier's $\boldsymbol{\mathrm{qq}}$-QAM Protocol for $\boldsymbol{\CITM(a,b)}$}
\begin{step}
\item
  Prepare $q_\out$~qubit registers~$\sfS_1$ and $\sfS_2$,
  and generate $q_\out$~EPR pairs~$\ket{\Phi^+}^{\tensor q_\out}$ in ${(\sfS_1, \sfS_2)}$
  so that the $j$th qubit of $\sfS_1$ and that of $\sfS_2$ form an EPR pair,
  for every ${j \in \{1, \ldots, q_\out\}}$.
  Send $\sfS_2$ to the prover.
\item
  Receive a ${(q_\all - q_\out)}$-qubit quantum register~$\sfR$ from the prover.
  Apply the unitary transformation~$\conjugate{U_{Q_x}}$ to ${(\sfR, \sfS_1)}$.
  Accept if all the qubits in $\sfA$ are in the $\ket{0}$~state, and reject otherwise,
  where $\sfA$ is the quantum register consisting of the last ${(q_\all - q_\inp)}$~qubits
  of ${(\sfS_1, \sfR)}$ (i.e., the non-input qubits of $Q_x$).
\end{step}
\end{protocol*}
\caption{Verifier's $\mathrm{qq}$-QAM protocol for $\CITM$.}
\label{Figure: verifier's qq-QAM protocol for CITM}
\end{figure}

Let $\calW$ denote the Hilbert space corresponding to the $q_\inp$~input qubits of $Q_x$.

For the completeness, suppose that there exists a quantum state~${\rho \in \Density(\calW)}$ such that
${D(Q_x(\rho), (I/2)^{\tensor q_\out}) \leq a}$.
By Lemma~\ref{Lemma: trace distance and fidelity} (the Fuchs-van-de-Graaf inequalities),
it holds that ${F(Q_x(\rho), (I/2)^{\tensor q_\out}) \geq 1 - a}$.
Consider a ${2 q_\inp}$-qubit pure state~$\ket{\phi_\rho}$ that is a purification of $\rho$
such that $\rho$ is the reduced state obtained by tracing out the first $q_\inp$~qubits of $\ket{\phi_\rho}$
(such a purification always exists).
Then, the ${(q_\all + q_\inp)}$-qubit state
\[
  \ket{\psi_\rho}
  =
  (I^{\tensor q_\inp} \tensor U_{Q_x}) (\ket{\phi_\rho} \tensor \ket{0}^{\tensor (q_\all - q_\inp)})
\]
is necessarily a purification of ${Q_x(\rho)}$,
and thus, the ${(q_\all + q_\inp + q_\out)}$-qubit state~${\ket{\psi'_\rho} = \ket{0}^{\tensor q_\out} \tensor \ket{\psi_\rho}}$
is also a purification of ${Q_x(\rho)}$.
On the other hand, an obvious purification of the $q_\out$-qubit totally mixed state~${(I/2)^{\tensor q_\out}}$
is the ${2 q_\out}$-qubit state~$\ket{\xi}$
that is obtained by rearranging the qubits of $\ket{\Phi^+}^{\tensor q_\out}$
so that the $j$th qubit and the ${(q_\out + j)}$th qubit form an EPR pair
 for every ${j \in \{1, \ldots, q_\out\}}$.
Hence, the ${(q_\all + q_\inp + q_\out)}$-qubit state~${\ket{\xi'} = \ket{0}^{\tensor (q_\all + q_\inp - q_\out)} \tensor \ket{\xi}}$
is also a purification of ${(I/2)^{\tensor q_\out}}$.
As the reduced state consisting of the last $q_\out$~qubits of $\ket{\psi'_\rho}$ is exactly ${Q_x(\rho)}$,
while the reduced state consisting of the last $q_\out$~qubits of $\ket{\xi'}$ is exactly ${(I/2)^{\tensor q_\out}}$,
it follows from Lemma~\ref{Lemma: Uhlmann's theorem} (Uhlmann's theorem) that 
\[
F \bigl( Q_x(\rho), (I/2)^{\tensor q_\out} \bigr)
=
\max_U \bigabs{\bra{\psi'_\rho} (U \tensor I^{\tensor q_\out}) \ket{\xi'}}
\]
where the maximum is taken over all unitary operators~$U$ acting on ${(q_\all + q_\inp)}$~qubits.
This in particular implies that there exists a unitary operator~$U_P$ acting on ${(q_\all + q_\inp)}$~qubits such that
\[
\bigabs{\bra{\psi'_\rho} (U_P \tensor I^{\tensor q_\out}) \ket{\xi'}}
=
F \bigl( Q_x(\rho), (I/2)^{\tensor q_\out} \bigr)
\geq
1 - a.
\]
Thus, if a prover prepares $\ket{0}^{\tensor (q_\all + q_\inp - q_\out)}$
in his/her private quantum register~$\sfP$ of ${(q_\all + q_\inp - q_\out)}$~qubits,
applies $U_P$ to ${(\sfP, \sfS_2)}$ after having received $\sfS_2$,
and sends the last ${(q_\all - q_\out)}$~qubits of ${(\sfP, \sfS_2)}$ back to the verifier,
the probability of acceptance is
\[
\begin{split}
\hspace{5mm}&\hspace{-5mm}
\bignorm{
  \bigl( I^{\tensor (2q_\inp + q_\out)} \tensor (\ketbra{0})^{\tensor (q_\all - q_\inp)} \bigr)
  (I^{\tensor (q_\inp + q_\out)} \tensor \conjugate{U_{Q_x}})
  (U_P \tensor I^{\tensor q_\out})
  \ket{\xi'}
}^2
\\
&
\geq
\bignorm{
  \bigl( (\ketbra{0})^{\tensor q_\out} \tensor \ketbra{\phi_\rho} \tensor (\ketbra{0})^{\tensor (q_\all - q_\inp)} \bigr)
  (I^{\tensor (q_\inp + q_\out)} \tensor \conjugate{U_{Q_x}})
  (U_P \tensor I^{\tensor q_\out})
  \ket{\xi'}
}^2
\\
&
=
\bigabs{\bra{\psi'_\rho} (U_P \tensor I^{\tensor q_\out}) \ket{\xi'}}^2
\\
&
\geq
(1 - a)^2,
\end{split}
\]
where the first inequality follows from the fact that
${(\ketbra{0})^{\tensor q_\out} \tensor \ketbra{\phi_\rho} \tensor I^{\tensor (q_\all - q_\inp)}}$
is a projection operator.
This implies the completeness~${(1 - a)^2}$ of the constructed proof system.

For the soundness, suppose that for any quantum state~${\rho \in \Density(\calW)}$,
it holds that ${D(Q_x(\rho), (I/2)^{\tensor q_\out}) \geq b}$.
Let $P'$ be any prover who uses his/her private quantum register~$\sfP'$ of $q$~qubits,
for arbitrarily large integer~$q$.
Without loss of generality, one can assume that all the qubits in $\sfP'$ are
in the $\ket{0}$~state at the beginning of the protocol. 
Let $U_{P'}$ be the unitary operator acting on ${(q + q_\out)}$~qubits
which $P'$ applies to ${(\sfP', \sfS_2)}$ after having received $\sfS_2$,
and let $\ket{\phi}$ be the ${(q + 2 q_\out)}$-qubit state defined by
\[
\ket{\phi}
=
(I^{\tensor (q - q_\all + 2 q_\out)} \tensor \conjugate{U_{Q_x}})
(U_{P'} \tensor I^{\tensor q_\out})
\ket{\xi''},
\]
where $\ket{\xi''}$ is the ${(q + 2 q_\out)}$-qubit state defined as
${\ket{\xi''} = \ket{0}^{\tensor q} \tensor \ket{\xi}}$.
Define the projection operator~$\Pi_\acc$ by
${
\Pi_\acc
=
I^{\tensor (q - q_\all + q_\inp + 2 q_\out)} \tensor (\ketbra{0})^{\tensor (q_\all - q_\inp)}
}$.
Then, the ${(q + 2 q_\out)}$-qubit state~$\ket{\psi}$ defined by
${
\ket{\psi} = \frac{1}{\norm{\Pi_\acc \ket{\phi}}} \Pi_\acc \ket{\phi}
}$
must be written as
${
\ket{\psi} = \ket{\psi'} \tensor \ket{0}^{\tensor (q_\all - q_\inp)}
}$
for some ${(q - q_\all + q_\inp + 2 q_\out)}$-qubit state~$\ket{\psi'}$,
as ${\Pi_\acc \ket{\psi} = \ket{\psi}}$ holds.

As ${D(Q_x(\rho), (I/2)^{\tensor q_\out}) \geq b}$ for any quantum state~${\rho \in \Density(\calW)}$,
from Lemma~\ref{Lemma: trace distance and fidelity} (the Fuchs-van-de-Graaf inequalities),
it holds that
${F(Q_x(\rho), (I/2)^{\tensor q_\out}) \leq \sqrt{1 - b^2}}$ for any quantum state~${\rho \in \Density(\calW)}$.
This in particular implies that
\begin{equation}
\begin{split}
\abs{\braket{\psi}{\phi}}
&
=
\bigabs{
  (\bra{\psi'} \tensor \bra{0}^{\tensor (q_\all - q_\inp)})
  (I^{\tensor (q - q_\all + 2 q_\out)} \tensor \conjugate{U_{Q_x}})
  (U_{P'} \tensor I^{\tensor q_\out})
  \ket{\xi''}
}
\\
&
\leq
F(Q_x(\rho_{\psi'}), (I/2)^{\tensor q_\out})
\leq
\sqrt{1 - b^2},
\label{Equation: fidelity bound}
\end{split}
\end{equation}
where ${\rho_{\psi'} \in \Density(\calW)}$ is the reduced state of $\ket{\psi'}$
obtained by tracing out
all but the last $q_\inp$~qubits, and we have used the fact that
the reduced state consisting of the last $q_\out$~qubits of $\ket{\xi''}$ is exactly ${(I/2)^{\tensor q_\out}}$
on which $U_{P'}$ never acts.
As the acceptance probability~$p_{P'}$ with this prover~$P'$ is exactly
${
\norm{\Pi_\acc \ket{\phi}}^2
}$,
while
${
\norm{\Pi_\acc \ket{\phi}}
=
\frac{1}{\norm{\Pi_\acc \ket{\phi}}} \abs{\bra{\phi} \Pi_\acc \ket{\phi}}
=
\abs{\braket{\psi}{\phi}}
}$,
it holds from Eq.~(\ref{Equation: fidelity bound}) that ${p_{P'} \leq 1 - b^2}$,
and the soundness follows.
\end{proof}

Now the $\CITM$ problem is proved to be hard for $\qqQAM$. 

\begin{lemma}
For any constants~$a$~and~$b$ such that ${0 < a < b < 1}$, 
${\CITM(a,b)}$ is hard for $\qqQAM$ under polynomial-time many-one reduction. 
\label{Lemma: qq-QAM-hardness of CITM}
\end{lemma}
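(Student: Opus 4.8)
The plan is to take an arbitrary promise problem $A = (A_\yes, A_\no)$ in $\qqQAM$, fix a $\mathrm{qq}$-QAM verifier~$V$ for it, and construct in polynomial time a quantum circuit~$Q_x$ so that the optimal acceptance probability $p_V(x) = \max_P \prob{V \text{ accepts}}$ of the proof system coincides with $\max_\rho F(Q_x(\rho),(I/2)^{\tensor q_\out})^2$, the largest fidelity between the image of $Q_x$ and the totally mixed state. Once this identity is in hand, the two Fuchs--van-de-Graaf inequalities (Lemma~\ref{Lemma: trace distance and fidelity}) turn a large (resp.\ small) $p_V(x)$ into the existence of an input~$\rho$ with $D(Q_x(\rho),(I/2)^{\tensor q_\out})$ small (resp.\ into all such distances being large), which is exactly the yes/no dichotomy of $\CITM$. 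Before constructing~$Q_x$ I would apply the amplification lemma (Lemma~\ref{Lemma: amplification}) to assume completeness $1-\varepsilon$ and soundness~$\varepsilon$ for a negligible~$\varepsilon$, so that the resulting thresholds can be pushed arbitrarily close to $0$ and $1$.

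The engine behind the identity is the same structure that drives Lemma~\ref{Lemma: CITM is in qq-QAM}. Because the verifier's single message consists of halves of EPR pairs, its kept register~$\sfM_1$ is maximally mixed, and by Uhlmann's theorem (Lemma~\ref{Lemma: Uhlmann's theorem}) a prover holding a large enough private register can steer the joint state of $(\sfM_1,\sfM_2)$ to any~$\sigma$ subject only to $\tr_{\sfM_2}\sigma = (I/2)^{\tensor k}$, where $k$ is the number of EPR pairs. Hence $p_V(x)$ equals $\max_\sigma \tr[M_x\sigma]$ over exactly those~$\sigma$, with $M_x = (\bra{0}_\sfV \tensor I) V(x)^\dagger \Pi_\acc V(x) (\ket{0}_\sfV \tensor I)$ and $\Pi_\acc$ the projection of the output qubit onto~$\ket{1}$. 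This is precisely the kind of optimization whose value the proof of Lemma~\ref{Lemma: CITM is in qq-QAM} evaluates (in both directions) as $\max_\rho F(Q(\rho),(I/2)^{\tensor q_\out})^2$ for a canonical $\CITM$ protocol, so the task reduces to realizing~$V$ as such a protocol.

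The one real obstacle --- and the heart of the reduction --- is that the canonical $\CITM$ protocol is rigid: the verifier may only apply a fixed unitary to the received register together with its kept EPR halves and then test that the non-input qubits have returned to~$\ket{0}$, whereas~$V$ also prepares a private ancilla~$\sfV$ in the state~$\ket{0}$ and performs a general measurement. I would remove this mismatch by folding~$\sfV$ into the prover's message under a guard: define a verifier~$V'$ that receives $(\sfV,\sfM_2)$ from the prover, rejects outright unless~$\sfV$ is found to be~$\ket{0}$ (a test performed reversibly, writing that verdict together with that of $V(x)$ into one designated decision qubit of~$\sfV$), and otherwise simulates~$V(x)$. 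Cheating on~$\sfV$ can only lower the acceptance probability, so $\max_P \prob{V' \text{ accepts}} = p_V(x)$, while~$V'$ now has the canonical shape, its acceptance being a single-qubit $\ket{1}$-test after one unitary. Taking $U_{Q_x}$ to be $(V')^\dagger$ preceded by a bit-flip on the decision qubit, declaring the kept register~$\sfM_1$ to be the $q_\out = k$ output qubits and the decision qubit to be the unique non-input qubit (every other qubit being an input qubit), makes the canonical $\CITM$ protocol for~$Q_x$ literally the $V'$-protocol. The two-sided analysis in the proof of Lemma~\ref{Lemma: CITM is in qq-QAM} then yields $\max_\rho F(Q_x(\rho),(I/2)^{\tensor q_\out})^2 = p_V(x)$. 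Assembling~$V'$ and~$Q_x$ from the description of~$V$ is a routine polynomial-time circuit manipulation, using the standard deferral of intermediate measurements.

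Finally I would read off the parameters. For $x \in A_\yes$ some input~$\rho$ satisfies $F \geq \sqrt{1-\varepsilon}$, hence $D(Q_x(\rho),(I/2)^{\tensor q_\out}) \leq \sqrt{\varepsilon}$; for $x \in A_\no$ every input gives $F \leq \sqrt{\varepsilon}$, hence $D \geq 1-\sqrt{\varepsilon}$. Choosing~$\varepsilon$ small enough that $\sqrt{\varepsilon} \leq a$ and $1-\sqrt{\varepsilon} \geq b$ (which holds for all but finitely many~$x$ after amplification, the remaining inputs being decided by hard-wiring), the circuit~$Q_x$ is a valid instance of~$\CITM(a,b)$ with the yes and no cases preserved, and $x \mapsto Q_x$ is polynomial-time computable. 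This gives $\qqQAM$-hardness of $\CITM(a,b)$ for every $0 < a < b < 1$. The delicate point to get right is the $V'$ construction of the third paragraph: exposing the verifier's private ancilla to the prover without the $\ket{0}$-guard would destroy soundness, and it is exactly this guard that lets a general verifier be coerced into the rigid canonical form while keeping its optimal acceptance probability unchanged.
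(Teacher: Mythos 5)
You have put your finger on the right difficulty, but the guard construction does not resolve it, and this is a genuine gap. Your argument needs the guarded protocol to \emph{literally be} the canonical $\CITM$ protocol of some circuit~$Q_x$, because the identity ``protocol value ${} = \max_\rho F(Q_x(\rho),(I/2)^{\tensor q_\out})^2$'' extracted from Lemma~\ref{Lemma: CITM is in qq-QAM} is available only for protocols of that rigid shape: the verifier holds nothing but the EPR halves, receives \emph{every} non-output qubit of $Q_x$ from the prover, applies one unitary, and projectively tests that the non-input qubits are $\ket{0}$; in particular the acceptance operator is a projector. But the measurement you ask $V'$ to perform --- accept if and only if [$\sfV$ was $\ket{0}$ on receipt] \emph{and} [$V(x)$ accepts] --- is represented by the operator $(\ketbra{0}_{\sfV} \tensor I)\,\conjugate{V_x}\,\Pi_\acc\,V_x\,(\ketbra{0}_{\sfV} \tensor I)$, a product of non-commuting projectors which is in general not a projector; no ``one unitary followed by a single-qubit $\ket{1}$-test'' realizes it, so ``writing that verdict together with that of $V(x)$ into one designated decision qubit of $\sfV$'' is not a well-defined quantum operation. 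The two natural repairs both fail. If you implement the guard as a sequential measurement (measure $\sfV$ in the computational basis, then run $V_x$), the resulting protocol does have value $p_V(x)$, but it is no longer of the canonical shape, so there is no circuit $Q_x$ to read off and Lemma~\ref{Lemma: CITM is in qq-QAM} does not apply to it. If instead you compute the conjunction coherently, you need fresh ancillas initialized to $\ket{0}$ by the verifier; but in the canonical $\CITM$ protocol those ancillas are non-output qubits of $Q_x$ and hence are \emph{supplied by the prover}, who can simply pre-set the reversible targets (e.g.\ send $\sfV = \ket{0}$ with the decision qubit already in state $\ket{1}$) and be accepted with certainty, destroying soundness --- and guarding the new ancillas reproduces the same problem one level down. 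So the central identity $\max_\rho F(Q_x(\rho),(I/2)^{\tensor q_\out})^2 = p_V(x)$ is never established, and the Fuchs--van de Graaf step at the end has nothing to rest on.

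This obstruction is exactly why the paper's proof of Lemma~\ref{Lemma: qq-QAM-hardness of CITM} abandons the exact fidelity identity and argues ``lossily.'' Its circuit $Q_x$ takes the would-be $(\sfS,\sfM)$ state as input and flips a fair coin: on one branch it outputs the $\sfS$-register itself, so that any input whose reduced state on $\sfS$ is far from totally mixed already yields an output far from totally mixed --- a \emph{statistical} guard, enforced through the output state rather than by a projective test; on the other branch it runs $V_x$ and outputs $(I/2)^{\tensor q_\sfS}$ on acceptance and $(\ketbra{0})^{\tensor q_\sfS}$ on rejection. For inputs that nearly pass the statistical guard, Uhlmann's theorem shows they are within trace distance $2/\sqrt{5}$ of a \emph{legal} prover state, so the soundness bound $s$ applies up to that loss (via Lemma~\ref{Lemma: trace distance and probability}). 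This yields hardness only of $\CITM(a,1/20)$ for suitable constants $c$ and $s$, and the general pair ${0 < a < b < 1}$ is then reached by running copies of $Q_x$ in parallel and invoking Lemma~\ref{Lemma: polarization of minimum output trace distance}. Your outer bookkeeping (amplification, Fuchs--van de Graaf, hard-wiring finitely many short inputs) would be fine, but only after the core of the reduction is replaced by an argument of this kind.
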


\begin{proof}
Let ${A = (A_\yes, A_\no)}$ be a problem in $\qqQAM$.
Then $A$ has a $\mathrm{qq}$-QAM proof system with completeness~$c$ and soundness~$s$
for some constants~$c$~and~$s$ chosen later satisfying ${0 < s < c < 1}$.
Let $V$ be the quantum verifier witnessing this proof system.
Fix an input~$x$, and 
let $\sfV$ and $\sfM$ be quantum registers consisting of $q_\sfV$~and~$q_\sfM$~qubits, respectively,
where $\sfV$ corresponds to the private qubits of $V$
and $\sfM$ corresponds to the message qubits $V$ would receive on input~$x$.
Without loss of generality, one can assume that
the first qubit of $\sfV$ is the output qubit of $V$,
and the last $q_\sfS$~qubits of $\sfV$ form the quantum register~$\sfS$
corresponding to the halves of the EPR pairs $V$ would keep until the final verification procedure is performed.
Let $\overline{\sfS}$ be the quantum register of ${(q_\sfV - q_\sfS)}$~qubits
consisting of the first ${(q_\sfV - q_\sfS)}$~qubits of $\sfV$ 
(i.e., all the private qubits of $V$ but those belonging to $\sfS$).
Denote by $V_x$ the unitary operator induced by this $V$ on input~$x$.

We construct a quantum circuit~$Q_x$ that exactly implements the following algorithm.
The circuit~$Q_x$ expects to receive a ${(q_\sfS + q_\sfM)}$-qubit state as its input,
and prepares the quantum registers~${\sfV = (\overline{\sfS}, \sfS)}$~and~$\sfM$,
where the input state is expected to be stored in ${(\sfS, \sfM)}$.
Then with probability one-half,
$Q_x$ just outputs the state in the register~$\sfS$.
Otherwise $Q_x$ performs $V_x$ over ${(\sfV, \sfM) = (\overline{\sfS}, \sfS, \sfM)}$,
and outputs the totally mixed state~${(I/2)^{\tensor q_\sfS}}$
if the first qubit of $\sfV$ is in state~$\ket{1}$
(i.e., if the system is in an accepting state of the original verifier~$V$),
and outputs $(\ketbra{0})^{\tensor q_\sfS}$
if the first qubit of $\sfV$ is in state~$\ket{0}$
(i.e., if the system is in a rejecting state of the original verifier~$V$).
Figure~\ref{Figure: construction of Q_x (modified)} summarizes the construction of the circuit~$Q_x$.

\begin{figure}[t!]
\begin{algorithm*}{Algorithm Corresponding to Quantum Circuit $\boldsymbol{Q_x}$}
\begin{step}
\item
  Prepare the quantum registers~$\sfV$~and~$\sfM$, each of $q_\sfV$~and~$q_\sfM$~qubits, respectively.
  Denote by $\sfS$ and $\overline{\sfS}$ the quantum registers consisting of
  the last $q_\sfS$ and first ${(q_\sfV - q_\sfS)}$~qubits of $\sfV$, respectively.
  The last ${(q_\sfS + q_\sfM)}$~qubits in ${(\sfV, \sfM) = (\overline{\sfS}, \sfS, \sfM)}$
  (i.e., all the qubits in ${(\sfS, \sfM)}$)
  are designated as the input qubits,
  while the last $q_\sfS$~qubits of ${\sfV = (\overline{\sfS}, \sfS)}$
  (i.e., all the qubits in $\sfS$)
  are designated as the output qubits.
\item
  Flip a fair coin, and proceed to Step~2.1 if it results in ``Heads'',
  and proceed to Step~2.2 if it results in ``Tails''.
  \begin{step}
  \item
    Output all the qubits in $\sfS$.
  \item
    Perform $V_x$ over ${(\sfV, \sfM) = (\overline{\sfS}, \sfS, \sfM)}$.
    If the first qubit of $\sfV$ is in state~$\ket{1}$,
    output the totally mixed state~${(I/2)^{\tensor q_\sfS}}$
    (by first generating the totally mixed state using fresh ancillae,
    and then swapping the qubits in $\sfS$ with the generated totally mixed state),
    and output $\ket{0}^{\tensor q_\sfS}$ otherwise
    (by swapping the qubits in $\sfS$ with $q_\sfS$~fresh ancillae).
  \end{step}
\end{step}
\end{algorithm*}
\caption{The construction of the quantum circuit~$Q_x$.}
\label{Figure: construction of Q_x (modified)}
\end{figure}

First suppose that $x$ is in $A_\yes$.
Then there exists a quantum prover~$P$ who makes $V$ accept with probability at least $c$.
Let $\rho_x$ be the ${(q_\sfS + q_\sfM)}$-qubit state in ${(\sfS, \sfM)}$
just after $V$ has received a response from $P$ on input~$x$.
Note that the reduced state in $\sfS$ of $\rho_x$ when tracing out all the qubits in $\sfM$
is exactly ${(I/2)^{\tensor q_\sfS}}$,
as $P$ has never touched the qubits in ${\sfV = (\overline{\sfS}, \sfS)}$.
Let $\rho'_x$ be the ${(q_\sfV + q_\sfM)}$-qubit state in ${(\sfV, \sfM) = (\overline{\sfS}, \sfS, \sfM)}$
defined by ${\rho'_x = \ketbra{0}^{\tensor (q_\sfV - q_\sfS)} \tensor \rho_x}$,
and let $\Pi_\acc$ be the projection operator defined by
${\Pi_\acc = \ketbra{1} \tensor I^{\tensor (q_\sfV + q_\sfM - 1)}}$.

Then ${p_\acc = \tr \Pi_\acc V_x \rho'_x \conjugate{V_x}}$ is exactly the acceptance probability with this prover~$P$, which is at least $c$,
and $Q_x$ outputs the state
\[
\xi = p_\acc (I/2)^{\tensor q_\sfS} + (1 - p_\acc) (\ketbra{0})^{\tensor q_\sfS}
\]
in Step~2.2, when $\rho_x$ is given as an input to $Q_x$.
On the other hand, $Q_x$ clearly outputs the totally mixed state ${(I/2)^{\tensor q_\sfS}}$ in Step~2.1,
when $\rho_x$ is given as an input to $Q_x$.
Hence, given the input state~$\rho_x$, the circuit~$Q_x$ outputs the state
\[
Q_x(\rho_x)
=
\frac{1}{2} (I/2)^{\tensor q_\sfS} + \frac{1}{2} \xi
=
\frac{1}{2} (1 + p_\acc) (I/2)^{\tensor q_\sfS} + \frac{1}{2} (1 - p_\acc) (\ketbra{0})^{\tensor q_\sfS}.
\]
Therefore,
\[
\bigtrnorm{Q_x(\rho_x) - (I/2)^{\tensor q_\sfS}}
=
\frac{1}{2} (1 - p_\acc)
\bigtrnorm{(\ketbra{0})^{\tensor q_\sfS} - (I/2)^{\tensor q_\sfS}},
\]
which implies that
\[
D \bigl( Q_x(\rho_x), (I/2)^{\tensor q_\sfS} \bigr)
=
\frac{1}{2} (1 - p_\acc)
D \bigl( (\ketbra{0})^{\tensor q_\sfS}, (I/2)^{\tensor q_\sfS} \bigr)
\leq
\frac{1}{2} (1 - p_\acc)
\leq
\frac{1}{2} (1 - c).
\]
Hence, choosing ${c \geq 1 - 2a}$,
the inequality~${D \bigl( Q_x(\rho_x), (I/2)^{\tensor q_\sfS} \bigr) \leq a}$ holds.

Now suppose that $x$ is in $A_\no$.
Then $V$ accepts with probability at most $s$ no matter which quantum prover he communicates with.
Let $\rho$ be any ${(q_\sfS + q_\sfM)}$-qubit state in ${(\sfS, \sfM)}$,
and consider the reduced state~$\rho'$ in $\sfS$ of $\rho$.
As before, let $\Pi_\acc$ be the projection operator defined by
${\Pi_\acc = \ketbra{1} \tensor I^{\tensor (q_\sfV + q_\sfM - 1)}}$.
The state~${Q_x(\rho)}$ that $Q_x$ outputs when the input state was $\rho$
is given by
\[
Q_x(\rho)
=
\frac{1}{2} \rho'
+
\frac{1}{2} \bigl[ p'_\acc (I/2)^{\tensor q_\sfS} + (1 - p'_\acc) (\ketbra{0})^{\tensor q_\sfS} \bigr],
\]
where
${p'_\acc = \tr \Pi_\acc V_x \bigl( (\ketbra{0})^{\tensor (q_\sfV - q_\sfS)} \tensor \rho \bigr) \conjugate{V_x}}$
is the probability that $Q_x$ outputs the totally mixed state in Step~2.2,
when given the input state~$\rho$.

If ${D(\rho', (I/2)^{\tensor q_\sfS}) \geq 1 - \frac{1}{\sqrt{5}}}$,
by Lemma~\ref{Lemma: mixture of states and trace distance},
the state~${Q_x(\rho)}$ that $Q_x$ outputs when the input state was $\rho$
satisfies that
\[
D \bigl( Q_x(\rho), (I/2)^{\tensor q_\sfS} \bigr)
\geq
D(\rho', (I/2)^{\tensor q_\sfS}) - \frac{1}{2}
\geq
\frac{1}{2} - \frac{1}{\sqrt{5}}.
\]

On the other hand,
if ${D(\rho', (I/2)^{\tensor q_\sfS}) < 1 - \frac{1}{\sqrt{5}}}$,
consider any purification~$\ket{\phi_\rho}$ in ${(\sfS, \sfM, \sfP)}$ of $\rho$,
where $\sfP$ is a quantum register sufficiently large for the purification.
Note that $\ket{\phi_\rho}$ is also a purification of the reduced state~$\rho'$ of $\rho$,
and thus, by Lemma~\ref{Lemma: Uhlmann's theorem} (Uhlmann's theorem),
there should be a purification~$\ket{\phi_\legal}$ in ${(\sfS, \sfM, \sfP)}$ of
the totally mixed state~${(I/2)^{\tensor q_\sfS}}$ such that
\[
F(\ketbra{\phi_\rho}, \ketbra{\phi_\legal})
=
F(\rho', (I/2)^{\tensor q_\sfS}).
\]
Therefore, the reduced state~$\rho_\legal$ in ${(\sfV, \sfM) = (\overline{\sfS}, \sfS, \sfM)}$
of the state~${(\ketbra{0})^{\tensor (q_\sfV - q_\sfS)} \tensor \ketbra{\phi_\legal}}$
must satisfy that
\[
F((\ketbra{0})^{\tensor (q_\sfV - q_\sfS)} \tensor \rho, \rho_\legal)
=
F(\ketbra{\phi_\rho}, \ketbra{\phi_\legal})
=
F(\rho', (I/2)^{\tensor q_\sfS}),
\]
and thus,
Lemma~\ref{Lemma: trace distance and fidelity} (the Fuchs-van-de-Graaf inequalities)
implies that
\begin{equation}
\begin{split}
D((\ketbra{0})^{\tensor (q_\sfV - q_\sfS)} \tensor \rho, \rho_\legal)
&
\leq
\sqrt{1 - F((\ketbra{0})^{\tensor (q_\sfV - q_\sfS)} \tensor \rho, \rho_\legal)^2}
\\
&
=
\sqrt{1 - F(\rho', (I/2)^{\tensor q_\sfS})^2}
\leq
\sqrt{1 - \bigl(1 - D(\rho', (I/2)^{\tensor q_\sfS}) \bigr)^2}
<
\frac{2}{\sqrt{5}}.
\end{split}
\label{Inequality: trace distance to legal}
\end{equation}
As $\rho_\legal$ is a legal state
that can appear in ${(\sfV, \sfM) = (\overline{\sfS}, \sfS, \sfM)}$
of the starting $\mathrm{qq}$-QAM system just before the final verification procedure of $V$,
from the soundness property of the system, it holds that
${
\tr \Pi_\acc V_x \rho_\legal \conjugate{V_x} \leq s
}$.
Thus, from Lemma~\ref{Lemma: trace distance and probability}
together with Eq.~(\ref{Inequality: trace distance to legal}),
the probability~$p'_\acc$ that $Q_x$ outputs the totally mixed state in Step~2.2,
when given the input state~$\rho$,
is bounded from above by
\[
p'_\acc
\leq
s + D((\ketbra{0})^{\tensor (q_\sfV - q_\sfS)} \tensor \rho, \rho_\legal)
<
s + \frac{2}{\sqrt{5}}.
\]
This implies that, when the input state was $\rho$,
the probability~$p'_0$ that $Q_x$ outputs the state~${(\ketbra{0})^{\tensor q_\sfS}}$
is bounded by
\[
p'_0 = \frac{1}{2} (1 - p'_\acc) > \frac{1}{2} \Bigl( 1 - s - \frac{2}{\sqrt{5}} \Bigr),
\]
and thus, by Lemma~\ref{Lemma: mixture of states and trace distance},
the state~${Q_x(\rho)}$ that $Q_x$ outputs when the input state was $\rho$
satisfies that
\[
\begin{split}
D \bigl( Q_x(\rho), (I/2)^{\tensor q_\sfS} \bigr)
&
\geq
D((\ketbra{0})^{\tensor q_\sfS}, (I/2)^{\tensor q_\sfS}) - (1 - p'_0)
\\
&
>
(1 - 2^{- q_\sfS}) - \Bigl[ 1 - \frac{1}{2} \Bigl( 1 - s - \frac{2}{\sqrt{5}} \Bigr) \Bigr]
=
\frac{1}{2} - \frac{1}{\sqrt{5}} - \frac{s}{2} - 2^{- q_\sfS}.
\end{split}
\]

Hence, no matter which state~$\rho$ given as input,
it holds that
\[
D \bigl( Q_x(\rho), (I/2)^{\tensor q_\sfS} \bigr)
>
\min
\Bigl\{
\frac{1}{2} - \frac{1}{\sqrt{5}},
\>\>
\frac{1}{2} - \frac{1}{\sqrt{5}} - \frac{s}{2} - 2^{- q_\sfS}
\Bigr\}
=
\frac{1}{2} - \frac{1}{\sqrt{5}} - \frac{s}{2} - 2^{- q_\sfS}.
\]
Without loss of generality, one can assume that ${q_\sfS \geq 10}$,
and thus,
by choosing ${s \leq 2^{-9}}$,
the inequality~${D \bigl( Q_x(\rho), (I/2)^{\tensor q_\sfS} \bigr) > 1/20}$ holds for any~$\rho$.

This completes the proof of the $\qqQAM$-hardness of ${\CITM(a, 1/20)}$
for any positive constant~${a < 1/20}$.

The $\qqQAM$-hardness of ${\CITM(a, b)}$ for any constants~$a$~and~$b$ satisfying ${0 < a < b < 1}$ follows
by first creating an instance~$Q_x$ of ${\CITM(a/k, 1/20)}$ for some constant~${k \in \Natural}$ according to the construction above, 
and then constructing another circuit~$Q'_x$ that places $k$~copies of $Q_x$ in parallel.
Indeed, Lemma~\ref{Lemma: polarization of minimum output trace distance}
ensures that $Q'_x$ is an instance of ${\CITM(a,b)}$,
by taking ${k = \bigceil{2 \frac{\ln(1/(1-b))}{\ln (400/399)}}}$
and considering the transformation~$\Phi$ induced by $Q_x$
and the transformation~$\Psi$ that receives an input state of ${(q_\sfS + q_\sfM)}$ qubits
and always outputs the totally mixed state~${(I/2)^{\tensor q_\sfS}}$ regardless of the input.
\end{proof}

From Lemmas~\ref{Lemma: CITM is in qq-QAM}~and~\ref{Lemma: qq-QAM-hardness of CITM}, Theorem~\ref{Theorem: qq-QAM-completeness of CITM} follows.

Note that, with essentially the same proofs as those for Lemmas~\ref{Lemma: CITM is in qq-QAM}~and~\ref{Lemma: qq-QAM-hardness of CITM},
one can show that for any $b$ in~${(0,1)}$,
${\CITM(0,b)}$ is in $\qqQAM_1$ and is hard for $\qqQAM_1$,
and thus, the following corollary holds.
\begin{corollary}
For any constant~$b$ in ${(0,1)}$,
${\CITM(0,b)}$ is $\qqQAM_1$-complete under polynomial-time many-one reduction.
\label{Corollary: qq-QAM_1-completeness of CITM(0,b)}
\end{corollary}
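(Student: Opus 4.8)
The plan is to observe that the two lemmas establishing Theorem~\ref{Theorem: qq-QAM-completeness of CITM} already contain the perfect-completeness case as the specialization ${a = 0}$, and to supply the small amount of extra bookkeeping needed to land in $\qqQAM_1$ rather than $\qqQAM$. Concretely, I would prove separately that ${\CITM(0,b)}$ lies in $\qqQAM_1$ and that it is $\qqQAM_1$-hard, mirroring Lemmas~\ref{Lemma: CITM is in qq-QAM}~and~\ref{Lemma: qq-QAM-hardness of CITM}.

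For membership, I would rerun the verifier of Figure~\ref{Figure: verifier's qq-QAM protocol for CITM} with ${a = 0}$. A yes-instance now asserts the existence of a state~$\rho$ with ${Q(\rho) = (I/2)^{\tensor q_\out}}$ exactly, so ${F(Q(\rho),(I/2)^{\tensor q_\out}) = 1}$; Uhlmann's theorem (Lemma~\ref{Lemma: Uhlmann's theorem}) then supplies a prover unitary~$U_P$ for which the acceptance probability equals ${(1-a)^2 = 1}$. Since the verifier's actions consist only of EPR generation, the application of $\conjugate{U_{Q_x}}$, and a computational-basis check, they are implementable exactly under the gate-set assumption of Section~\ref{Subsection: Uniform QC}, so completeness is genuinely perfect. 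The soundness is ${1 - b^2}$, a constant strictly below~$1$ (note ${(1-a)^2 = 1 > 1 - b^2}$ for every ${b \in (0,1)}$), so Lemma~\ref{Lemma: amplification for perfect completeness case} drives the soundness error down to negligible while preserving perfect completeness. Hence ${\CITM(0,b) \in \qqQAM_1}$.

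For hardness, I would start from a problem ${A \in \qqQAM_1}$, which (after applying Lemma~\ref{Lemma: amplification for perfect completeness case}) has a $\mathrm{qq}$-QAM system with completeness~$1$ and soundness ${s \leq 2^{-9}}$, and build exactly the circuit~$Q_x$ of Figure~\ref{Figure: construction of Q_x (modified)}. In the yes-case the perfectly accepting prover yields ${p_\acc = 1}$, whence ${Q_x(\rho_x) = (I/2)^{\tensor q_\sfS}}$ and ${D(Q_x(\rho_x),(I/2)^{\tensor q_\sfS}) = 0}$; in the no-case the soundness estimate is untouched and still gives ${D(Q_x(\rho),(I/2)^{\tensor q_\sfS}) > 1/20}$ for every input~$\rho$. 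This already shows $\qqQAM_1$-hardness of ${\CITM(0,1/20)}$. To reach an arbitrary ${b \in (0,1)}$ I would reuse the parallel-repetition argument of Lemma~\ref{Lemma: qq-QAM-hardness of CITM}: placing $k$ copies of $Q_x$ side by side and feeding ${\rho_x^{\tensor k}}$ in the yes-case keeps the output exactly ${(I/2)^{\tensor k q_\sfS}}$, so the distance stays~$0$, while the lower bound of Lemma~\ref{Lemma: polarization of minimum output trace distance} pushes the no-case minimum output distance above~$b$ for a suitable constant~$k$.

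The only point demanding genuine care is \emph{exactness}: in the perfect-completeness regime every operation in both the verifier's protocol and the constructed circuit~$Q_x$ --- EPR-pair generation, preparation of the totally mixed state as half of an EPR pair, the fair coin flip, the swaps, and the embedding of $V_x$ --- must be realizable with Hadamard and classical reversible gates alone, so that no spurious completeness error is introduced. Granting the gate-set assumption already adopted for the perfect-completeness results, all of these are exact, and the fidelity and trace-distance estimates are word-for-word identical to the two-sided case, so no further numerical work is needed.
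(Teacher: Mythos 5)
Your proposal is correct and follows essentially the same route as the paper, which proves the corollary precisely by observing that the proofs of Lemmas~\ref{Lemma: CITM is in qq-QAM} and~\ref{Lemma: qq-QAM-hardness of CITM} go through verbatim at ${a = 0}$: the Uhlmann argument gives acceptance probability ${(1-a)^2 = 1}$ for membership, and in the hardness construction a perfectly accepting prover makes ${Q_x(\rho_x)}$ exactly totally mixed, with parallel repetition and Lemma~\ref{Lemma: polarization of minimum output trace distance} handling general~$b$. Your added care about exact implementability under the gate-set assumption of Section~\ref{Subsection: Uniform QC} is exactly the hypothesis the paper adopts for its perfect-completeness results, so nothing is missing.
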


\begin{remark}
The proofs of this section actually also show that the variant of the $\CITM$ problem where
the number of output qubits of the circuit is a fixed constant independent of instances
is complete for the class~$\EPRQMA{\const}$ introduced in Ref.~\cite{KobLeGNis13ITCS},
and thus, it is $\QMA$-complete since ${\EPRQMA{\const} = \QMA}$~\cite{BeiShoWat11ToC}.
\end{remark}


\section{Collapse Theorem for $\boldsymbol{\qqQAM}$}
\label{Section: c...cqq-QAM(m) = qq-QAM}

This section proves Theorem~\ref{Theorem: c...cqq-QAM(m) = qq-QAM}, the
quantum analogue of Babai's collapse theorem~\cite{Bab85STOC} stating that
${\mathrm{c \cdots c}\qqQAM(m) = \qqQAM}$ for any constant~${m \geq 2}$.

First, it is proved that for any constant~${m \geq 4}$,
${\mathrm{c \cdots c}\qqQAM(m) \subseteq \mathrm{cc}\qqQAM}$ holds,
meaning that the first ${(m-4)}$~classical turns can be removed.
The proof essentially relies on the observation that the techniques used in 
the classical result by Babai~\cite{Bab85STOC} can be applied in the 
quantum setting as well.\footnotemark

\footnotetext{
  In Ref.~\cite{BabMor88JCSS}, the journal version of Ref.~\cite{Bab85STOC},
  a more efficient protocol (the {\em speedup theorem}) is given to reduce the number of turns,
  but it is more complicated, and not necessary for our purpose.
}

\begin{lemma}
For any constant~${m \geq 4}$,
${
\mathrm{c \cdots c}\qqQAM(m) \subseteq \mathrm{cc}\qqQAM
}$.
\label{Lemma: c...cqq-QAM(m) is in ccqq-QAM}
\end{lemma}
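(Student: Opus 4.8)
The plan is to run Babai's round-collapse argument on the classical prefix of the protocol, treating the trailing two quantum turns (the verifier's EPR round and the prover's quantum response, followed by the final verification) as a single ``randomized decision oracle'' whose value depends only on the classical transcript produced so far.

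First I would make the structure explicit. A ${\mathrm{c \cdots c}\qqQAM(m)}$ system consists of ${m-2}$ classical public-coin turns followed by the two quantum turns ${t_2 t_1 = \rmq\rmq}$. Fix an input~$x$ and let $\tau$ denote a complete transcript of the classical prefix (all verifier coins together with all classical prover messages). Given $\tau$, the remaining interaction is an ordinary $\qqQAM$ subprotocol, and I would let ${v_x(\tau) \in [0,1]}$ be its optimal acceptance probability, i.e., the maximum over quantum prover strategies of the probability that the verifier's final circuit accepts. The point is that $v_x$ is simply a bounded real-valued function of the classical string $\tau$: the acceptance value of the whole system is obtained from $v_x$ by alternately averaging over the verifier's public coins and maximizing over the prover's classical messages, which is exactly the quantifier alternation on which Babai's argument operates. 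Thus the quantum tail never has to be opened up; it is carried along untouched as the leaf of the computation.

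Then the core step is the switching lemma. As in the classical proof, I would first amplify the protocol so that the completeness--soundness gap is exponentially small, using Lemma~\ref{Lemma: amplification}; this is precisely the place where it matters that amplification is available for a partly quantum system, which in turn rests on the perfect parallel repetition theorem for quantum interactive proofs invoked there. After amplification I would apply Babai's ${\mathrm{MA} \subseteq \mathrm{AM}}$ switch to an adjacent (classical-prover, verifier-coin) segment of the prefix: completeness is preserved because the prover may reuse a message that already worked (and now even gets to see the coins), while soundness is controlled by a union bound over the finitely many values of the swapped classical prover message, whose ${2^{\poly}}$ blow-up is beaten by the pre-amplified exponentially small soundness. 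Once this segment is swapped, two same-party turns become adjacent and merge (the verifier's coins are concatenated into a single public-coin message, or two consecutive classical prover messages are concatenated), strictly decreasing the number of classical turns.

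Finally I would iterate. Since $m$ is a constant, only finitely many swap-and-merge steps are needed, each followed by a re-amplification via Lemma~\ref{Lemma: amplification}, to drive the classical prefix down to a single verifier coin-message followed by a single classical prover message; one may assume the verifier speaks first by prepending a trivial ignored coin-flip turn if necessary. What remains is exactly two classical turns in front of the unchanged ${\rmq\rmq}$ tail, i.e., a ${\mathrm{cc}\qqQAM}$ system, which gives the desired inclusion. The step I expect to be the main obstacle---and the only place where the quantum setting must be argued rather than quoted from~\cite{Bab85STOC}---is the reduction described in the first paragraph: verifying rigorously that the full system's acceptance value factors as Babai's classical quantifier alternation applied to the leaf value $v_x(\tau)$, so that reordering and merging classical messages is sound even though the final decision is a genuine quantum acceptance probability rather than a $0/1$ predicate, and that the global amplification needed to beat the union bound is legitimately inherited by the conditional soundness of the classical rounds.
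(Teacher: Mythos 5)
Your overall architecture---viewing the two quantum turns as a leaf value $v_x(\tau)$ of the classical transcript and running Babai's collapse argument on the classical prefix---is the same as the paper's, and the issue you flag at the end as the ``main obstacle'' (the factorization of the game value into a classical max/average alternation applied to $v_x(\tau)$) is in fact the easy part: the paper dispatches it in one sentence via the conditional value $w_x(y,r)$. The genuine gap is in your switching step. You propose to first amplify the whole protocol to exponentially small soundness via Lemma~\ref{Lemma: amplification} and then absorb the union bound over the $2^{l'}$ values of the swapped prover message. This is circular and cannot work: Lemma~\ref{Lemma: amplification} amplifies by parallel repetition, so reaching soundness $2^{-p}$ multiplies the length of \emph{every} prover message---including the one being swapped---by $\Omega(p)$. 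The union-bound factor is therefore $2^{\Omega(p\,l)}$ against a soundness of only $2^{-p}$, and the product $2^{\Omega(p\,l)-p}$ diverges for every choice of $p$ once $l$ exceeds a constant. No amount of whole-protocol pre-amplification can ever beat the union bound; this obstruction is precisely why Babai's proof (and the paper's) does not take this route.

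What the paper does instead is keep the protocol at \emph{constant} error $2^{-8}$ and build the switch so that only the part after the swapped message gets repeated: the verifier sends ${k = \lceil (2+l)/3 \rceil}$ independent random strings ${r_1, \ldots, r_k}$, the prover answers with a single pair ${(y,z)}$, and the verifier runs the $k$ continuations (with first three messages $y$, $r_j$, $z$) in parallel, accepting on a majority. The soundness bound ${p'_x \leq 2^{k+l} p_x^{k/2}}$ of Claim~\ref{claim1}, where $p_x \leq 2^{-8}$ is the value of the constant-error protocol, then wins because the union bound is over the \emph{original, unamplified} message length $l$ while the exponent $k/2$ grows with $l$. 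Moreover, establishing that bound requires the perfect parallel repetition theorem for the $k$ parallel \emph{quantum} continuation games; that is, Gutoski's theorem is needed inside the switch itself, not merely in the pre-amplification step where your proposal invokes it. Finally, note that the paper never switches a segment in the middle of the prefix: for even $m$ it absorbs the verifier's first coin message into the input of an auxiliary promise problem $B$, applies the odd case to $B$, and then merges the coins back into the first verifier turn. Your plan of switching arbitrary adjacent middle segments would additionally require a conditional soundness accounting that your write-up does not supply.
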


\begin{proof}
It suffices to show that for any odd constant~${m \geq 5}$, 
${\mathrm{c \cdots c}\qqQAM(m) \subseteq \mathrm{c \cdots c}\qqQAM(m-1)}$,
and for any even constant~${m \geq 6}$,
${\mathrm{c \cdots c}\qqQAM(m) \subseteq \mathrm{c \cdots c}\qqQAM(m-2)}$.

Let ${A = (A_\yes, A_\no)}$ be a problem in ${\mathrm{c \cdots c}\qqQAM(m)}$.
By Lemma~\ref{Lemma: amplification},
$A$ has an $m$-turn $\mathrm{c \cdots cqq}$-QAM proof system~$\Pi$
with completeness~${1 - 2^{-8}}$ and soundness~$2^{-8}$. 
Without loss of generality, one can assume that,
for every input of length~$n$,
every classical message exchanged consists of ${l(n)}$ bits
for some polynomially bounded function~$l$.

First consider the case with odd~$m$, where the first turn is for the prover.
Fix an input~$x$, and let ${w_x(y,r)}$ be the maximum of the probability that
a prover can make the verifier accept,
under the condition that the first message from the prover is ${y \in \Binary^{l(\abs{x})}}$
and the second message from the verifier is ${r \in \Binary^{l(\abs{x})}}$.
Then, the maximum acceptance probability in the system~$\Pi$ is given by 
${p_x = \max_y \{ \expectation{w_x(y,r)} \}}$,
where the expectation is taken over the uniform distribution 
with respect to ${r \in \Binary^{l(\abs{x})}}$.
Note that ${p_x \geq 1 - 2^{-8}}$ if $x$ is in $A_\yes$,
and ${p_x \leq 2^{-8}}$ if $x$ is in $A_\no$.

Consider the ${(m-1)}$-turn $\mathrm{c \cdots cqq}$-QAM proof system~$\Pi'$
specified by the following protocol of the verifier:
At the first turn,
the verifier sends ${k(\abs{x})}$~strings~${r_1, \ldots, r_{k(\abs{x})}}$ chosen uniformly at random from $\Binary^{l(\abs{x})}$,
for some polynomially bounded function~$k$.
Upon receiving a pair of strings~$y$~and~$z$ in $\Binary^{l(\abs{x})}$
at the third turn,
the verifier enters the simulations of the last ${(m-3)}$~turns of communications of $\Pi$,
by running in parallel ${k(\abs{x})}$~attempts of such simulations, 
where the $j$th attempt assumes that
the first three messages in the original $\Pi$ were $y$, $r_j$, and $z$,
respectively, for each ${j \in \{1, \ldots, k(\abs{x}) \}}$.
The verifier accepts if and only if more than ${k(\abs{x})/2}$ attempts result in acceptance in the original $\Pi$.
Figure~\ref{Figure: reducing the number of turns by one, for odd m}
summarizes the protocol of this verifier in $\Pi'$.

\begin{figure}[t!]
\begin{protocol*}{Verifier's Protocol for Reducing the Number of Turns by One (for Odd $\boldsymbol{m}$)}
\begin{step}
\item
  Send ${k(\abs{x})}$~strings~${r_1, \ldots, r_{k(\abs{x})}}$,
  each chosen uniformly at random from $\Binary^{l(\abs{x})}$,
  to the prover,
  for some polynomially bounded function~$k$.
\item
  Receive a pair of strings~$y$~and~$z$ in $\Binary^{l(\abs{x})}$ from the prover.
  Run in parallel ${k(\abs{x})}$~attempts of the ${(m-3)}$-turn protocol
  that simulates the last ${(m-3)}$~turns of communications of
  the original $m$-turn $\mathrm{c \cdots cqq}$-QAM proof system~$\Pi$ on input~$x$,
  where the $j$th attempt assumes that
  the first three messages in the original $\Pi$ were $y$, $r_j$, and $z$,
  respectively, for each ${j \in \{1, \ldots, k(\abs{x}) \}}$.
  Accept if more than ${k(\abs{x})/2}$ attempts result in acceptance in the simulations of $\Pi$,
  and reject otherwise.
\end{step}
\end{protocol*}
\caption{Verifier's protocol in $\Pi'$ for reducing the number of turns by one when $m$ is odd.}
\label{Figure: reducing the number of turns by one, for odd m}
\end{figure}

In fact, the construction of this proof system~$\Pi'$
is exactly the same as in Ref.~\cite{Bab85STOC}
except that the last two messages exchanged are quantum
and the final verification of the verifier is a polynomial-time quantum computation
in the present case.
The analysis in Ref.~\cite{Bab85STOC} works also in the present case,
since it only relies on the fact that ${w_x(y,r)}$ gives the conditional probability defined above,
and the perfect parallel repetition theorem holds for general quantum interactive proof systems~\cite{Gut09PhD}.
In particular,
the following property holds also in the present case 
(see Lemmas~3.3~and~3.4 of Ref.~\cite{Bab85STOC}). 

\begin{claim}\label{claim1}
The maximum acceptance probability~$p'_x$ in $\Pi'$ satisfies that
\[
1 - 2^{k(\abs{x})}(1 - p_x)^{k(\abs{x})/2} \leq p'_x \leq 2^{k(\abs{x}) + l(\abs{x})} p_x^{k(\abs{x})/2}.
\]
\end{claim}

Now let ${k = \bigceil{\frac{2+l}{3}}}$.
If $x$ is in $A_\yes$, then the maximum acceptance probability~$p'$ in $\Pi'$ is at least
\[
1 - 2^{k(\abs{x})} (1 - p_x)^{k(\abs{x})/2}
\geq
1 - 2^{k(\abs{x})} (2^{-8})^{k(\abs{x})/2}
\geq
1 - \frac{1}{2^{l(\abs{x}) + 2}}
\geq
\frac{3}{4},
\]
while if $x$ is in $A_\no$, then the maximum acceptance probability~$p'$ in $\Pi'$ is at most
\[
2^{k(\abs{x}) + l(\abs{x})} p_x^{k(\abs{x})/2}
\leq
2^{k(\abs{x}) + l(\abs{x})} (2^{-8})^{k(\abs{x})/2}
\leq
\frac{1}{4},
\]
which completes the proof for the case with odd~$m$.

Next consider the case with even~$m$, where the first message is a random string from a verifier.
Let $\Pi^{(-1)}$ be the ${(m-1)}$-turn $\mathrm{c \cdots cqq}$-QAM proof system
that on input~${(x,r)}$ simulates the last ${m-1}$~turns of $\Pi$ on $x$
under the condition that the first message from the verifier was $r$ in $\Pi$.
Let ${B = (B_\yes, B_\no)}$ be the following promise problem
in ${\mathrm{c \cdots c}\qqQAM(m-1)}$:
\begin{alignat*}{2}
&
B_\yes
&&
=
\set{(x,r)}{\mbox{the maximum acceptance probability in $\Pi^{(-1)}$ on input~${(x,r)}$ is at least $2/3$}},
\\
&
B_\no
&&
=
\set{(x,r)}{\mbox{the maximum acceptance probability in $\Pi^{(-1)}$ on input~${(x,r)}$ is at most $1/3$}}.
\end{alignat*}
Note that, if $x$ is in $A_\yes$,
then ${(x,r)}$ is in $B_\yes$ for at least ${(1 - 3 \cdot 2^{-8})}$-fraction of the choices of $r$.
Similarly, if $x$ is in $A_\no$,
then ${(x,r)}$ is in $B_\no$ for at least ${(1 - 3 \cdot 2^{-8})}$-fraction of the choices of $r$.
By the result for the case with odd~$m$ above,
it holds that $B$ is in ${\mathrm{c \cdots c}\qqQAM(m-2)}$.
Thus, there exists an ${(m-2)}$-turn $\mathrm{c \cdots cqq}$-QAM proof system~$\Pi^{(-2)}$ for $B$
such that if ${(x,r)}$ is in $B_\yes$,
the maximum acceptance probability in $\Pi^{(-2)}$ is at least $2/3$,
while if ${(x,r)}$ is in $B_\no$,
the maximum acceptance probability in $\Pi^{(-2)}$ is at most $1/3$.
Note that the first turn of $\Pi^{(-2)}$ is a turn for the verifier,
and thus, one can merge the turn for sending $r$ with the first turn of $\Pi^{(-2)}$.
This results in an ${(m-2)}$-turn $\mathrm{c \cdots cqq}$-QAM proof system~$\Pi''$ for $A$
in which at the first turn
the new verifier sends a string ${r \in \Binary^{l(\abs{x})}}$ chosen uniformly at random
in addition to the original first message of the verifier in $\Pi^{(-2)}$ on input~${(x,r)}$,
and then behaves exactly in the same manner
as the verifier in $\Pi^{(-2)}$ on input~${(x,r)}$ in the rest of the protocol.
If $x$ is in $A_\yes$,
the maximum acceptance probability in this $\Pi''$ is at least
${(1 - 3 \cdot 2^{-8}) \cdot (2/3) > 5/8}$,
while if $x$ is in $A_\no$,
the maximum acceptance probability in $\Pi''$ is at most
${3 \cdot 2^{-8}+(1 - 3\cdot 2^{-8}) \cdot (1/3) < 3/8}$,
which is sufficient for the claim, due to Lemma~\ref{Lemma: amplification}. 
\end{proof}

Second, using the fact that $\CITM$ is $\qqQAM$-complete,
it is proved that ${\mathrm{c}\qqQAM \subseteq \qqQAM}$.

\begin{lemma}
${\mathrm{c}\qqQAM \subseteq \qqQAM}$.
\label{Lemma: cqq-QAM is in qq-QAM}
\end{lemma}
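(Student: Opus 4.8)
The plan is to leverage the $\qqQAM$-completeness of $\CITM$: it suffices to exhibit a polynomial-time many-one reduction from an arbitrary problem $A \in \mathrm{c}\qqQAM$ to $\CITM(a,b)$ for some constants satisfying $(1-a)^2 > 1 - b^2$, since $\CITM(a,b) \in \qqQAM$ by Lemma~\ref{Lemma: CITM is in qq-QAM} and a $\qqQAM$ verifier for $A$ is then obtained by composing the reduction with the $\CITM(a,b)$ verifier. First I would invoke Lemma~\ref{Lemma: amplification} to fix a $\mathrm{cqq}$-QAM proof system for $A$ with constant completeness $c$ close to $1$ and soundness $s$ close to $0$. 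The crucial point of the $\mathrm{cqq}$ structure is that the first message is a \emph{classical} string $y$ from the prover; once $y$ is fixed, the remaining two turns form an ordinary $\qqQAM$ subprotocol whose verifier circuit $V_{x,y}$ is polynomial-time computable from $x$ and $y$. Hence $x \in A_\yes$ iff there \emph{exists} a $y$ for which $V_{x,y}$ has high maximum acceptance probability, and $x \in A_\no$ iff for \emph{all} $y$ the subprotocol has low maximum acceptance probability; the existential quantifier over the classical message is exactly matched by the existential quantifier over the witness state in $\CITM$, which is what makes the fold-in possible.

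Concretely, I would mimic the circuit built in Lemma~\ref{Lemma: qq-QAM-hardness of CITM}, augmenting its input with an extra register $\sfY$ holding $y$. The new circuit $Q_x$ first measures $\sfY$ in the computational basis (deferred to the end for unitarity by copying $\sfY$ onto an ancilla that is traced out), reads off $y$, and then runs exactly the Figure~\ref{Figure: construction of Q_x (modified)} construction for the subprotocol $V_{x,y}$: with probability one-half it outputs the register $\sfS$, and with probability one-half it runs $V_{x,y}$ and outputs the totally mixed state upon acceptance and $\ketbra{0}^{\tensor q_\sfS}$ upon rejection. For completeness, given $x \in A_\yes$ I would pick a good classical message $y^\ast$ and feed $Q_x$ the input $\ketbra{y^\ast} \tensor \rho_x$, where $\rho_x$ is the honest $\qqQAM$ message of $V_{x,y^\ast}$; the $\sfY$-measurement returns $y^\ast$ deterministically, and the analysis of Lemma~\ref{Lemma: qq-QAM-hardness of CITM} then gives $D(Q_x(\ketbra{y^\ast}\tensor\rho_x), (I/2)^{\tensor q_\sfS}) \leq (1-c)/2$.

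The main obstacle is soundness, because a mixture over different $y$ of states each far from the totally mixed state need not itself be far from it. The resolution is the key structural feature of the $\CITM$ circuit: the rejection branch always outputs the \emph{fixed} state $\ketbra{0}^{\tensor q_\sfS}$, independently of $y$. For $x \in A_\no$ and any input $\rho$, measuring $\sfY$ writes $\rho$ as a classical mixture $\sum_y p_y \ketbra{y}\tensor\rho_y$, and since each $V_{x,y}$ accepts with probability at most $s$, every branch contributes weight at least $(1-s)/2$ to the common component $\ketbra{0}^{\tensor q_\sfS}$; thus $Q_x(\rho) = \alpha\,\ketbra{0}^{\tensor q_\sfS} + (1-\alpha)\tau$ with $\alpha \geq (1-s)/2$, and Lemma~\ref{Lemma: mixture of states and trace distance} yields $D(Q_x(\rho), (I/2)^{\tensor q_\sfS}) \geq \alpha - 2^{-q_\sfS} \geq (1-s)/2 - 2^{-q_\sfS}$, a constant bounded away from the completeness bound once $c,s$ are amplified and $q_\sfS$ is taken large. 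Finally, to meet the precise requirement $(1-a)^2 > 1 - b^2$ I would polarize exactly as at the end of Lemma~\ref{Lemma: qq-QAM-hardness of CITM}, placing $k$ parallel copies of $Q_x$ and applying Lemma~\ref{Lemma: polarization of minimum output trace distance} to the channel induced by $Q_x$ against the constant channel outputting $(I/2)^{\tensor q_\sfS}$. This produces a legitimate $\CITM(a,b)$ instance computable in polynomial time, completing the reduction and hence the containment ${\mathrm{c}\qqQAM \subseteq \qqQAM}$.
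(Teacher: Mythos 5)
Your overall plan (reduce a problem in $\mathrm{c}\qqQAM$ to $\CITM$ and invoke Lemma~\ref{Lemma: CITM is in qq-QAM}) is the same as the paper's, and your completeness argument is fine, but the soundness analysis has a fatal gap, and in fact the circuit you construct is not a correct reduction. The false step is the claim that ``since each $V_{x,y}$ accepts with probability at most $s$, every branch contributes weight at least ${(1-s)/2}$ to the common component $\ketbra{0}^{\tensor q_\sfS}$.'' The soundness of the $\mathrm{qq}$-QAM subprotocol bounds the acceptance probability only for states that can legally arise in the protocol, i.e., states on ${(\sfS, \sfM)}$ whose reduced state on $\sfS$ is totally mixed; an adversarially chosen input to the circuit obeys no such constraint. (This is exactly why the construction in Lemma~\ref{Lemma: qq-QAM-hardness of CITM} needs Step~2.1 and the Uhlmann-based case analysis instead of a direct appeal to soundness.) Concretely, consider the $\mathrm{cqq}$-QAM system in which, after the prover's first message ${y \in \Binary^l}$ with ${l = q_\sfS}$, the verifier accepts iff measuring $\sfS$ in the computational basis yields $y$. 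Every legal state has $\sfS$ totally mixed, so the maximum acceptance probability is $2^{-q_\sfS}$ for every $y$; this is a valid, very sound system, and every pair ${(x,y)}$ is a no-instance of the subproblem. Yet feeding your circuit the input ${2^{-l} \sum_y \ketbra{y}_\sfY \tensor \ketbra{y}_\sfS \tensor \sigma_\sfM}$ makes the branch that outputs $\sfS$ produce the average ${2^{-l} \sum_y \ketbra{y} = (I/2)^{\tensor q_\sfS}}$, while every run of $V_{x,y}$ accepts with probability one, so the other branch also outputs ${(I/2)^{\tensor q_\sfS}}$; the output is \emph{exactly} the totally mixed state. Thus a no-instance is mapped to a yes-instance of $\CITM$, and your final polarization step cannot repair this: by Lemma~\ref{Lemma: polarization of minimum output trace distance}, a circuit whose minimum output distance to the totally mixed state is $0$ keeps distance $0$ under parallel repetition.

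The missing idea is precisely how the paper defeats the mixture-over-$y$ problem that you correctly identify but do not resolve. The paper first amplifies and polarizes \emph{per fixed $y$}: it builds $Q'_{x,y}$ from $k$ parallel copies of the Figure~\ref{Figure: construction of Q_x (modified)} circuit, so that for no-instances \emph{every} input yields an output at trace distance greater than ${1 - 2^{-q}}$ from totally mixed, where ${q \geq l + 4}$. By the second inequality of Lemma~\ref{Lemma: trace distance and entropy}, such outputs have von~Neumann entropy at most ${q'_\out - q + 2}$; then, crucially, Lemma~\ref{Lemma: NC00Thm11.10} bounds the entropy of the mixture over the at most $2^l$ values of $y$ by ${l + q'_\out - q + 2 \leq q'_\out - 2}$, and the first inequality of Lemma~\ref{Lemma: trace distance and entropy} converts this back into a trace-distance lower bound of ${1/q'_\out}$. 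Only after this entropy argument does a final polarization yield a legitimate ${\CITM(1/8, 1/2)}$ instance. In short, ``far from totally mixed'' is not preserved under mixtures, but ``low entropy'' is (up to the $l$ bits contributed by the classical message), and making the per-$y$ distance exponentially close to $1$ \emph{before} mixing is what makes this bookkeeping work. Your proposal, which mixes first and retains only constant-size per-branch guarantees, cannot be patched without this (or an equivalent) ingredient.
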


\begin{proof}
Let ${A = (A_\yes,A_\no)}$ be a problem in ${\mathrm{c}\qqQAM}$.
Then, $A$ has a $\mathrm{cqq}$-QAM proof system~$\Pi$ with completeness~$2/3$ and soundness~$1/3$.
Let $l$ be the polynomially bounded function that specifies the length of the first message in $\Pi$.
Consider the $\mathrm{qq}$-QAM proof system~$\Pi^{\mathrm{qq}}$
that on input~${(x,w)}$ simulates the last two turns of $\Pi$ on $x$
under the condition that the first message in $\Pi$ was $w\in\Binary^{l(\abs{x})}$.
Let ${B = (B_\yes, B_\no)}$ be the following promise problem in $\qqQAM$:
\begin{alignat*}{2}
&
B_\yes
&&
=
\set{(x,w)}{\mbox{the maximum acceptance probability in $\Pi^{\mathrm{qq}}$ on input~${(x,w)}$ is at least $2/3$}},
\\
&
B_\no
&&
=
\set{(x,w)}{\mbox{the maximum acceptance probability in $\Pi^{\mathrm{qq}}$ on input~${(x,w)}$ is at most $1/3$}}.
\end{alignat*}
Note that for any $x$,
if $x$ is in $A_\yes$, there exists a string~$w$ in $\Binary^{l(\abs{x})}$
such that ${(x,w)}$ is in $B_\yes$,
and if $x$ is in $A_\no$, for every string~$w$ in $\Binary^{l(\abs{x})}$,
${(x,w)}$ is in $B_\no$.

Let $\function{p}{\Nonnegative}{\Natural}$ be a non-decreasing polynomially bounded function,
which will be fixed later.
First notice that $B$ has a $\mathrm{qq}$-QAM proof system
that satisfies completeness~${1 - 2^{-p}}$ and soundness~$2^{-p}$
(the existence of such a proof system is ensured by Lemma~\ref{Lemma: amplification}).
Starting from this $\mathrm{qq}$-QAM proof system,
the proof of Lemma~\ref{Lemma: qq-QAM-hardness of CITM} implies the existence of
a polynomial-time algorithm that, given ${(x,w)}$,
computes a description of a quantum circuit~$Q_{x,w}$
of ${q_\inp(\abs{x})}$~input qubits and ${q_\out(\abs{x})}$~output qubits
with the following properties:
\begin{itemize}
\item[(i)]
  if ${(x,w)}$ is in $B_\yes$,
  there exists a quantum state~$\rho$ consisting of ${q_\inp(\abs{x})}$~qubits such that
  ${D(Q_{x,w}(\rho), (I/2)^{\tensor q_\out(\abs{x})}) \leq 2^{-p(\abs{x} + \abs{w}) - 1} < 2^{-p(\abs{x})}}$, and
\item[(ii)]
  if ${(x,w)}$ is in $B_\no$,
  for any quantum state~$\rho$ consisting of ${q_\inp(\abs{x})}$~qubits,
  ${D(Q_{x,w}(\rho), (I/2)^{\tensor q_\out(\abs{x})}) > 1/20}$.
\end{itemize}
Let $q$ be another non-decreasing polynomially bounded function satisfying 
${q(n) \geq \max \{ l(n) + 4, n \}}$ for any $n$ in $\Nonnegative$.
Considering the quantum circuit~$Q'_{x,w}$
that runs ${k(\abs{x})}$~copies of $Q_{x,w}$ in parallel
for the polynomially bounded function~${k = \bigceil{\frac{2 \ln 2}{\ln (400/399)} q}}$
and taking ${p = q + \ceil{\log k}}$,
it follows from Lemma~\ref{Lemma: polarization of minimum output trace distance}
(with $\Phi$ being the transformation induced by $Q_{x,w}$
and $\Psi$ being the transformation that receives an input state of ${q_\inp(\abs{x})}$~qubits
and always outputs the totally mixed state~${(I/2)^{\tensor q_\out(\abs{x})}}$ regardless of the input)
that
\begin{itemize}
\item[(i)]
  if $x$ is in $A_\yes$,
  there exist a string~$w$ in $\Binary^{l(\abs{x})}$
  and a quantum state~$\rho'$ consisting of ${q'_\inp(\abs{x})}$~qubits
  such that ${D(Q'_{x,w}(\rho'), (I/2)^{\tensor q'_\out(\abs{x})}) < 2^{-q(\abs{x})}}$, and
\item[(ii)]
  if $x$ is in $A_\no$,
  for any string~$w$ in $\Binary^{l(\abs{x})}$
  and any quantum state~$\rho'$ consisting of ${q'_\inp(\abs{x})}$~qubits,
  ${D(Q'_{x,w}(\rho'), (I/2)^{\tensor q'_\out(\abs{x})}) > 1 - 2^{-q(\abs{x})}}$,
\end{itemize}
where ${q'_\inp = k q_\inp}$ and ${q'_\out = k q_\out}$.

Now consider the quantum circuit~$R_x$
of ${l(\abs{x}) + q'_\inp(\abs{x})}$~input qubits and ${q'_\out(\abs{x})}$~output qubits
that corresponds to the following algorithm:
\begin{step}
\item
  Measure all the ${l(\abs{x})}$ qubits in the quantum register~$\sfW$
  in computational basis to obtain a classical string~$w$ in $\Binary^{l(\abs{x})}$,
  where $\sfW$ corresponds to the first ${l(\abs{x})}$ qubits of the input qubits.
\item
  Compute from ${(x,w)}$ a description of the quantum circuit~$Q'_{x,w}$.
  Perform the circuit~$Q'_{x,w}$ with qubits in the quantum register~$\sfR$
  as its input qubits,
  where $\sfR$ corresponds to the last ${q'_\inp(\abs{x})}$~qubits of the input qubits of $R_x$.
  Output the qubits corresponding to the output qubits of $Q'_{x,w}$.
\end{step}
We claim that the circuit~$R_x$ satisfies the following two properties:
\begin{itemize}
\item[(i)]
  if $x$ is in $A_\yes$,
  there exists a quantum state~$\sigma$ consisting of ${l(\abs{x}) + q'_\inp(\abs{x})}$~qubits such that
  ${D(R_x(\sigma), (I/2)^{\tensor q'_\out(\abs{x})}) < 2^{-q(\abs{x})}}$,
  and
\item[(ii)]
  if $x$ is in $A_\no$,
  for any quantum state~$\sigma$ consisting of ${l(\abs{x}) + q'_\inp(\abs{x})}$~qubits,
  ${D(R_x(\sigma), (I/2)^{\tensor q'_\out(\abs{x})}) > 1/q'_\out(\abs{x}) }$.
\end{itemize}

In fact, the item~(i) is obvious from the construction of $R_x$.

For the item~(ii), suppose that $x$ is in $A_\no$.
Then,
for any string~$w$ in $\Binary^{l(\abs{x})}$
and any quantum state~$\rho'$ consisting of ${q'_\inp(\abs{x})}$~qubits,
it holds that 
${D(Q'_{x,w}(\rho'), (I/2)^{\tensor q'_\out(\abs{x})}) > 1 - 2^{-q(\abs{x})}}$.
From Lemma~\ref{Lemma: NC00Thm11.10} and the second inequality of Lemma~\ref{Lemma: trace distance and entropy},
it follows that
\[
S(R_x(\sigma))
<
l(\abs{x}) + q'_\out(\abs{x}) - q(\abs{x}) + 2
\leq
q'_\out(\abs{x}) - 2
\leq
\biggl( 1 - \frac{1}{q'_\out(\abs{x})} - 2^{-q'_\out(\abs{x})} \biggr) q'_\out(\abs{x}).
\]
Hence, the first inequality of Lemma~\ref{Lemma: trace distance and entropy} ensures that
${D(R_x(\sigma), (I/2)^{\tensor q'_\out(\abs{x})}) > 1/q'_\out(\abs{x})}$.

Finally, consider the quantum circuit~$R'_x$ that runs ${k'(\abs{x})}$~copies of $R_x$ in parallel 
for a polynomially bounded function~${k' = \bigceil{\frac{2\ln(1/2)}{\ln(1-(1/(q'_\out)^2))}} \leq 2(q'_\out)^2}$. 
Assuming that ${q'_\out(\abs{x})^2 \leq 2^{q(\abs{x})-4}}$
(otherwise $\abs{x}$ is at most some fixed constant,
as $q'_\out$ is a polynomially bounded function and ${q(\abs{x}) \geq \abs{x}}$,
and thus,
it can be checked trivially whether $x$ is in $A_\yes$ or in $A_\no$),
it follows from Lemma~\ref{Lemma: polarization of minimum output trace distance} that
\begin{itemize}
\item[(i)]
  if $x$ is in $A_\yes$,
  there exists a quantum state~$\sigma$ consisting of $q''_\inp(\abs{x})$~qubits
  such that ${D(R'_x(\sigma), (I/2)^{\tensor q''_\out(\abs{x})}) < 1/8}$,
  and
\item[(ii)]
  if $x$ is in $A_\no$,
  for any quantum state~$\sigma$ consisting of $q''_\inp(\abs{x})$~qubits,
 ${D(R'_x(\sigma), (I/2)^{\tensor q''_\out(\abs{x})}) > 1/2 }$,
\end{itemize}
where ${q''_\inp = k'(l + q'_\inp)}$ and ${q''_\out = k'(l + q'_\out)}$.
Therefore,
$R'_x$ is a yes-instance of ${\CITM(1/8, 1/2)}$ if $x$ is in $A_\yes$,
while $R'_x$ is a no-instance of ${\CITM(1/8, 1/2)}$ if $x$ is in $A_\no$.
This implies that
any problem~$A$ in ${\mathrm{c}\qqQAM}$ is reducible to ${\CITM(1/8, 1/2)}$ in polynomial time,
and thus in $\qqQAM$ by Lemma~\ref{Lemma: CITM is in qq-QAM},
which completes the proof.
\end{proof}

\begin{remark}
The proof of Lemma~\ref{Lemma: cqq-QAM is in qq-QAM}
essentially shows the $\qqQAM$-hardness of the \problemfont{Maximum Output Quantum Entropy Approximation} ($\MaxOutQEA$) problem.
On the other hand, the fact that $\MaxOutQEA$ is in $\qqQAM$
is easily proved by an almost straightforward modification
of the arguments in Refs.~\cite{BenSchTaS10ToC, ChaCioKerVad08TCC}
used to show that the \problemfont{Quantum Entropy Approximation} ($\QEA$) problem is in $\NIQSZK$.
Hence, the $\MaxOutQEA$ problem is also $\qqQAM$-complete,
proving Theorem~\ref{Theorem: qq-QAM-completeness of MaxOutQEA}.
A rigorous proof of Theorem~\ref{Theorem: qq-QAM-completeness of MaxOutQEA}
will be presented in the appendix.
\end{remark}

Finally, using Lemma~\ref{Lemma: cqq-QAM is in qq-QAM},
it is proved that ${\mathrm{cc}\qqQAM \subseteq \qqQAM}$.

\begin{lemma}
${\mathrm{cc}\qqQAM \subseteq \qqQAM}$.
\label{Lemma: ccqq-QAM is in qq-QAM}
\end{lemma}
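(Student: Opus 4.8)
The plan is to mirror the even-turn case of Lemma~\ref{Lemma: c...cqq-QAM(m) is in ccqq-QAM}: reduce a problem in $\mathrm{cc}\qqQAM$ to one in $\mathrm{c}\qqQAM$ by fixing the verifier's initial random string, place the latter in $\qqQAM$ via Lemma~\ref{Lemma: cqq-QAM is in qq-QAM}, and then show that the resulting $\qqQAM$ system can be made to tolerate the initial randomness. Concretely, let $A = (A_\yes, A_\no)$ be in $\mathrm{cc}\qqQAM$, and by Lemma~\ref{Lemma: amplification} fix a $\mathrm{ccqq}$-QAM system~$\Pi$ for $A$ with completeness~$1 - 2^{-p}$ and soundness~$2^{-p}$ for a polynomially bounded~$p$ to be chosen, whose first message is a uniformly random string~$r \in \Binary^{l(\abs{x})}$. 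Writing $\Pi^{(-1)}$ for the $\mathrm{cqq}$-QAM system that, on input~$(x,r)$, simulates the last three turns of $\Pi$ with the first message fixed to~$r$, define the promise problem~$B = (B_\yes, B_\no)$ by placing $(x,r)$ in $B_\yes$ (resp. $B_\no$) when the maximum acceptance probability of $\Pi^{(-1)}$ on~$(x,r)$ is at least~$2/3$ (resp. at most~$1/3$). Then $B \in \mathrm{c}\qqQAM$, so $B \in \qqQAM$ by Lemma~\ref{Lemma: cqq-QAM is in qq-QAM}; fix a $\qqQAM$ verifier~$V_B$ for $B$ with completeness~$c_B$ and soundness~$s_B$ whose first message consists of a number of EPR halves depending only on the input length. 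A Markov-inequality argument identical to that of Lemma~\ref{Lemma: c...cqq-QAM(m) is in ccqq-QAM} shows that $x \mapsto (x,r)$ with uniform~$r$ is a randomized many-one reduction from $A$ to $B$ with two-sided error~$\delta = 3 \cdot 2^{-p}$.

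It then remains to absorb the reduction's randomness into a genuine $\qqQAM$ system for $A$. The crucial observation is that the quantum public coin can emulate a classical public coin: if the verifier appends $l(\abs{x})$~extra EPR pairs to the first message of~$V_B$, sends all the halves, and measures its own halves of the extra pairs in the computational basis, then it obtains a uniformly random~$r$ that the prover can recover by measuring the matching halves. The $\qqQAM$ verifier~$V_A$ for $A$ therefore sends these $l(\abs{x})$~sharing pairs together with the EPR halves prescribed by~$V_B$, measures the sharing halves to recover~$r$, and then runs the verification of~$V_B$ on input~$(x,r)$ on the remaining registers. For completeness, when $x \in A_\yes$ at least a $(1-\delta)$-fraction of~$r$ satisfy $(x,r) \in B_\yes$; the honest prover measures the sharing halves to learn~$r$ and then plays the honest $V_B$-prover for~$(x,r)$ on the remaining halves, so $V_A$ accepts with probability at least~$(1-\delta) c_B$.

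For soundness I would appeal to the principle of deferred measurement. Since the prover never touches the verifier's sharing halves and $V_A$'s only later use of them is the final measurement, that measurement may be moved to just after the first message is sent; conditioned on its outcome~$r$, the prover's matching halves collapse to~$\ket{r}$. Thus, although a cheating prover applies a single fixed unitary, its restriction to each~$r$-branch is a legitimate (if suboptimal) $\qqQAM$-prover for~$B$ on~$(x,r)$, whence the conditional acceptance probability is at most the maximum acceptance probability of~$\Pi^{(-1)}$ on~$(x,r)$. Averaging over the uniform~$r$ and using that a $(1-\delta)$-fraction of~$r$ lie in~$B_\no$ when $x \in A_\no$ bounds the overall acceptance probability by~$\delta + s_B$. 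Choosing $p$ large enough that $(1-\delta)c_B$ exceeds $\delta + s_B$ by an inverse-polynomial margin and invoking Lemma~\ref{Lemma: amplification} then places $A$ in~$\qqQAM$.

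The step I expect to be the main obstacle is exactly the completeness argument, because in a two-turn system the prover cannot in general act on the verifier's private randomness~$r$; the resolution is the EPR-sharing trick above, whose accompanying subtlety is to check that the prover's freedom to keep the sharing register coherent cannot help it. This is precisely what the deferred-measurement reduction to a fixed classical~$r$ guarantees, so the same feature that rescues completeness leaves soundness intact.
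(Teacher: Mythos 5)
Your proposal is correct and takes essentially the same route as the paper's proof: both fix the verifier's first classical message to define the promise problem~$B$ in ${\mathrm{c}\qqQAM}$ via $\Pi^{(-1)}$, invoke Lemma~\ref{Lemma: cqq-QAM is in qq-QAM} to place $B$ in $\qqQAM$, and then absorb the random string~$r$ into the first quantum turn by having the verifier send extra EPR halves and measure its retained halves in the computational basis, with the same fraction-based completeness/soundness accounting. Your deferred-measurement argument is simply a more explicit justification of what the paper asserts as an exact simulation (and note the harmless slip where you bound the conditional acceptance probability by the value of~$\Pi^{(-1)}$ on~${(x,r)}$ rather than by the soundness of the $\qqQAM$ verifier~$V_B$, which is what your final bound~${\delta + s_B}$ actually uses).
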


\begin{proof}
Let ${A = (A_\yes, A_\no)}$ be a problem in ${\mathrm{cc}\qqQAM}$.
By Lemma~\ref{Lemma: amplification},
one can assume that $A$ has a $\mathrm{ccqq}$-QAM proof system~$\Pi$
with completeness~${1 - 2^{-8}}$ and soundness~$2^{-8}$.
Let $\Pi^{(-1)}$ be the $\mathrm{cqq}$-QAM proof system
that on input~${(x,r)}$ simulates the last three turns of $\Pi$ 
on input $x$ assuming that the first message in $\Pi$ from the verifier was~$r$.
Let ${B = (B_\yes, B_\no)}$ be the following promise problem in ${\mathrm{c}\qqQAM}$:
\begin{alignat*}{2}
&
B_\yes
&&
=
\set{(x,r)}{\mbox{the maximum acceptance probability in $\Pi^{(-1)}$ on input~${(x,r)}$ is at least $2/3$}},
\\
&
B_\no
&&
=
\set{(x,r)}{\mbox{the maximum acceptance probability in $\Pi^{(-1)}$ on input~${(x,r)}$ is at most $1/3$}}.
\end{alignat*}
Note that,
if $x$ is in $A_\yes$, then ${(x,r)}$ is in $B_\yes$ for at least ${(1 - 3 \cdot 2^{-8})}$-fraction of the choices of $r$, while if $x$ is in $A_\no$,
then ${(x,r)}$ is in $B_\no$ for at least ${(1 - 3 \cdot 2^{-8})}$-fraction of the choices of $r$.
By Lemma~\ref{Lemma: cqq-QAM is in qq-QAM}, it holds that $B$ is in $\qqQAM$.
Thus, there exists a $\mathrm{qq}$-QAM proof system~$\Pi'$ for $B$ such that,
if ${(x,r)}$ is in $B_\yes$,
the maximum acceptance probability in $\Pi'$ is at least $2/3$,
and if ${(x,r)}$ is in $B_\no$,
the maximum acceptance probability in $\Pi'$ is at most $1/3$.
Here, the first turn of $\Pi'$ is a turn for the verifier,
and thus, one can merge the turn for sending $r$ with the first turn of $\Pi'$.
This results in another $\mathrm{qq}$-QAM proof system~$\Pi''$ for $A$
in which at the first turn
the new verifier sends a string ${r \in \Binary^{l(\abs{x})}}$ chosen uniformly at random
in addition to the original first message of the verifier in $\Pi'$ on input~${(x,r)}$,
and then behaves exactly in the same manner
as the verifier in $\Pi'$ on input~${(x,r)}$ in the rest of the protocol.
Notice that sending a random string~$r$ of length~${l(\abs{x})}$ can be exactly simulated by
sending the halves of ${l(\abs{x})}$~EPR pairs and measuring in the computational basis
all the remaining halves of them that the verifier possesses.
If $x$ is in $A_\yes$,
the maximum acceptance probability in this $\Pi''$ is at least
${(1 - 3 \cdot 2^{-8}) \cdot (2/3) > 5/8}$,
while if $x$ is in $A_\no$,
the maximum acceptance probability in $\Pi''$ is at most
${3 \cdot 2^{-8}+(1 - 3\cdot 2^{-8}) \cdot (1/3) < 3/8}$,
which is sufficient for the claim, due to Lemma~\ref{Lemma: amplification}. 
\end{proof}

Now one inclusion of Theorem~\ref{Theorem: c...cqq-QAM(m) = qq-QAM} is immediate
from Lemmas~\ref{Lemma: c...cqq-QAM(m) is in ccqq-QAM}~and~\ref{Lemma: ccqq-QAM is in qq-QAM}, and the other inclusion is trivial,
which completes the proof of Theorem~\ref{Theorem: c...cqq-QAM(m) = qq-QAM}.

Notice that all the proofs of Lemmas~\ref{Lemma: c...cqq-QAM(m) is in ccqq-QAM},~\ref{Lemma: cqq-QAM is in qq-QAM},~and~\ref{Lemma: ccqq-QAM is in qq-QAM}
can be easily modified to preserve the perfect completeness property.
Indeed, the proof of Lemma~\ref{Lemma: c...cqq-QAM(m) is in ccqq-QAM}
can be modified to preserve the perfect completeness property
by taking $B_\yes$ to be the set of ${(x,r)}$'s such that
the maximum acceptance probability in $\Pi^{(-1)}$ on input~${(x,r)}$ is one,
and using Lemma~\ref{Lemma: amplification for perfect completeness case}
instead of Lemma~\ref{Lemma: amplification}.
With a similar modification to the set~$B_\yes$
as well as using Corollary~\ref{Corollary: qq-QAM_1-completeness of CITM(0,b)}
instead of Theorem~\ref{Theorem: qq-QAM-completeness of CITM},
the proof of Lemma~\ref{Lemma: cqq-QAM is in qq-QAM} can be modified to present
a reduction from any problem in ${\mathrm{c}\qqQAM_1}$ to ${\CITM(0,b)}$,
which shows the inclusion~${\mathrm{c}\qqQAM_1 \subseteq \qqQAM_1}$.
Using this inclusion instead of Lemma~\ref{Lemma: cqq-QAM is in qq-QAM}
and again with a similar modification to $B_\yes$
and a replacement of Lemma~\ref{Lemma: amplification} by Lemma~\ref{Lemma: amplification for perfect completeness case},
the proof of Lemma~\ref{Lemma: cqq-QAM is in qq-QAM} can be modified so that
it shows the inclusion~${\mathrm{cc}\qqQAM_1 \subseteq \qqQAM_1}$.
Hence, the following corollary holds.

\begin{corollary}
For any constant~${m \geq 2}$,
${
\mathrm{c \cdots c}\qqQAM_1(m) = \qqQAM_1
}$.
\label{Corollary: c...cqq-QAM_1(m) = qq-QAM_1}
\end{corollary}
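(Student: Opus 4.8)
The plan is to revisit the three-part proof of Theorem~\ref{Theorem: c...cqq-QAM(m) = qq-QAM} and to check that each of the inclusions furnished by Lemmas~\ref{Lemma: c...cqq-QAM(m) is in ccqq-QAM},~\ref{Lemma: cqq-QAM is in qq-QAM},~and~\ref{Lemma: ccqq-QAM is in qq-QAM} can be carried out without introducing any completeness error. Since the inclusion~${\qqQAM_1 \subseteq \mathrm{c \cdots c}\qqQAM_1(m)}$ is trivial (the verifier can simply ignore the prepended classical turns), it suffices to establish the chain~${\mathrm{c \cdots c}\qqQAM_1(m) \subseteq \mathrm{cc}\qqQAM_1 \subseteq \qqQAM_1}$, exactly paralleling the two-sided bounded-error argument. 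Throughout, I would replace every appeal to Lemma~\ref{Lemma: amplification} by its perfect-completeness counterpart, Lemma~\ref{Lemma: amplification for perfect completeness case}, which reduces only the soundness error and never degrades completeness.

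The observation that makes the perfect-completeness case go through cleanly is that, in each randomized reduction, the ``yes-side'' fraction argument is upgraded to a universal statement. In the Babai-style order-switching step for odd~$m$ (the odd case of Lemma~\ref{Lemma: c...cqq-QAM(m) is in ccqq-QAM}), the lower bound of Claim~\ref{claim1} on the maximum acceptance probability~$p'_x$ of the reduced system becomes exactly~$1$ whenever the original~$p_x$ equals~$1$, so perfect completeness is retained automatically. For the even-$m$ reduction and for the~${\mathrm{cc} \to \mathrm{c}}$ step of Lemma~\ref{Lemma: ccqq-QAM is in qq-QAM}, I would instead define the intermediate promise problem by taking~$B_\yes$ to consist of those~${(x,r)}$ whose conditional maximum acceptance probability is \emph{exactly one}. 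The point is that if~${x \in A_\yes}$ and the original system is perfectly complete, then the maximum acceptance probability averaged over the uniformly random verifier message~$r$ equals one; as each conditional probability is at most one, every one of them must equal one, so~${(x,r) \in B_\yes}$ for \emph{all}~$r$. Hence the merged system accepts yes-instances with probability exactly one, while the no-instance analysis is identical to the two-sided case and the soundness error is then driven down by Lemma~\ref{Lemma: amplification for perfect completeness case}.

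For the middle inclusion~${\mathrm{c}\qqQAM_1 \subseteq \qqQAM_1}$, I would replay the proof of Lemma~\ref{Lemma: cqq-QAM is in qq-QAM}, again defining~$B_\yes$ via exact acceptance probability one, so that on yes-instances some first classical message~$w$ witnesses~${(x,w) \in B_\yes}$, matching the existential quantifier of the target complete problem. The crucial quantitative point is that when the conditional maximum acceptance probability equals one, the hardness reduction underlying Lemma~\ref{Lemma: qq-QAM-hardness of CITM} produces a circuit whose image can be made \emph{exactly} the totally mixed state, i.e.\ the completeness parameter~$a$ collapses to~$0$. Thus the reduction lands in~${\CITM(0,b)}$, which is $\qqQAM_1$-complete by Corollary~\ref{Corollary: qq-QAM_1-completeness of CITM(0,b)} rather than merely $\qqQAM$-complete via Theorem~\ref{Theorem: qq-QAM-completeness of CITM}; the entropy-based soundness analysis is unchanged. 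Feeding this inclusion back into the~${\mathrm{cc} \to \mathrm{c}}$ reduction then yields~${\mathrm{cc}\qqQAM_1 \subseteq \qqQAM_1}$ and closes the chain.

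The main obstacle is bookkeeping rather than any new idea: one must verify that each intermediate system accepts yes-instances with probability exactly one after the turns are merged and after every parallel-repetition amplification. This rests on the two facts highlighted above---that an average of acceptance probabilities each bounded by one equals one only if every term does, and that a perfectly accepting conditional system reduces to~${\CITM(0,b)}$ with the zero completeness parameter---together with the fact that Lemma~\ref{Lemma: amplification for perfect completeness case} touches only the soundness side. Since no step ever requires a completeness parameter strictly below one, perfect completeness propagates through the entire argument, giving~${\mathrm{c \cdots c}\qqQAM_1(m) = \qqQAM_1}$ for every constant~${m \geq 2}$.
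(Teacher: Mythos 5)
Your proposal is correct and follows essentially the same route as the paper's own argument: the paper likewise establishes the chain ${\mathrm{c \cdots c}\qqQAM_1(m) \subseteq \mathrm{cc}\qqQAM_1 \subseteq \qqQAM_1}$ by redefining $B_\yes$ as the set of pairs with conditional maximum acceptance probability exactly one, substituting Lemma~\ref{Lemma: amplification for perfect completeness case} for Lemma~\ref{Lemma: amplification}, and routing the middle inclusion through ${\CITM(0,b)}$ via Corollary~\ref{Corollary: qq-QAM_1-completeness of CITM(0,b)} instead of Theorem~\ref{Theorem: qq-QAM-completeness of CITM}. Your added observations (the averaging argument showing every conditional probability equals one, and that the lower bound of Claim~\ref{claim1} becomes exactly one when ${p_x = 1}$) are precisely the facts the paper leaves implicit.
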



\section{$\boldsymbol{\QAM}$ versus One-Sided Error $\boldsymbol{\qqQAM}$}
\label{Section: QAM is in qq-QAM_1}

This section shows that $\mathrm{qq}$-QAM proof systems of perfect-completeness
are already as powerful as the standard QAM proof systems of two-sided bounded error (Theorem~\ref{Theorem: QAM is in qq-QAM_1}).
As mentioned at the end of Section~\ref{Section: c...cqq-QAM(m) = qq-QAM},
the collapse theorem for $\qqQAM$ holds even for the perfect-completeness variants. 
In particular, the inclusion~%
${
\mathrm{cc}\qqQAM_1 \subseteq \qqQAM_1
}$
holds.
Hence, for the proof of Theorem~\ref{Theorem: QAM is in qq-QAM_1},
it suffices to show that any problem in $\cqQAM$ (${= \QAM}$)
is necessarily in the class~${\mathrm{cc}\qqQAM_1}$.
As mentioned earlier, 
this can be shown by combining the classical technique
in Ref.~\cite{Cai12LectureNotes} for proving ${\AM = \AM_1}$,
which originates in the proof of ${\BPP \subseteq \Sigma_2^p}$ due to Lautemann~\cite{Lau83IPL},
and the recent result that sharing a constant number of EPR pairs can make
QMA proofs perfectly complete~\cite{KobLeGNis13ITCS}.

Intuitively, with two classical turns of communications,
the classical technique in Ref.~\cite{Cai12LectureNotes}
can be used to generate polynomially many instances of a (promise) QMA problem
such that all these instances are QMA yes-instances if the input was a yes-instance,
while at least one of these instances is a QMA no-instance with high probability
if the input was a no-instance
(some of the QMA instances may violate the promise if the input was a no-instance,
but this does not matter,
as the important point is that at least one instance is a no-instance in this case).
Now one makes use of 
the $\textEPRQMA{\const}_1$ proof system in Ref.~\cite{KobLeGNis13ITCS}
for each QMA instance,
by running polynomially many attempts of such a system in parallel
to see that none of them results in rejection. 
The resulting proof system is thus of $\mathrm{ccqq}$-QAM type,
as $\textEPRQMA{\const}_1$ proof systems are special cases of $\mathrm{qq}$-QAM proof systems.
The perfect completeness of this proof system follows from the fact
that all the QMA instances generated from an input of yes-instance are QMA yes-instances,
and all of them are accepted without error in the attempts of the $\textEPRQMA{\const}_1$ system
due to the perfect completeness property of the system.
The soundness of this proof system follows from the fact
that at least one QMA instance generated from an input of no-instance is a QMA no-instance with high probability,
for which the $\textEPRQMA{\const}_1$ proof system results in rejection with reasonably high probability,
due to the soundness property of it.

The rigorous proof
will use the following notion of fat and thin subsets of $\Binary^l$.
A subset~$S$ of $\Binary^l$ is \emph{fat}
if ${\frac{\abs{S}}{2^l} \geq 1 - \frac{1}{l}}$,
and is \emph{thin}
if ${\frac{\abs{S}}{2^l} \leq \frac{1}{l}}$.
For any ${S \subseteq \Binary^l}$ and ${r \in \Binary^l}$,
let ${S \xor r = \set{x \xor r}{x \in S}}$,
where for any $x$ and $y$ in $\Binary^l$,
${x \xor y}$ denotes a string in $\Binary^l$
obtained by taking the bitwise exclusive-OR of $x$ and $y$.
The following property holds (see Lemma~5.15 of Ref.~\cite{Cai12LectureNotes}). 

\begin{lemma}\label{Cai12Lem5.15}
For any positive integer~$l$ and any subset~$S$ of $\Binary^l$,
\begin{itemize}
\item[(i)]
  if $S$ is fat, for any positive integers~$k$~and~$l$ such that ${k < l}$,
  ${\Pr_{r_1, \ldots, r_k \in \Binary^l} \bigl[ \bigcap_{j=1}^k (S \xor r_j) \neq \emptyset \bigr] = 1}$, and
\item[(ii)]
  if $S$ is thin, for any positive integer~$k$,
  ${\Pr_{r_1, \ldots, r_k \in \Binary^l} \bigl[ \bigcap_{j=1}^k (S \xor r_j) = \emptyset \bigr] \geq 1 - \frac{2^l}{l^k}}$.
\end{itemize}
\end{lemma}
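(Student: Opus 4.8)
The plan is to prove the two parts by elementary counting arguments, handling the fat case (i) deterministically and the thin case (ii) via a union bound combined with independence of the $r_j$.

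For part (i), I would pass to complements. Writing $\overline{S} = \Binary^l \setminus S$, the key observation is that $\bigcap_{j=1}^k (S \xor r_j) = \emptyset$ holds if and only if the translated complements cover the whole cube, i.e. $\bigcup_{j=1}^k (\overline{S} \xor r_j) = \Binary^l$. Since $S$ is fat, each translate satisfies $\abs{\overline{S} \xor r_j} = 2^l - \abs{S} \leq \frac{2^l}{l}$, so the union has size at most $k \cdot \frac{2^l}{l}$. For $k < l$ this is strictly smaller than $2^l$, so the union can never equal $\Binary^l$, and hence the intersection is nonempty for \emph{every} choice of $r_1, \ldots, r_k$. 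The probability is therefore exactly $1$, and in fact no randomness is used in this part.

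For part (ii), I would bound the probability of the complementary event. The intersection $\bigcap_{j=1}^k (S \xor r_j)$ is nonempty precisely when some $x \in \Binary^l$ belongs to all the sets $S \xor r_j$, equivalently when $x \xor r_j \in S$ for every $j$. For a fixed $x$, since $r_j$ is uniform and the $r_j$ are independent, each $x \xor r_j$ is uniform over $\Binary^l$, so $\Pr[x \xor r_j \in S] = \frac{\abs{S}}{2^l} \leq \frac{1}{l}$, and the $k$ events are independent across $j$; hence $\Pr[\forall j,\ x \in S \xor r_j] \leq (1/l)^k$. Applying the union bound over the $2^l$ values of $x$ gives $\Pr[\bigcap_{j=1}^k (S \xor r_j) \neq \emptyset] \leq 2^l / l^k$, and taking the complement yields the stated bound.

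I do not anticipate a genuine obstacle, as both arguments are short and rely only on the translation-invariance of the uniform distribution (so that $x \xor r_j$ is uniform whenever $r_j$ is) together with the independence of the $r_j$. The only points requiring mild care are the complementation in part (i) --- checking that emptiness of the intersection corresponds exactly to the translates of $\overline{S}$ covering $\Binary^l$ --- and, in part (ii), confirming that the events ``$x \xor r_j \in S$'' are genuinely independent over $j$, which is immediate from the independence of the underlying $r_j$.
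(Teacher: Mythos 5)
Your proof is correct: part~(i) is exactly the deterministic counting argument (the $k<l$ translates of $\overline{S}$, each of size at most $2^l/l$, cannot cover $\Binary^l$, so the intersection is nonempty for \emph{every} choice of the $r_j$, giving probability exactly $1$), and part~(ii) is the standard union bound over $x \in \Binary^l$ combined with the independence and translation-invariance of the $r_j$. The paper itself gives no proof of this lemma --- it simply cites Lemma~5.15 of Cai's lecture notes --- and your argument is essentially the same standard Lautemann-style proof found in that reference, so there is nothing to reconcile.
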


Using this lemma, Theorem~\ref{Theorem: QAM is in qq-QAM_1} is proved as follows.

\begin{proof}[Proof of Theorem~\ref{Theorem: QAM is in qq-QAM_1}] 
Let ${A = (A_\yes, A_\no)}$ be a problem in $\cqQAM$ (${= \QAM}$).
By Lemma~\ref{Lemma: amplification}, $A$ has a $\mathrm{cq}$-QAM proof system $\Pi$
with completeness~${1 - \frac{1}{3l}}$ and soundness~$\frac{1}{3l}$,
where $l$ is the polynomially bounded function
that specifies the length of the random string sent by the verifier at the first turn
(such a proof system indeed exists, as one can achieve exponentially small completeness and soundness errors if one likes,
while the message length remain polynomially bounded even in such cases).
Let $V$ denote the verifier in this system~$\Pi$.
Without loss of generality, one can assume that ${l \geq 4}$,
and $l$ also specifies the number of qubits $V$ would receive at the last turn in $\Pi$.
Consider the QMA proof system~$\Pi^\QMA$
that on input~${(x,r)}$ simulates the last turn of $\Pi$ on $x$
assuming that the first message in $\Pi$ from the verifier was $r$
(i.e., on input~${(x,r)}$, the verifier in $\Pi^\QMA$ first receives a quantum witness of ${l(\abs{x})}$~qubits,
and then simulates the final verification procedure of $V$ in $\Pi$ on input~$x$ conditioned that $V$ sent $r$ as his/her  
question at the first turn).
Let ${B = (B_\yes, B_\no)}$ be the following promise problem in $\QMA$:
\begin{alignat*}{2}
&
B_\yes
&&
=
\set{(x,r)}{\mbox{the maximum acceptance probability in $\Pi^\QMA$ on input ${(x,r)}$ is at least $2/3$}},
\\
&
B_\no
&&
=
\set{(x,r)}{\mbox{the maximum acceptance probability in $\Pi^\QMA$ on input ${(x,r)}$ is at most $1/3$}}.
\end{alignat*}
Note that,
if $x$ is in $A_\yes$,
then ${(x,r)}$ is in $B_\yes$ for at least ${(1 - \frac{1}{l(\abs{x})})}$-fraction of the choices of ${r \in \Binary^{l(\abs{x})}}$,
while if $x$ is in $A_\no$,
then ${(x,r)}$ is in $B_\no$ for at least ${(1 - \frac{1}{l(\abs{x})})}$-fraction of the choices of ${r \in \Binary^{l(\abs{x})}}$.

Consider another $\mathrm{cq}$-QAM proof system~$\Pi'$ specified by the following protocol of the verifier on input~$x$:

\begin{step}
\item
  Send ${(l(\abs{x}) - 1)}$~strings~${r_1, \ldots, r_{l(\abs{x})-1}}$,
  each chosen uniformly at random from $\Binary^{l(\abs{x})}$.
\item
  Upon receiving a string~$r$ in $\Binary^{l(\abs{x})}$
  as well as ${(l(\abs{x}) - 1)}$~quantum registers~${\sfM_1, \ldots, \sfM_{l(\abs{x})-1}}$ of ${l(\abs{x})}$~qubits, 
  simulate the final verification procedure of $V$ in the original system~$\Pi$ on input~$x$
  with the question~${r \oplus r_j}$ and the quantum state in $\sfM_j$
  for each $j$ in ${\{1, \ldots, l(\abs{x})-1 \}}$
  (i.e., for each $j$,
  simulate the QMA proof system~$\Pi^\QMA$ on instance~${(x, r \xor r_j)}$
  with the quantum state in $\sfM_j$ as its quantum witness).
  Accept if and only if all the ${(l(\abs{x}) - 1)}$~simulations result in the acceptance.
\end{step}
The key point is that, if $x$ is in $A_\yes$,
for any choice of ${(r_1, \ldots, r_{l(\abs{x})-1})}$, 
there always exists an $r$ in $\Binary^{l(\abs{x})}$ such that
the pair~${(x, r \xor r_j)}$ is in $B_\yes$
for \emph{all} $j$ in ${\{1, \ldots, l(\abs{x})-1 \}}$.
Indeed, if $x$ is in $A_\yes$, the set~$S_x^\yes$ defined by
\[
S_x^\yes
=
\set{r \in \Binary^{l(\abs{x})}}{(x,r) \in B_\yes}
\]
is fat, and hence by Lemma~\ref{Cai12Lem5.15},
for any ${r_1, \ldots, r_{l(\abs{x})-1}}$ in ${\Binary^{l(\abs{x})}}$,
there exists an $r$ in $\Binary^{l(\abs{x})}$
such that, for every $j$ in ${\{1, \ldots, l(\abs{x}) - 1\}}$,
the pair~${(x, r \xor r_j)}$ is in $B_\yes$. 

If $x$ is in $A_\no$, on the other hand,
it happens with very small probability that there exists an $r$ such that,
for all $j$, the QMA instance~${(x, r \xor r_j)}$ has
maximum acceptance probability greater than $1/3$
(here one must be a bit careful, because there may be QMA instances breaking the promise,
which is why the condition ``greater than $1/3$'' is used instead of ``at least $2/3$'').
This means that, if $x$ is in $A_\no$,
with very high probability over the choices of ${(r_1, \ldots, r_{l(\abs{x})-1})}$,
for any $r$ given, there exists at least one $j$ such that ${(x, r \xor r_j)}$ is in $B_\no$.
Indeed, if $x$ is in $A_\no$, the set~$S_x^{\neg\no}$ defined by
\[
S_x^{\neg\no}
=
\set{r \in \Binary^{l(\abs{x})}}{(x,r) \not\in B_\no}
\]
is thin, and hence by Lemma~\ref{Cai12Lem5.15},
the probability over the choices of ${(r_1, \ldots, r_{l(\abs{x})-1})}$ that
for every ${r \in \Binary^{l(\abs{x})}}$
there exists an index~$j$ in ${\{1, \ldots, l(\abs{x}) - 1\}}$ such that
the pair~${(x, r \xor r_j)}$ is in $B_\no$
is at least ${1 - \frac{2^{l(\abs{x})}}{l(\abs{x})^{l(\abs{x})-1}} \geq 1 - 2 ^{-l(\abs{x}) + 2}}$.

Finally, consider the following $\mathrm{ccqq}$-QAM proof system~$\Pi''$
that plugs in the idea of Ref.~\cite{KobLeGNis13ITCS} into each instance 
${(x, r \xor r_j)}$ of the (promise) QMA problem:
The verifier basically simulates $\Pi'$,
except that now, instead of ${\Pi}^\QMA$, 
he/she performs the $\textEPRQMA{\const}_1$ protocol
(Fig.~6 in Ref.~\cite{KobLeGNis13ITCS})
for each QMA instances.
For this, in addition to $r$ and ${\sfM_1, \ldots, \sfM_{l(\abs{x})-1}}$,
the verifier receives polynomially many single-qubit registers,
assuming that the verifier and prover share that polynomially many number of EPR pairs beforehand
-- these EPR pairs can be shared by adding a quantum turn for the verifier
after having received the response~$r$ from the prover.
Here note that one needs only a constant number of EPR pairs
for each instance~${(x, r \xor r_j)}$,
but one needs them for all ${(l(\abs{x}) - 1)}$~instances ${(x, r \xor r_j)}$,
which results in polynomially many EPR pairs in total.
Figure~\ref{Figure: verifier's ccqq-QAM_1 protocol for QAM}
presents a more precise description of the protocol for the verifier in the $\mathrm{ccqq}$-QAM proof system~$\Pi''$.

\begin{figure}[t!]
\begin{protocol*}{Verifier's $\boldsymbol{\mathrm{ccqq}}$-$\boldsymbol{\textrm{QAM}_1}$ Protocol for $\boldsymbol{\QAM}$}
\begin{step}
\item
  Send ${(l(\abs{x}) - 1)}$~strings~${r_1, \ldots, r_{l(\abs{x})-1}}$,
  each chosen uniformly at random from $\Binary^{l(\abs{x})}$,
  to the prover.
\item
  Receive a string~$r$ in $\Binary^{l(\abs{x})}$ from the prover.
  Prepare ${N (l(\abs{x}) - 1)}$~pairs of single-qubit registers~${(\sfS_{j,k}, \sfS'_{j,k})}$
  for each $j$ in ${\{1, \ldots, l(\abs{x})-1\}}$ and $k$ in ${\{1, \ldots, N\}}$,
  and generate an EPR pair in each of ${(\sfS_{j,k}, \sfS'_{j,k})}$, 
  where $N$ is the constant such that
  $N$~shared EPR pairs can make any QMA proof system perfectly complete
  in the construction of Ref.~\cite{KobLeGNis13ITCS}.
  Send each $\sfS'_{j,k}$ to the prover.
\item
  Receive $\sfM_j$ and ${\sfS'_{j,1}, \ldots, \sfS'_{j,N}}$ from the prover,
  for each $j$ in ${\{1, \ldots, l(\abs{x})-1\}}$. 
  Perform the verification procedure in the construction of Ref.~\cite{KobLeGNis13ITCS}
  for each QMA instance~${(x, r \xor r_j)}$, ${j \in \{1, \ldots, l(\abs{x})-1 \}}$,
  using $\sfM_j$ and ${(\sfS_{j,1}, \sfS'_{j,1}), \ldots, (\sfS_{j,N}, \sfS'_{j,N})}$.
  Accept if all the verification procedures result in acceptance,
  and reject otherwise.
\end{step}
\end{protocol*}
\caption{Verifier's $\mathrm{ccqq}$-QAM protocol for achieving perfect completeness for the problems in $\QAM$.}
\label{Figure: verifier's ccqq-QAM_1 protocol for QAM}
\end{figure}

This proves that $A$ is in ${\mathrm{cc}\qqQAM_1}$:
If $x$ is in $A_\yes$, for every choice of ${(r_1, \ldots, r_{l(\abs{x})-1})}$,
the verifier of $\Pi''$ always accepts
due to the perfect completeness of the $\textEPRQMA{\const}_1$ proof system.
If $x$ is in $A_\no$, the verifier can reject with reasonably high probability,
since it is guaranteed by the soundness of the $\textEPRQMA{\const}_1$ proof system
that the verifier of $\Pi''$ can detect a no-instance~${(x, r \xor r_j)}$ of the QMA problem with reasonably high probability,
and at least one such no-instance exists
with probability at least ${1 - 2^{-l(\abs{x})+2}}$
over the choices of ${(r_1, \ldots, r_{l(\abs{x})-1})}$. 
As Corollary~\ref{Corollary: c...cqq-QAM_1(m) = qq-QAM_1}
in particular ensures that ${\mathrm{cc}\qqQAM_1 \subseteq \qqQAM_1}$,
it follows that $A$ is in $\qqQAM_1$, as claimed.
\end{proof}

The fact that perfect completeness is achievable in $\ccQAM$
(Theorem~\ref{Theorem: ccQAM = ccQAM_1})
can be proved in a similar fashion,
except that now one uses the fact ${\MQA = \MQA_1}$
(a.k.a., ${\QCMA = \QCMA_1}$)
that any classical-witness QMA proofs
can be made perfectly complete shown in Ref.~\cite{JorKobNagNis12QIC}
instead of the inclusion~${\QMA \subseteq \EPRQMA{\const}_1}$.
Each QMA instance in the argument above
are replaced by an MQA (QCMA) instance in this case.
Notice that no additional turn is necessary in this case,
and the resulting proof system corresponding to $\Pi''$ 
is immediately a $\textrm{cc}$-QAM proof system of perfect completeness.


\section{Collapse Theorem for General Quantum Arthur-Merlin Proof Systems}
\label{Section: collapse theorem for general QAM}

Before the proof of Theorem \ref{Theorem: complete classifications}, first observe the simple fact that
one can always replace classical turns by quantum ones without diminishing the verification power,
which can be shown as in the proof of Lemma~\ref{Lemma: amplification}
by letting the verifier simulate classical turns by quantum turns via CNOT applications.

\begin{proposition}
\label{Proposition: c->q}
For any constant~$m$ in $\Natural$, any $j$ in ${\{1, \ldots, m\}}$, and any ${t_1, \ldots, t_m}$ in ${\{\rmc, \rmq\}}$,
\[
\GQAM{t_m \cdots t_{j+1} \, t_j \, t_{j-1} \cdots t_1}(m)
\subseteq
\GQAM{t_m\cdots t_{j+1} \, \rmq \, t_{j-1} \cdots t_1}(m).
\]  
\end{proposition}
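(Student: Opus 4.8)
The plan is to reduce to the single nontrivial case and then split according to whose turn is being modified. If $t_j = \rmq$ the two classes literally coincide and there is nothing to do, so assume $t_j = \rmc$. Given any verifier witnessing membership in $\GQAM{t_m \cdots t_{j+1}\,\rmc\,t_{j-1}\cdots t_1}(m, c, s)$, I will build a verifier for a $t_m \cdots t_{j+1}\,\rmq\,t_{j-1}\cdots t_1$-QAM system with the same completeness $c$ and soundness $s$, modifying only the $(m-j+1)$st turn together with the final verification and leaving every other turn verbatim. Since the parity of $j$ decides whether the $(m-j+1)$st turn is a prover message (odd $j$) or a verifier message (even $j$), the construction breaks into these two cases.

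For odd $j$ the $(m-j+1)$st message is a classical string from the prover. In the new system I let this turn carry a quantum register $\sfM_{m-j+1}$, but the new verifier, immediately on receipt, applies $\CNOT$ gates copying each qubit of $\sfM_{m-j+1}$ onto a fresh ancilla that is thereafter never touched, and then runs the original verifier unchanged. This is precisely the copying device already used in the proof of Lemma~\ref{Lemma: amplification}: tracing out the untouched copy dephases $\sfM_{m-j+1}$ in the computational basis, so the effective message is classical. Completeness is immediate, since the honest prover simply sends the old classical string. For soundness, after the copy the verifier's acceptance probability is the average, over outcomes $y \in \Binary^{q_\sfM(\abs{x})}$ of that (implicit) measurement, of the original system's acceptance probability on classical message $y$, each of which is at most $s$; hence the new soundness is at most $s$.

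For even $j$ the $(m-j+1)$st message is a uniformly random $r$ sent by the verifier. In the new system the verifier instead generates $\ket{\Phi^+}^{\tensor q_\sfM(\abs{x})}$, sends the second halves to the prover, and keeps the first halves in $\sfM_{m-j+1}$; in the final verification it first measures these kept halves in the computational basis to obtain a string $r$ and then executes the original verifier as if it had flipped $r$. The key step is the principle of deferred measurement: because the verifier never acts on the kept halves between their creation and the final measurement, measuring them at the end is equivalent to measuring them the instant after they are sent. Measuring them then produces a uniform $r$ and collapses the halves held by the prover to the definite computational-basis state $\ket{r}$, so the prover effectively receives exactly the classical uniform coin $r$ of the original turn. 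Consequently the new system has, against every prover, the same acceptance probability as the original: completeness holds because the honest prover recovers $r$ by measuring its received halves and then replays the old strategy, and soundness holds because the prover is handed a definite $\ket{r}$ and therefore has no coherence to exploit.

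The main obstacle is exactly this even-$j$ case, where one must rule out that a cheating prover gains power by holding the received EPR halves in superposition rather than measuring them; the deferred-measurement equivalence above is what collapses the quantum public coin precisely onto the classical public coin, so that no such advantage exists. In both cases the construction preserves completeness $c$ and soundness $s$ exactly, turning a $\GQAM{t_m \cdots t_1}(m, c, s)$ system into a $\GQAM{t_m \cdots t_{j+1}\,\rmq\,t_{j-1}\cdots t_1}(m, c, s)$ system; taking $c = 1 - \varepsilon$ and $s = \varepsilon$ for a negligible $\varepsilon$ then yields the claimed inclusion of classes.
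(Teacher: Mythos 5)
Your proposal is correct and takes essentially the same route as the paper, which disposes of this proposition in a single sentence by invoking the CNOT-copying device from the proof of Lemma~\ref{Lemma: amplification}: your dephasing of the prover's quantum message via an untouched ancilla copy is exactly that device, and your deferred-measurement treatment of the verifier's kept EPR halves is the same observation (EPR-pair creation being a Hadamard followed by a CNOT copy) spelled out for coin turns. The only cosmetic difference is that you make the two parity cases and the exact preservation of the parameters~$(c,s)$ explicit, which the paper leaves implicit.
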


As generalized quantum Arthur-Merlin proofs are nothing but a special case of general quantum interactive proofs,
it is obvious that
for any constant~$m$ and any ${t_1, \ldots, t_m}$ in ${\{\rmc, \rmq\}}$,
${\GQAM{t_m \cdots t_1}(m)}$ is contained in ${\QIP = \PSPACE}$~\cite{JaiJiUpaWat11JACM}.
As mentioned in Section~\ref{Section: introduction},
Marriott~and~Watrous~\cite{MarWat05CC} proved that ${\mathrm{q}\cqQAM}$ (${= \QMAM}$) already hits the ceiling, i.e., coincides with $\QIP$.
Next lemma (Lemma~\ref{Lemma: qcc-QAM = QMAM}) states 
that one can slightly improve this
and even the third message is not necessary to be quantum 
to have the whole power of general quantum interactive proofs. 
The proof is based on a simulation of the original $\mathrm{qcq}$-QAM system 
by a $\mathrm{qcc}$-QAM system using quantum teleportation.

\begin{lemma}
${\mathrm{q}\cqQAM \subseteq \mathrm{q}\ccQAM}$.
\label{Lemma: qcc-QAM = QMAM}
\end{lemma}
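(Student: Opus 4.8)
The plan is to simulate the quantum third message of the $\mathrm{qcq}$-QAM system by quantum teleportation, so that only classical information is sent at the last turn. Suppose $A$ has a $\mathrm{qcq}$-QAM proof system with verifier~$V$, in which the prover's first (quantum) message occupies a register~$\sfM_3$, the verifier replies with a uniformly random string stored in~$\sfM_2$, the prover's last (quantum) message occupies a $q_1$-qubit register~$\sfM_1$, and the verifier then applies its final circuit~$V_x$ to ${(\sfM_3, \sfM_2, \sfM_1)}$. I would construct a $\mathrm{qcc}$-QAM system whose verifier, at the first (quantum) turn, receives $\sfM_3$ together with a $q_1$-qubit register~$\sfE$ supposed to hold the verifier's halves of $q_1$~shared EPR pairs; after sending the random string~$\sfM_2$, the new verifier receives a $2q_1$-bit classical string~$b$ at the last turn, applies the Pauli correction~$P_b$ prescribed by teleportation to~$\sfE$ (renaming it~$\sfM_1$), and then runs the original $V_x$ on ${(\sfM_3, \sfM_2, \sfM_1)}$. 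Since the first message has $\abs{\sfM_3} + q_1$~qubits and the last message $2q_1$~bits, both polynomially bounded, this is a genuine $\mathrm{qcc}$-QAM system. To obtain the inclusion it suffices to show that, on every input~$x$, the maximum acceptance probabilities of the two systems coincide, which I would establish by exhibiting in each direction a prover achieving the same acceptance probability.

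For the direction showing that the $\mathrm{qcc}$ value is at least the $\mathrm{qcq}$ value (honest simulation), given any $\mathrm{qcq}$ prover I would build a $\mathrm{qcc}$ prover that at the first turn sends $\sfM_3$ exactly as before and in addition creates $q_1$~genuine EPR pairs~${\ket{\Phi^+}^{\tensor q_1}}$, sending one half of each in~$\sfE$ while keeping the other halves in a private register~$\sfE_P$. After receiving~$\sfM_2$, this prover internally runs the original prover's last-turn operation to produce~$\sfM_1$, performs Bell measurements on the $q_1$~pairs ${(\sfM_1, \sfE_P)}$ to obtain~$b$, and sends~$b$. Because teleportation is exact and transmits $\sfM_1$ faithfully even when it is entangled with~$\sfM_3$ and with the prover's workspace, after the verifier's correction~$P_b$ the register~$\sfE$ holds precisely the state that $\sfM_1$ would have held in the original system; hence $V_x$ acts on an identical state and the acceptance probability is unchanged.

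The reverse direction is the point requiring care, and I expect it to be the main obstacle, since a cheating $\mathrm{qcc}$ prover need not send genuine EPR halves in~$\sfE$: it may send an arbitrary joint state on ${(\sfM_3, \sfE, \sfP)}$, send ${(\sfM_3,\sfE)}$ to the verifier while retaining a private register~$\sfP$, and after seeing~$\sfM_2$ output a classical string~$b$ obtained by measuring~$\sfP$. The key observation is that in the $\mathrm{qcc}$ system the verifier \emph{itself} holds~$\sfE$ and applies~$P_b$ to it, so this entire strategy can be mirrored by a $\mathrm{qcq}$ prover. I would have the $\mathrm{qcq}$ prover prepare the same joint state on ${(\sfM_3, \sfE, \sfP)}$, send only~$\sfM_3$ at the first turn while keeping $\sfE$ and $\sfP$ private, and then, after receiving~$\sfM_2$, perform the same measurement of~$\sfP$ to obtain~$b$, apply the correction~$P_b$ to the retained register~$\sfE$, and send~$\sfE$ (now a $q_1$-qubit state) as its quantum last message~$\sfM_1$. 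The verifier then runs $V_x$ on exactly the same state ${(\sfM_3, \sfM_2, P_b \sfE)}$, with the same distribution over~$b$, so the acceptance probability is again identical.

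Combining the two directions yields equality of the maximum acceptance probabilities on every input, so the constructed $\mathrm{qcc}$-QAM system decides $A$ with the same completeness and soundness parameters as the given $\mathrm{qcq}$-QAM system, giving ${A \in \mathrm{q}\ccQAM}$ and hence ${\mathrm{q}\cqQAM \subseteq \mathrm{q}\ccQAM}$. The only conceptual subtlety, resolved above, is that teleportation here is used asymmetrically: honestly in one direction to move a quantum register through a classical channel, and as a mere relocation of the Pauli correction (from verifier to prover) in the other, which is precisely what preserves soundness against provers that send bogus EPR halves.
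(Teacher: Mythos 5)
Your proposal is correct and follows essentially the same route as the paper's proof: a teleportation-based simulation in which the new verifier holds the purported EPR halves and applies the Pauli corrections itself, with completeness from honest teleportation and soundness from mirroring any cheating $\mathrm{qcc}$ prover by a $\mathrm{qcq}$ prover that performs the measurement and the Pauli correction on its own retained register before sending it. The paper's argument is the same construction with different register names, so there is nothing to add.
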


\begin{proof}
Let ${A = (A_\yes, A_\no)}$ be a problem in ${\mathrm{q}\cqQAM}$,
meaning that $A$ has a $\mathrm{qcq}$-QAM proof system~$\Pi$
with completeness~$2/3$ and soundness~$1/3$
that is specified by the protocol of the verifier of the following form for every input~$x$:
\begin{step}
\item
  Receive a quantum register~$\sfM_1$ from the prover,
  and then send a random string~$r$ to the prover.
\item
  Receive a quantum register~$\sfM_2$ from the prover.
  Prepare a private quantum register~$\sfV$,
  and perform the final verification procedure over ${(\sfM_1, \sfM_2, \sfV)}$.
\end{step}
Let $l$ be the polynomially bounded function that specifies the number of qubits in $\sfM_2$.
Consider the teleportation-based simulation of $\Pi$ by the $\mathrm{qcc}$-QAM proof system~$\widetilde{\Pi}$
that is specified by the protocol of the verifier of the following form for every input~$x$:
\begin{step}
\item
  Receive a quantum register~$\sfS_1$ of ${l(\abs{x})}$~qubits,
  in addition to the quantum register~$\sfM_1$,
  from the prover.
  Send a random string~$r$ to the prover as would be done in $\Pi$.
\item
  Receive a binary string~$b$ of length~${2 l(\abs{x})}$ from the prover.
  Apply ${X^{b_{j,1}}Z^{b_{j,2}}}$ to the $j$th qubit of $\sfS_1$,
  for each $j$ in ${\{1, \ldots, l(\abs{x})\}}$,
  where $b_{j,1}$~and~$b_{j,2}$ denote the ${(2j-1)}$st and ${(2j)}$th bits of $b$, respectively.
  Finally, prepare his/her private quantum register~$\sfV$ as in $\Pi$,
  and simulate the final verification procedure of the verifier in $\Pi$
  with ${(\sfM_1, \sfS_1, \sfV)}$.
\end{step}

For the completeness, suppose that $x$ is in $A_\yes$.
Then there exists a prover~$P$ who makes the verifier accept 
with probability~${p \geq 2/3}$ in the original $\mathrm{qcq}$-QAM system~$\Pi$.
Without loss of generality, one can assume that
$P$ has quantum registers~$\sfM_1$,~$\sfM_2$,~and~$\sfP$ at the beginning of the protocol,
where $\sfP$ is the private quantum register of $P$.
Let $\rho_x$ be the quantum state $P$ prepares in ${(\sfM_1, \sfM_2, \sfP)}$
at the first turn in $\Pi$,
and let $P_{x,r}$ be the unitary transformation $P$ applies to ${(\sfM_2, \sfP)}$
at the third turn in $\Pi$ when $P$ has received $r$.

In the $\mathrm{qcc}$-QAM system~$\widetilde{\Pi}$,
let the prover~$\widetilde{P}$ behave as follows:
On input~$x$, $\widetilde{P}$ prepares quantum registers~$\sfS_1$~and~$\sfS_2$,
each of ${l(\abs{x})}$~qubits, 
in addition to $\sfM_1$,~$\sfM_2$,~and~$\sfP$.
$\widetilde{P}$ generates $\rho_x$ in ${(\sfM_1,\sfM_2,\sfP)}$,
and also generates $\ket{\Phi^+}^{\tensor l(\abs{x})}$ in ${(\sfS_1, \sfS_2)}$
so that the $j$th qubit of $\sfS_1$ and that of $\sfS_2$ form an EPR pair,
for every $j$ in ${\{1, \ldots, l(\abs{x})\}}$.
$\widetilde{P}$ then sends $\sfM_1$ and $\sfS_1$ to the verifier at the first turn.
Upon receiving~$r$, $\widetilde{P}$ first applies $P_{x,r}$ to ${(\sfM_2, \sfP)}$ as $P$ would do,
and then measures the $j$th pair of qubits in ${(\sfS_2, \sfM_2)}$ in the Bell basis
to obtain a two-bit outcome~$b_j$, 
for every $j$ in ${\{1, \ldots, l(\abs{x})\}}$,
where $b_j$ equals $00$, $01$, $10$, and $11$
if the measurement results in $\ket{\Phi^+}$, $\ket{\Phi^-}$, $\ket{\Psi^+}$, and $\ket{\Psi^-}$, respectively.
$\widetilde{P}$ sends a binary string~$b$ of length~${2 l(\abs{x})}$
such that the pair of the ${(2j - 1)}$st and ${(2j)}$th bits is exactly $b_j$,
for every $j$ in ${\{1, \ldots, l(\abs{x})\}}$.
This makes the quantum state in $\sfM_2$ be teleported to that in $\sfS_1$,
as the application of the Pauli operators in the final step of the verifier in $\widetilde{\Pi}$ 
correctly removes the phase and/or bit errors if exist.
Hence the verifier accepts in $\widetilde{\Pi}$
with exactly the same probability~$p$ as in $\Pi$,
which ensures the completeness of $\widetilde{\Pi}$.

For the soundness, suppose that $x$ is in $A_\no$.
Let $\widetilde{P}'$ be any prover in $\widetilde{\Pi}$.
Without loss of generality, one can assume that
$\widetilde{P}'$ has quantum registers~$\sfM_1$,~$\sfS_1$,~and~$\widetilde{\sfP}'$ at the beginning of the protocol,
where $\widetilde{\sfP}'$ is the private quantum register of $\widetilde{P}'$.
Let $\rho$ be the quantum state $\widetilde{P}'$ prepares in ${(\sfM_1, \sfS_1, \widetilde{\sfP}')}$
at the first turn in $\widetilde{\Pi}$,
and let ${\{ \widetilde{P}_{x,r}^b \}_{b \in \Binary^{2 l(\abs{x})}}}$ be
the ${2 l(\abs{x})}$-bit outcome measurement
that $\widetilde{P}'$ performs over $\widetilde{\sfP}'$
at the third turn in $\widetilde{\Pi}$,
when $\widetilde{P}'$ has received $r$.

In the $\mathrm{qcq}$-QAM system~$\Pi$,
let the prover~$P'$ behave as follows:
On input~$x$, $P'$ prepares quantum registers~$\sfM_1$,~$\sfS_1$,~and~$\widetilde{\sfP}'$,
and generates $\rho$ in ${(\sfM_1, \sfS_1, \widetilde{\sfP}')}$,
as $\widetilde{P}'$ would do in $\widetilde{\Pi}$.
$P'$ then sends $\sfM_1$ to the verifier at the first turn in $\Pi$.
Upon receiving~$r$, $P'$ first performs
the ${2 l(\abs{x})}$-bit outcome measurement~${\{ \widetilde{P}_{x,r}^b \}_{b \in \Binary^{2 l(\abs{x})}}}$
over $\widetilde{\sfP}'$ to obtain a ${2 l(\abs{x})}$-bit outcome~$b'$.
Let $b'_{j,1}$~and~$b'_{j,2}$ be the ${(2j-1)}$st and ${(2j)}$th bits of $b'$, respectively,
for each $j$ in ${\{1, \ldots, l(\abs{x})\}}$.
$P'$ then applies ${X^{b'_{j,1}}Z^{b'_{j,2}}}$ to the $j$th qubit of $\sfS_1$
for each $j$ in ${\{1, \ldots, l(\abs{x})\}}$,
as the verifier in $\widetilde{\Pi}$ would do,
and sends $\sfS_1$ to the verifier as the quantum register~$\sfM_2$. 
From the construction,
it is obvious that this $P'$ can make the verifier accept in $\Pi$
with exactly the same probability as $\widetilde{P}'$ could in $\widetilde{\Pi}$,
which must be at most $1/3$ from the soundness property of $\Pi$,
and the soundness of $\widetilde{\Pi}$ follows.
\end{proof}

With Lemma~\ref{Lemma: qcc-QAM = QMAM} in hand,
Theorem~\ref{Theorem: complete classifications} is proved as follows.

\begin{proof}[Proof of Theorem~\ref{Theorem: complete classifications}]
For the item~(i),
first notice that the inclusion~${\mathrm{q}\cqQAM \subseteq \mathrm{qc}\ccQAM}$
can be proved in a manner very similar to the proof of Lemma~\ref{Lemma: qcc-QAM = QMAM},
with not the honest prover but the verifier preparing the EPR pairs.
As ${\mathrm{q}\cqQAM = \QMAM = \QIP = \PSPACE}$,
together with Lemma~\ref{Lemma: qcc-QAM = QMAM},
this implies that ${\mathrm{qc}\ccQAM = \mathrm{q}\ccQAM = \PSPACE}$.
As adding more turns to ${\rmq \:\! t_3 t_2 t_1}$-QAM and ${\rmq \:\! t_2 t_1}$-QAM proof systems
does not diminish the verification power for any $t_1$, $t_2$, and $t_3$ in ${\{\rmq, \rmc\}}$,
this establishes the claim in the item~(i).

For the item~(ii),
again with a similar argument to the proof of Lemma~\ref{Lemma: qcc-QAM = QMAM},
it holds that, for any constant~${m \geq 2}$,
${\mathrm{c \cdots c}\qqQAM(m) \subseteq \mathrm{c \cdots c}\qcQAM(m)}$,
and thus, combined with Theorem~\ref{Theorem: c...cqq-QAM(m) = qq-QAM} and Proposition~\ref{Proposition: c->q},
the claim follows.

For the item~(iii),
it suffices to show that for any constant~${m \geq 3}$,
${\GQAM{\mathrm{c \cdots cq}}(m) \subseteq \GQAM{\mathrm{c \cdots cq}}(m-1)}$.
The case with ${m \geq 5}$ is proved with an argument similar to that in the proof of Lemma~\ref{Lemma: c...cqq-QAM(m) is in ccqq-QAM},
since the first three (resp. four) turns of the $m$-turn $\mathrm{c \cdots cq}$-QAM proof systems
are classical when $m$ is odd (resp. when $m$ is even).
In the case where ${m = 3}$,
one modifies the construction of $\Pi'$ in the proof of Lemma~\ref{Lemma: c...cqq-QAM(m) is in ccqq-QAM}
so that the message from the prover at the second turn (corresponding to Step~2 of $\Pi'$) is quantum,
consisting of two parts:
the $\sfY$ part and $\sfZ$ part,
each corresponding to $y$ and $z$ in Step~2 of $\Pi'$.
In order to force the content in the $\sfY$ part to be classical,
the verifier simply measures each qubit in the $\sfY$ part in the computational basis.
The analysis in the proof of Lemma~\ref{Lemma: c...cqq-QAM(m) is in ccqq-QAM} then works with the case where ${m = 3}$,
i.e., the case where a $\mathrm{ccq}$-QAM system is simulated by a $\mathrm{cq}$-QAM system.
The case where ${m=4}$ can then be proved using this result with ${m=3}$,
with the same argument as in the proof of Lemma~\ref{Lemma: c...cqq-QAM(m) is in ccqq-QAM}.

Finally, for the item~(iv),
it suffices to show that
the inclusion~${\GQAM{\mathrm{c \cdots c}}(m) \subseteq \GQAM{\mathrm{c \cdots c}}(m-1)}$
holds for any constant~${m \geq 3}$,
which easily follows from an argument similar to that in the proof of Lemma~\ref{Lemma: c...cqq-QAM(m) is in ccqq-QAM},
since all the messages are classical.
\end{proof}


\section{Conclusion}
\label{Section: conclusion}

This paper has introduced the generalized model of quantum Arthur-Merlin proof systems
to provide some new insights on the power of two-turn quantum interactive proofs.
A number of open problems are listed below concerning generalized quantum Arthur-Merlin proof systems and other related topics:
\begin{itemize}
\item
  Is there any natural problem, other than $\CITM$ and $\MaxOutQEA$, in $\qqQAM$
  that is not known to be in the standard $\QAM$?
  Or is $\qqQAM$ equal to $\QAM$?
\item
  Currently no upper-bound is known for $\qqQAM$ other than ${\QIP(2)}$.
  Can a better upper-bound be placed on $\qqQAM$?
  Is $\qqQAM$ contained in ${\BP \cdot \PP}$?
\item
  Does ${\qqQAM = \qqQAM_1}$?
  In other words, is perfect completeness achievable in $\qqQAM$?
  Similar questions remain open even for ${\QIP(2)}$ and $\QAM$.
\item
  What happens if some of the messages are restricted to be classical
  in the standard quantum interactive proof systems?
  Does a collapse theorem similar to the $\qqQAM$ case hold even with the ${\QIP(2)}$ case?
  More precisely, is the power of $m$-turn quantum interactive proof systems
  equivalent to ${\QIP(2)}$ for any constant~${m \geq 2}$,
  when the first ${(m-2)}$~turns are restricted to exchange only classical messages?
\end{itemize}
For the last question above,
note that one might be able to show a similar collapse theorem even with ${\QIP(2)}$
when the verifier \emph{cannot} use quantum operations at all during the first ${(m-2)}$~turns
(by extending the argument due to Goldwasser~and~Sipser~\cite{GolSip89ACR}
to replace the classical interaction of the first ${(m-2)}$~turns
by an $m$-turn classical public-coin interaction,
and then applying arguments similar to those in this paper,
using some appropriate ${\QIP(2)}$-complete problem like the \problemfont{Close Image} problem~\cite{Wat02QIP, HayMilWil12arXiv}).
A more difficult, but more natural and interesting case is
where the verifier can use quantum operations to generate his/her classical messages
even for the first ${(m-2)}$~turns,
to which the Goldwasser-Sipser technique does not seem to apply any longer.
A collapse theorem for such a case, if provable,
would be very helpful when trying to put more problems in ${\QIP(2)}$
and more generally investigating the properties of two-turn quantum interactive proof systems.


\subsection*{Acknowledgements}

The authors are grateful to Francesco Buscemi and Richard Cleve for very useful discussions.
This work is supported by
the Grant-in-Aid for Scientific Research~(A)~No.~24240001 of the Japan Society for the Promotion of Science
and the Grant-in-Aid for Scientific Research on Innovative Areas~No.~24106009 of
the Ministry of Education, Culture, Sports, Science and Technology in Japan.
HN also acknowledges support from
the Grant-in-Aids for Scientific Research~(A)~Nos.~21244007~and~23246071~and~(C)~No.~25330012 of the Japan Society for the Promotion of Science.


\newcommand{\etalchar}[1]{$^{#1}$}
\providecommand{\noopsort}[1]{}


\appendix

\section{$\boldsymbol{\qqQAM}$-Completeness of \MaxOutQEA}

This section gives a rigorous proof of Theorem~\ref{Theorem: qq-QAM-completeness of MaxOutQEA}
that states that the $\MaxOutQEA$ problem is $\qqQAM$-complete.
First, it is proved that $\MaxOutQEA$ is in $\qqQAM$.

\begin{lemma}
$\MaxOutQEA$ is in $\qqQAM$.
\label{Lemma: MaxOutQEA is in qq-QAM}
\end{lemma}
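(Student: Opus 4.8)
The plan is to give a polynomial-time many-one reduction from $\MaxOutQEA$ to ${\CITM(a,b)}$ for suitable constants $a,b$ with ${(1-a)^2 > 1 - b^2}$, which suffices by Lemma~\ref{Lemma: CITM is in qq-QAM}. The reduction follows the strategy used to place the \problemfont{Quantum Entropy Approximation} problem in $\NIQSZK$~\cite{BenSchTaS10ToC, ChaCioKerVad08TCC}, the key observation being that the quantifier structure of $\MaxOutQEA$ already matches that of $\CITM$: a yes-instance of $\MaxOutQEA$ asserts the existence of an input $\rho$ with ${S(\Phi(\rho)) \geq t+1}$, exactly the ${\exists\rho}$ of a $\CITM$ yes-instance, while a no-instance asserts ${S(\Phi(\rho)) \leq t-1}$ for all $\rho$, matching the ${\forall\rho}$ of a $\CITM$ no-instance. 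Hence the input to $\Phi$ can be left as the free input qubits of the circuit produced by the reduction, and no separate device for simulating the maximization is needed.

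Concretely, given a circuit for $\Phi$ and the integer $t$, I would first pass to $k$ parallel copies, so that the reduction circuit applies ${\Phi^{\tensor k}}$ to its input register. Taking copies serves two purposes. On the one hand, maximum output von~Neumann entropy is additive, ${S_\textmax(\Phi^{\tensor k}) = k\,S_\textmax(\Phi)}$; the inequality ${S_\textmax(\Phi^{\tensor k}) \leq k\,S_\textmax(\Phi)}$ follows from subadditivity of the von~Neumann entropy and is crucial, since it bounds the output entropy of ${\Phi^{\tensor k}}$ even when a cheating prover feeds an input entangled across the $k$ copies, so that in the no-case every input yields output entropy at most ${k(t-1)}$. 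On the other hand, the i.i.d.\ structure in the (honest) yes-case, where the optimal product input ${\rho^{\tensor k}}$ gives output entropy ${\geq k(t+1)}$, enables a \emph{flattening} step: borrowing the technique of~\cite{BenSchTaS10ToC, ChaCioKerVad08TCC}, I would append a fixed, efficiently computable compression (a typical-subspace/quantum-expander procedure) mapping the output onto ${m \approx k t}$ qubits, designed so that the resulting state has entropy essentially ${\min\{k\,S(\Phi(\rho)),\,m\}}$ up to an ${o(k)}$ slack.

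The conversion from entropy to trace distance is then supplied by Lemma~\ref{Lemma: trace distance and entropy}. Choosing $m$ strictly between ${k(t-1)}$ and ${k(t+1)}$ with room to spare, the yes-case output has entropy within $O(1)$ of $m$ and is therefore $o(1)$-close to ${(I/2)^{\tensor m}}$, while the no-case output has entropy at most ${m - \Omega(k)}$ and so, by the first inequality of Lemma~\ref{Lemma: trace distance and entropy} (together with Lemma~\ref{Lemma: NC00Thm11.10} to control the entropy contributed by the mixing over the free input), lies trace distance ${\Omega(1/t)}$ from ${(I/2)^{\tensor m}}$. This already yields a $\CITM$ instance with an inverse-polynomial gap; I would amplify it to constants satisfying ${(1-a)^2 > 1 - b^2}$ by running polynomially many independent copies of the whole construction in parallel and invoking Lemma~\ref{Lemma: polarization of minimum output trace distance}, whose lower bound drives the no-case distance to a constant while its upper bound keeps the (exponentially small) yes-case distance negligible.

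The main obstacle will be the flattening step, and in particular making the no-case bound rigorous. Unlike the classical setting, where a deterministic hash can only decrease Shannon entropy, a quantum compression channel can \emph{increase} von~Neumann entropy, so the no-case cannot be argued by mere entropy monotonicity; the compression must be engineered (via typical-subspace projection or quantum expanders as in~\cite{BenSchTaS10ToC, ChaCioKerVad08TCC}) so that low-entropy inputs are provably sent far from the totally mixed state, and one must verify that the ${o(k)}$ fluctuations of the asymptotic equipartition property are dominated by the constant separation ${(t+1)-(t-1)}$ after scaling by $k$. Once the flattening is established with explicit parameters, membership ${\MaxOutQEA \in \qqQAM}$ follows from Lemma~\ref{Lemma: CITM is in qq-QAM}, and combined with the hardness implicit in the proof of Lemma~\ref{Lemma: cqq-QAM is in qq-QAM} it gives Theorem~\ref{Theorem: qq-QAM-completeness of MaxOutQEA}.
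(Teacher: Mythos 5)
Your overall skeleton matches the paper's proof: a polynomial-time many-one reduction from $\MaxOutQEA$ to $\CITM$ (so that membership follows from Lemma~\ref{Lemma: CITM is in qq-QAM}), built from $k$ parallel copies of $\Phi$, the subadditivity bound ${S_\textmax(\Phi^{\tensor k}) \leq k\,S_\textmax(\Phi)}$ to handle inputs entangled across copies, the entropy-versus-distance conversion of Lemma~\ref{Lemma: trace distance and entropy} together with Lemma~\ref{Lemma: NC00Thm11.10}, and a final polarization via Lemma~\ref{Lemma: polarization of minimum output trace distance}. All of these ingredients are correct and are exactly the ones the paper uses.

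The gap is in the flattening step, which is the heart of the argument and which you both leave unresolved and specify in a form that would fail. You propose a fixed \emph{compression} channel mapping the ${k m_\out}$ output qubits onto ${m \approx kt}$ qubits so that the compressed state has entropy essentially ${\min\{k\,S(\Phi(\rho)), m\}}$. No channel with output dimension smaller than input dimension can have this property: realizing the compression as an isometry followed by a partial trace, there always exist pure (zero-entropy) input states that become maximally entangled across the kept/discarded cut, hence are mapped \emph{exactly} to ${(I/2)^{\tensor m}}$; nothing prevents such states from arising as outputs of ${\Phi^{\tensor k}}$ in a no-instance, so the no-case cannot be salvaged by any entropy bookkeeping, as you partly concede. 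The references you point to do not supply the missing device in the form you describe: the quantum extractor of Ben-Aroya, Schwartz, and Ta-Shma used in the paper is a mixture of ${2^d}$ unitaries, ${E = 2^{-d}\sum_i U_i \cdot \conjugate{U_i}}$, acting on the \emph{full} space of ${k m_\out}$ qubits with no compression, and the target is the totally mixed state on all ${k m_\out}$ qubits. Its seed length~$d$ is taken large (roughly ${k(m_\out - t)}$) so that in the yes case the extractor property, applied to the i.i.d.\ input ${\sigma^{\tensor k}}$, pushes the output to within ${O(\varepsilon)}$ of totally mixed; and the mixture-of-unitaries structure is precisely what rescues the no case, since Lemma~\ref{Lemma: NC00Thm11.10} gives ${S(E(\xi)) \leq S(\xi) + d \leq k(t-1) + d}$, an additive ${\Omega(k)}$ below ${k m_\out}$, whence Lemma~\ref{Lemma: trace distance and entropy} yields an inverse-polynomial trace distance from totally mixed, which polarization then boosts to the constants of ${\CITM(1/8, 1/2)}$. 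So the correct instantiation differs qualitatively from your sketch (full-dimensional target space, large seed, yes-case via the extractor guarantee rather than via ``entropy close to $m$''), and without it the reduction is not established.
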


\begin{proof}
We present a reduction from the $\MaxOutQEA$ problem to the $\CITM$ problem
(with some appropriate parameters),
by modifying the reduction from the $\QEA$ problem
to the \problemfont{Quantum State Closeness to Totally Mixed} ($\QSCTM$) problem
presented in Ref.~\cite{ChaCioKerVad07CePrint},
which relies on the analysis found in Section 5.3 of Ref.~\cite{BenSchTaS10ToC}.

Let ${x = (Q, t)}$ be an instance of $\MaxOutQEA$,
where $Q$ is a description of a quantum circuit that specifies a quantum channel~$\Phi$,
and $t$ is a positive integer.
For simplicity, in what follows, we identify the description~$Q$ and the quantum circuit it induces.
Suppose that $Q$ acts on $m_\all$~qubits with $m_\inp$~specified input qubits and $m_\out$~specified output qubits.
Let $q$ and $\varepsilon$ be two functions
that appear in Eqs.~(5.1) and (5.2) of Ref.~\cite{BenSchTaS10ToC}\footnotemark
to be specified later.
\footnotetext{
  Rigorously speaking, $q$ in the present case corresponds to $\frac{q}{2}$ in the left-hand sides of Eqs.~(5.1)~and~(5.2) in Ref.~\cite{BenSchTaS10ToC}.
  This is due to the fact
  that the $\MaxOutQEA$ problem in this paper is defined
  using threshold values~${t + 1}$~and~${t - 1}$,
  while the $\QEA$ problem in Ref.~\cite{BenSchTaS10ToC} is defined
  using threshold values~${t + \frac{1}{2}}$~and~${t - \frac{1}{2}}$.
}
We consider the quantum circuit~$Q^{\tensor q(\abs{x})}$ that runs ${q(\abs{x})}$~copies of $Q$ in parallel,
and the ${(qt, d, \varepsilon)}$-quantum extractor~$E$ on ${q(\abs{x}) m_\out}$~qubits given in Ref.~\cite[Section~5.3]{BenSchTaS10ToC},
which is written as ${E = \frac{1}{2^d} \sum_{i=1}^{2^d} E_i}$,
where ${E_i(\rho) = U_i \rho \conjugate{U_i}}$ for unitary operators~$U_i$.
Let $R$ be the quantum circuit 
that runs $Q^{\tensor q(\abs{x})}$
and then applies $E$ to the output state of ${q(\abs{x}) m_\out}$~qubits.
By following the analysis in Ref.~\cite{BenSchTaS10ToC},
one can show that
\begin{itemize}
\item[(i)]
  if ${x = (Q,t)}$ is a yes-instance of $\MaxOutQEA$,
  there exists a quantum state~$\rho$ of ${q(\abs{x}) m_\inp}$~qubits
  such that 
  ${D(R(\rho), (I/2)^{\tensor q(\abs{x}) m_\out}) \leq \frac{3}{2} \varepsilon}$,
  and
\item[(ii)]
  if ${x = (Q,t)}$ is a no-instance of $\MaxOutQEA$,
  for any quantum state~$\rho$ of ${q(\abs{x}) m_\inp}$~qubits,
  ${D(R(\rho), (I/2)^{\tensor q(\abs{x}) m_\out}) \geq \frac{1}{4q(\abs{x}) m_\out}}$.
\end{itemize}
In fact, the item~(i) follows from exactly the same analysis as in Ref.~\cite{BenSchTaS10ToC},
by taking ${\rho = \sigma^{\tensor q(\abs{x})}}$
with $\sigma$ being a quantum state of $m_\inp$~qubits
such that ${S(Q(\sigma)) \geq t + 1}$
(the condition~${S_\textmax(\Phi) \geq t + 1}$ ensures the existence of such a state~$\sigma$).

To prove the item~(ii),
first notice that,
if ${x = (Q,t)}$ is a no-instance of $\MaxOutQEA$,
it holds that
${S(Q(\sigma)) \leq S_\textmax(\Phi) \leq t - 1}$
for any quantum state~$\sigma$ of $m_\inp$~qubits.
Take an arbitrary quantum state~$\rho$ of ${q(\abs{x}) m_\inp}$~qubits.
By Lemma~\ref{Lemma: NC00Thm11.10},
it holds that
\[
S(R(\rho))
=
S \biggl(\frac{1}{2^d} \sum_{i=1}^{2^d} U_i Q^{\tensor q(\abs{x})}(\rho) \conjugate{U_i} \biggr)
\leq
S(Q^{\tensor q(\abs{x})}(\rho)) + d.
\]
For each $i$ in ${\{1, \ldots, q(\abs{x})\}}$,
let $\sfR_i$ be the output quantum register of the $i$th copy of $Q$
(hence, the whole output state~${Q^{\tensor q(\abs{x})}(\rho)}$ of $Q^{\tensor q(\abs{x})}$ is in ${(\sfR_1, \ldots, \sfR_{q(\abs{x})})}$),
and let $\sigma_{\sfR_i}$ be the reduced state of ${Q^{\tensor q(\abs{x})}(\rho)}$ of $m_\out$~qubits
obtained by tracing out all the qubits except those in $\sfR_i$.
By the subadditivity of von~Neumann entropy, it follows that
\[
S(Q^{\tensor q(\abs{x})}(\rho))
\leq
\sum_{i=1}^{q(\abs{x})} S(\sigma_{\sfR_i})
\leq
\sum_{i=1}^{q(\abs{x})} \max_\sigma S(Q(\sigma))
\leq
(t - 1) q(\abs{x}),
\]
which implies that
\[
S(R(\rho)) \leq (t-1)q(\abs{x}) + d.
\]  
Now the item~(ii) follows from exactly the same analysis as in Ref.~\cite{BenSchTaS10ToC}.

To complete the reduction,
similarly to Ref.~\cite{ChaCioKerVad07CePrint},
one takes ${\varepsilon = 1/2^k}$ for a polynomially bounded fuction~$k$
such that ${k(n) \geq n}$ for any $n$ in $\Nonnegative$ and ${k(n) \in O(n)}$,
and a polynomially bounded function~$q$
such that ${q(n) \in \Theta(n^4)}$
so that Eqs.~(5.1)~and~(5.2) are satisfied in Ref.~\cite{BenSchTaS10ToC}.
Consider the quantum circuit~$R'$ that runs ${r(\abs{x})}$~copies of $R$ in parallel
for a polynomially bounded function~$r$
such that
${r(n) = \bigceil{ \frac{2 \ln(1/2)}{\ln(1 - (1/(2q(n) m_\out)^2))}} \leq 2 (2 q(n) m_\out)^2}$
for all $n$ in $\Nonnegative$. 
Assuming that ${r(\abs{x}) \leq 2^{\abs{x}}/12}$
(otherwise $\abs{x}$ is at most some fixed constant,
as $r$ is a polynomially bounded function,
and thus,
it can be checked trivially whether ${x = (Q,t)}$ is a yes-instance or a no-instance),
it follows from Lemma~\ref{Lemma: polarization of minimum output trace distance} that
\begin{itemize}
\item[(i)]
  if ${x = (Q,t)}$ is a yes-instance,
  there exists a quantum state~$\sigma$ of ${r(\abs{x}) q(\abs{x}) m_\inp}$~qubits 
  such that ${D(R'(\sigma), (I/2)^{\tensor r(\abs{x}) q(\abs{x}) m_\out}) \leq 1/8}$,
  and
\item[(ii)]
  if ${x = (Q,t)}$ is a no-instance,
  for any quantum state~$\sigma$ of ${r(\abs{x}) q(\abs{x}) m_\inp}$~qubits,
  ${D(R'(\sigma), (I/2)^{\tensor r(\abs{x}) q(\abs{x}) m_\out}) \geq 1/2}$.
\end{itemize}
Hence, $\MaxOutQEA$ is reducible to ${\CITM(1/8, 1/2)}$ in polynomial time,
and thus in $\qqQAM$ by Lemma~\ref{Lemma: CITM is in qq-QAM}.
\end{proof}

Second, it is proved that the $\MaxOutQEA$ problem is $\qqQAM$-hard.

\begin{lemma}
$\MaxOutQEA$ is hard for $\qqQAM$ under polynomial-time many-one reduction.
\label{Lemma: qq-QAM-hardness of MaxOutQEA}
\end{lemma}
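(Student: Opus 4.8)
The plan is to recycle the reduction to $\CITM$ from the proof of Lemma~\ref{Lemma: qq-QAM-hardness of CITM} and then read off a gap in the maximum output entropy by means of Lemma~\ref{Lemma: trace distance and entropy}, which quantitatively links the trace distance of a state to the totally mixed state with its von~Neumann entropy (this is exactly the content hinted at in the remark following Lemma~\ref{Lemma: cqq-QAM is in qq-QAM}). Concretely, given a problem~$A = (A_\yes, A_\no)$ in $\qqQAM$, I would first invoke Lemma~\ref{Lemma: qq-QAM-hardness of CITM} together with the amplification of Lemma~\ref{Lemma: amplification} to obtain, for each input~$x$, a polynomial-time constructible quantum circuit~$Q_x$ with $N$~output qubits such that some input drives the output exponentially close to $(I/2)^{\tensor N}$ in the yes case, while every input keeps the output far from $(I/2)^{\tensor N}$ in the no case. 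Viewing $Q_x$ as the quantum channel~$\Phi$ demanded by $\MaxOutQEA$, the quantity $S_\textmax(\Phi) = \max_\rho S(\Phi(\rho))$ should be large exactly when a low-trace-distance input exists, and uniformly small otherwise.

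To make the two sides separate by the required additive gap of~$2$, I would sharpen the trace-distance gap to be exponentially strong before translating it into entropy. Starting from a $\CITM(2^{-p'}, 1/20)$ instance and placing $k = \Theta(n)$ copies of the circuit in parallel, Lemma~\ref{Lemma: polarization of minimum output trace distance} (applied with $\Psi$ the constant channel outputting the totally mixed state, so that $D_\textmin(\Phi, \Psi) = \min_\rho D(\Phi(\rho), (I/2)^{\tensor N})$) yields a channel whose minimum output trace distance to the totally mixed state is at most $2^{-p}$ in the yes case and at least $1 - 2^{-p}$ in the no case, for a polynomially large $p$ of our choosing. Now the first inequality of Lemma~\ref{Lemma: trace distance and entropy} gives, in the yes case, an input achieving $S(\Phi(\rho)) \geq (1 - 2^{-p} - 2^{-N})N$, while the second inequality, combined with $1 - D \leq 2^{-p}$ and Lemma~\ref{Lemma: NC00Thm11.10}, forces $S(\Phi(\rho)) \leq N - p + 2$ for every input in the no case, hence $S_\textmax(\Phi) \leq N - p + 2$. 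Setting the threshold to $t = N - 2$, the yes case gives $S_\textmax(\Phi) \geq N - 1 = t + 1$ (using $(2^{-p} + 2^{-N})N \leq 1$ for large $N$) and the no case gives $S_\textmax(\Phi) \leq N - p + 2 \leq N - 3 = t - 1$ once $p \geq 5$. Thus $(\Phi, t)$ is a yes- (resp.\ no-) instance of $\MaxOutQEA$ exactly as desired, and the whole construction runs in polynomial time.

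The main obstacle is the entropy translation itself: being far from the totally mixed state in trace distance does \emph{not} by itself imply low entropy, since a state at constant trace distance (say $1/20$, as produced by the bare $\CITM$ reduction) can still have entropy within $o(1)$ of $N$. It is precisely the polarization step---amplifying the no-case distance all the way to $1 - 2^{-p}$---that makes the logarithmic term $\log \frac{1}{1 - D}$ in the second inequality of Lemma~\ref{Lemma: trace distance and entropy} grow to $p$, thereby opening the constant-sized entropy window needed for the $t \pm 1$ thresholds. One routine technicality, handled as elsewhere in the paper, is that the bound $(2^{-p} + 2^{-N})N \leq 1$ and the requirement that the polynomial parallel-repetition count stay below the relevant exponential threshold only hold once $N$ (equivalently $\abs{x}$) exceeds a fixed constant; for the finitely many smaller inputs the membership of $x$ in $A$ can be decided directly and hardwired into the reduction.
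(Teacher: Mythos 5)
Your proposal is correct and follows essentially the same route as the paper's proof: amplify the $\mathrm{qq}$-QAM system, apply the $\CITM$-hardness construction of Lemma~\ref{Lemma: qq-QAM-hardness of CITM}, polarize the trace-distance gap by parallel repetition via Lemma~\ref{Lemma: polarization of minimum output trace distance}, and convert the resulting exponentially strong gap into an entropy gap with Lemma~\ref{Lemma: trace distance and entropy}, setting the threshold a small constant below the number of output qubits (the paper uses $t = q'_\out - 3$ rather than your $t = N - 2$, an immaterial difference), with the same treatment of the finitely many small inputs. The only blemish is your citation of Lemma~\ref{Lemma: NC00Thm11.10} in the no case, which is unnecessary---the second inequality of Lemma~\ref{Lemma: trace distance and entropy} applies directly to each output state, no mixture decomposition being involved---but this does not affect correctness.
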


\begin{proof}
The claim is proved by modifying a part of the proof of Lemma~\ref{Lemma: cqq-QAM is in qq-QAM}.

Let ${A = (A_\yes, A_\no)}$ be a problem in $\qqQAM$,
and let $\function{p}{\Nonnegative}{\Natural}$ be a non-decreasing polynomially bounded function to be specified later.
First notice that $A$ has a $\mathrm{qq}$-QAM proof system
with completeness~${1 - 2^{-p}}$ and soundness~$2^{-p}$. 
Starting from this $\mathrm{qq}$-QAM proof system,
the proof of Lemma~\ref{Lemma: qq-QAM-hardness of CITM}
implies the existence of a polynomial-time algorithm
that, given $x$, computes a description of a quantum circuit~$Q_x$
of ${q_\inp(\abs{x})}$~input qubits and ${q_\out(\abs{x})}$~output qubits
with the following properties:
\begin{itemize}
\item[(i)]
  if ${x}$ is in $A_\yes$,
  there exists a quantum state~$\rho$ consisting of ${q_\inp(\abs{x})}$~qubits such that
  ${D(Q_x(\rho), (I/2)^{\tensor q_\out(\abs{x})}) \leq 2^{-p(\abs{x}) - 1} < 2^{-p(\abs{x})}}$,
  and
\item[(ii)]
  if ${x}$ is in $A_\no$,
  for any quantum state~$\rho$ consisting of ${q_\inp(\abs{x})}$~qubits,
  ${D(Q_x(\rho), (I/2)^{\tensor q_\out(\abs{x})}) > 1/20}$.
\end{itemize}
Let $q$ be another non-decreasing polynomially bounded function satisfying
${q(n) \geq \max \{6, n\}}$ for any $n$ in $\Nonnegative$.
Considering the quantum circuit~$Q'_x$ that runs ${k(\abs{x})}$~copies of $Q_x$ in parallel
for a polynomially bounded function~${k = \bigceil{\frac{2 \ln 2}{\ln (400/399)} q}}$
and taking ${p = q + \ceil{\log k}}$,
it follows from Lemma~\ref{Lemma: polarization of minimum output trace distance} 
that
\begin{itemize}
\item[(i)]
  if $x$ is in $A_\yes$,
  there exists a quantum state~$\rho'$ consisting of ${q'_\inp(\abs{x})}$~qubits
  such that ${D(Q'_x(\rho'), (I/2)^{\tensor q'_\out(\abs{x})}) < 2^{-q(\abs{x})}}$,
  and
\item[(ii)]
  if $x$ is in $A_\no$,
  for any quantum state~$\rho'$ consisting of ${q'_\inp(\abs{x})}$~qubits,
  ${D(Q'_x(\rho'), (I/2)^{\tensor q'_\out(\abs{x})}) > 1 - 2^{-q(\abs{x})}}$,
\end{itemize}
where ${q'_\inp = k q_\inp}$ and ${q'_\out = k q_\out}$.

In what follows, it is assumed that the inequality~${q'_\out(\abs{x}) \leq 2^{q(\abs{x})}}$ holds
(otherwise $\abs{x}$ is at most some fixed constant,
as $q'_\out$ is a polynomially bounded function and ${q(\abs{x}) \geq \abs{x}}$,
and thus,
it can be checked trivially whether $x$ is in $A_\yes$ or in $A_\no$). 
By the second inequality of Lemma~\ref{Lemma: trace distance and entropy},
the circuit~$Q'_x$ satisfies the following properties:
\begin{itemize}
\item[(i)]
  if $x$ is in $A_\yes$,
  there exists a quantum state~$\sigma$ consisting of ${q'_\inp(\abs{x})}$~qubits
  such that 
  ${S(Q'_x(\sigma)) > (1 - 2^{-q(\abs{x})}) q'_\out(\abs{x}) -1 \geq q'_\out(\abs{x}) - 2}$,
and
\item[(ii)]
  if $x$ is in $A_\no$,  
  for any quantum state~$\sigma$ consisting of ${q'_\inp(\abs{x})}$~qubits,
  ${S(Q'_x(\sigma)) < q'_\out(\abs{x}) - q(\abs{x}) + 2 \leq q'_\out(\abs{x}) - 4}$.
\end{itemize}
Thus, 
${(Q'_x, q'_\out(\abs{x})-3)}$ is a yes-instance of $\MaxOutQEA$
if $x$ is in $A_\yes$,
while
it is a no-instance of $\MaxOutQEA$  
if $x$ is in $A_\no$.
This implies that any problem~$A$ in $\qqQAM$ is reducible to $\MaxOutQEA$ in polynomial time,
and the claim follows. 
\end{proof}

Now Theorem~\ref{Theorem: qq-QAM-completeness of MaxOutQEA} follows from Lemmas~\ref{Lemma: MaxOutQEA is in qq-QAM}~and~\ref{Lemma: qq-QAM-hardness of MaxOutQEA}.


\begin{thebibliography}{BASTS10}

\bibitem[ABD{\etalchar{+}}09]{AarBeiDruFefSho09ToC}
Scott Aaronson, Salman Beigi, Andrew Drucker, Bill Fefferman, and Peter Shor.
\newblock The power of unentanglement.
\newblock {\em Theory of Computing}, 5:1--42~(Article~1), 2009.

\bibitem[Aha03]{Aha03arXiv}
Dorit Aharonov.
\newblock A simple proof that {Toffoli} and {Hadamard} are quantum universal.
\newblock arXiv.org e-Print archive,
  \href{http://arxiv.org/abs/quant-ph/0301040}{arXiv:quant-ph/0301040}, 2003.

\bibitem[AKN98]{AhaKitNis98STOC}
Dorit Aharonov, Alexei Kitaev, and Noam Nisan.
\newblock Quantum circuits with mixed states.
\newblock In {\em Proceedings of the Thirtieth Annual ACM Symposium on Theory
  of Computing}, pages 20--30, 1998.

\bibitem[Bab85]{Bab85STOC}
L\'aszl\'o Babai.
\newblock Trading group theory for randomness.
\newblock In {\em Proceedings of the Seventeenth Annual ACM Symposium on Theory
  of Computing}, pages 421--429, 1985.

\bibitem[BASTS10]{BenSchTaS10ToC}
Avraham Ben-Aroya, Oded Schwartz, and Amnon Ta-Shma.
\newblock Quantum expanders: Motivation and constructions.
\newblock {\em Theory of Computing}, 6:47--79~(Article~3), 2010.

\bibitem[BM88]{BabMor88JCSS}
L\'aszl\'o Babai and Shlomo Moran.
\newblock {Arthur-Merlin} games: A randomized proof system, and a hierarchy of
  complexity classes.
\newblock {\em Journal of Computer and System Sciences}, 36(2):254--276, 1988.

\bibitem[BSW11]{BeiShoWat11ToC}
Salman Beigi, Peter Shor, and John Watrous.
\newblock Quantum interactive proofs with short messages.
\newblock {\em Theory of Computing}, 7:101--117~(Article~7), 2011.

\bibitem[Cai12]{Cai12LectureNotes}
{Jin-Yi} Cai.
\newblock Lectures in computational complexity, August 2012.
\newblock Available at \href{http://www.cs.wisc.edu/~jyc/710/book.pdf}{http://www.cs.wisc.edu/{\string~}jyc/710/book.pdf.}

\bibitem[CCKV07]{ChaCioKerVad07CePrint}
Andr\'e Chailloux, Dragos~Florin Ciocan, Iordanis Kerenidis, and Salil Vadhan.
\newblock Interactive and noninteractive zero knowledge are equivalent in the
  help model.
\newblock Cryptology ePrint Archive, Report~2007/467, 2007.

\bibitem[CCKV08]{ChaCioKerVad08TCC}
Andr\'e Chailloux, Dragos~Florin Ciocan, Iordanis Kerenidis, and Salil Vadhan.
\newblock Interactive and noninteractive zero knowledge are equivalent in the
  help model.
\newblock In {\em Theory of Cryptography, Fifth Theory of Cryptography
  Conference, TCC 2008}, volume 4948 of {\em Lecture Notes in Computer
  Science}, pages 501--534, 2008.

\bibitem[Dru12]{Dru12ECCC}
Andrew Drucker.
\newblock New limits to classical and quantum instance compression.
\newblock Electronic Colloquium on Computational Complexity, Report No.~112,
  2012.

\bibitem[FvdG99]{FucGra99IEEEIT}
Christopher~A. Fuchs and Jeroen van~de Graaf.
\newblock Cryptographic distinguishability measures for quantum-mechanical
  states.
\newblock {\em IEEE Transactions on Information Theory}, 45(4):1216--1227,
  1999.

\bibitem[GMR89]{GolMicRac89SIComp}
Shafi Goldwasser, Silvio Micali, and Charles Rackoff.
\newblock The knowledge complexity of interactive proof systems.
\newblock {\em SIAM Journal on Computing}, 18(1):186--208, 1989.

\bibitem[GS89]{GolSip89ACR}
Shafi Goldwasser and Michael Sipser.
\newblock Private coins versus public coins in interactive proof systems.
\newblock In Silvio Micali, editor, {\em Randomness and Computation}, volume~5
  of {\em Advances in Computing Research}, pages 73--90. JAI Press, 1989.

\bibitem[Gut09]{Gut09PhD}
Gustav Gutoski.
\newblock {\em Quantum Strategies and Local Operations}.
\newblock PhD thesis, David R. Cheriton School of Computer Science, University
  of Waterloo, 2009.
\newblock \href{http://arxiv.org/abs/1003.0038}{arXiv:1003.0038~[quant-ph]}.

\bibitem[HMW12]{HayMilWil12arXiv}
Patrick Hayden, Kevin Milner, and Mark~M. Wilde.
\newblock Two-message quantum interactive proofs and the quantum separability
  problem.
\newblock arXiv.org e-Print archive,
  \href{http://arxiv.org/abs/1211.6120}{arXiv:1211.6120~[quant-ph]}, 2012.

\bibitem[HMW13]{HayMilWil13CCC}
Patrick Hayden, Kevin Milner, and Mark~M. Wilde.
\newblock Two-message quantum interactive proofs and the quantum separability
  problem.
\newblock In {\em CCC 2013, 2013 {IEEE} Conference on Computational
  Complexity}, pages 156--167, 2013.

\bibitem[JJUW11]{JaiJiUpaWat11JACM}
Rahul Jain, Zhengfeng Ji, Sarvagya Upadhyay, and John Watrous.
\newblock {${\QIP = \PSPACE}$}.
\newblock {\em Journal of the ACM}, 58(6):Article~30, 2011.

\bibitem[JKNN12]{JorKobNagNis12QIC}
Stephen~P. Jordan, Hirotada Kobayashi, Daniel Nagaj, and Harumichi Nishimura.
\newblock Achieving perfect completeness in classical-witness quantum
  {Merlin-Arthur} proof systems.
\newblock {\em Quantum Information and Computation}, 12(5--6):0461--0471, 2012.

\bibitem[JUW09]{JaiUpaWat09FOCS}
Rahul Jain, Sarvagya Upadhyay, and John Watrous.
\newblock Two-message quantum interactive proofs are in {$\PSPACE$}.
\newblock In {\em 50th Annual Symposium on Foundations of Computer Science},
  pages 534--543, 2009.

\bibitem[KLGN13]{KobLeGNis13ITCS}
Hirotada Kobayashi, Fran\c{c}ois Le~Gall, and Harumichi Nishimura.
\newblock Stronger methods of making quantum interactive proofs perfectly
  complete.
\newblock In {\em ITCS '13, Proceedings of the 2013 {ACM} Conference on
  Innovations in Theoretical Computer Science}, pages 329--352, 2013.

\bibitem[KMY09]{KobMatYam09CJTCS}
Hirotada Kobayashi, Keiji Matsumoto, and Tomoyuki Yamakami.
\newblock Quantum {Merlin-Arthur} proof systems: Are multiple {Merlins} more
  helpful to {Arthur}?
\newblock {\em Chicago Journal of Theoretical Computer Science}, 2009:Article
  3, 2009.

\bibitem[Kob03]{Kob03ISAAC}
Hirotada Kobayashi.
\newblock Non-interactive quantum perfect and statistical zero-knowledge.
\newblock In {\em Algorithms and Computation, 14th International Symposium,
  ISAAC 2003}, volume 2906 of {\em Lecture Notes in Computer Science}, pages
  178--188, 2003.

\bibitem[KSV02]{KitSheVya02Book}
Alexei~{\relax{Yu}}. Kitaev, Alexander~H. Shen, and Mikhail~N. Vyalyi.
\newblock {\em Classical and Quantum Computation}, volume~47 of {\em Graduate
  Studies in Mathematics}.
\newblock American Mathematical Society, 2002.

\bibitem[KW00]{KitWat00STOC}
Alexei Kitaev and John Watrous.
\newblock Parallelization, amplification, and exponential time simulation of
  quantum interactive proof systems.
\newblock In {\em Proceedings of the Thirty-Second Annual ACM Symposium on
  Theory of Computing}, pages 608--617, 2000.

\bibitem[Lau83]{Lau83IPL}
Clemens Lautemann.
\newblock {$\BPP$} and the polynomial hierarchy.
\newblock {\em Information Processing Letters}, 17(4):215--217, 1983.

\bibitem[LFKN92]{LunForKarNis92JACM}
Carsten Lund, Lance Fortnow, Howard Karloff, and Noam Nisan.
\newblock Algebraic methods for interactive proof systems.
\newblock {\em Journal of the ACM}, 39(4):859--868, 1992.

\bibitem[MW05]{MarWat05CC}
Chris Marriott and John Watrous.
\newblock Quantum {Arthur-Merlin} games.
\newblock {\em Computational Complexity}, 14(2):122--152, 2005.

\bibitem[NC00]{NieChu00Book}
Michael~A. Nielsen and Isaac~L. Chuang.
\newblock {\em Quantum Computation and Quantum Information}.
\newblock Cambridge University Press, 2000.

\bibitem[Pap85]{Pap85JCSS}
Christos~H. Papadimitriou.
\newblock Games against nature.
\newblock {\em Journal of Computer and System Sciences}, 31(2):288--301, 1985.

\bibitem[Ros09]{Ros09PhD}
William Rosgen.
\newblock {\em Computational Distinguishability of Quantum Channels}.
\newblock PhD thesis, David R. Cheriton School of Computer Science, University
  of Waterloo, 2009.
\newblock \href{http://arxiv.org/abs/0909.3930}{arXiv:0909.3930~[quant-ph]}.

\bibitem[RW05]{RosWat05CCC}
Bill Rosgen and John Watrous.
\newblock On the hardness of distinguishing mixed-state quantum computations.
\newblock In {\em Twentieth Annual {IEEE} Conference on Computational
  Complexity}, pages 344--354, 2005.

\bibitem[Sha92]{Sha92JACM}
Adi Shamir.
\newblock {$\IP = \PSPACE$}.
\newblock {\em Journal of the ACM}, 39(4):869--877, 1992.

\bibitem[She92]{She92JACM}
Alexander Shen.
\newblock {$\IP = \PSPACE$}: Simplified proof.
\newblock {\em Journal of the ACM}, 39(4):878--880, 1992.

\bibitem[Shi02]{Shi02QIC}
Yaoyun Shi.
\newblock Both {Toffoli} and {Controlled-NOT} need little help to do universal
  quantum computing.
\newblock {\em Quantum Information and Computation}, 3(1):084--092, 2002.

\bibitem[Sho96]{Sho96FOCS}
Peter~W. Shor.
\newblock Fault-tolerant quantum computation.
\newblock In {\em 37th Annual Symposium on Foundations of Computer Science},
  pages 56--65, 1996.

\bibitem[Uhl76]{Uhl76RoMP}
Armin Uhlmann.
\newblock The ``transition probability'' in the state space of a$^*$-algebra.
\newblock {\em Reports on Mathematical Physics}, 9(2):273--279, 1976.

\bibitem[Vaj70]{Vaj70IEEEIT}
Igor Vajda.
\newblock Note on discrimination information and variation.
\newblock {\em IEEE Transactions on Information Theory}, 16(6):771--773, 1970.

\bibitem[Wat02a]{Wat02QIP}
John Watrous.
\newblock Capturing quantum complexity classes via quantum channels.
\newblock Talk at the 6th Workshop on Quantum Information Processing, December
  2002.

\bibitem[Wat02b]{Wat02FOCS}
John Watrous.
\newblock Limits on the power of quantum statistical zero-knowledge.
\newblock In {\em 43rd Annual Symposium on Foundations of Computer Science},
  pages 459--468, 2002.

\bibitem[Wat03]{Wat03TCS}
John Watrous.
\newblock {$\PSPACE$} has constant-round quantum interactive proof systems.
\newblock {\em Theoretical Computer Science}, 292(3):575--588, 2003.

\bibitem[Wat09]{Wat09SIComp}
John Watrous.
\newblock Zero-knowledge against quantum attacks.
\newblock {\em SIAM Journal on Computing}, 39(1):25--58, 2009.

\bibitem[Weh06]{Weh06STACS}
Stephanie Wehner.
\newblock Entanglement in interactive proof systems with binary answers.
\newblock In {\em STACS 2006, 23rd Annual Symposium on Theoretical Aspects of
  Computer Science}, volume 3884 of {\em Lecture Notes in Computer Science},
  pages 162--171, 2006.

\end{thebibliography}
\end{document}